\newcommand{\diff}{\text{d}}
\newcommand{\norm}[1]{\left\lVert#1\right\rVert}
\newcommand\restr[2]{{% we make the whole thing an ordinary symbol
			\left.\kern-\nulldelimiterspace % automatically resize the bar with \right
			#1 % the function
			\vphantom{\big|} % pretend it's a little taller at normal size
			\right|_{#2} % this is the delimiter
			}}
\newtheorem{theorem}{Theorem}
\newtheorem{lemma}[theorem]{Lemma}
\newtheorem{corollary}[theorem]{Corollary}
 \title{Asymptotic analysis of spin-foams with time-like faces in a new parametrization}
\author{Jos\'{e} Diogo Sim\~{a}o\footnote{j.d.simao@uni-jena.de},   \hspace{2mm} Sebastian Steinhaus\footnote{sebastian.steinhaus@uni-jena.de}
\vspace{3mm}
\\ Theoretisch-Physikalisches Institut \\ Friedrich-Schiller-Universität Jena \\ Max-Wien-Platz 1 \\ 07743 Jena, Germany}
\date{\today}
\numberwithin{equation}{section}
\begin{document}

\maketitle

\begin{abstract}

In this article we study the Conrady-Hnybida extension of the Lorentzian Engle-Pereira-Rovelli-Livine spin-foam model, which admits time-like cells rather than just space-like ones. Our focus is on the asymptotic analysis of the model's vertex amplitude. 

We propose a new parametrization for states associated to time-like 3-cells, from which we derive a closed-form expression for their amplitudes. This allows us to revisit the conditions under which critical points of the amplitudes occur, and we find Regge-like geometrical critical points in agreement with the literature. However, we find also evidence for non-geometrical points which are not dynamically suppressed without further assumptions; the model then does not strictly asymptote to the Regge action, contrary to what one would expect.   
We moreover prove Minkowski and rigidity theorems for Minkowskian polyhedra, extending the asymptotic analysis to non-simplicial spin-foams.
\\

 \textit{Key words:} spin-foam quantum gravity, Lorentzian signature, asymptotic analysis
\end{abstract}

\tableofcontents

\section{Introduction}

The existence of a causal structure of space-time, i.e. a degree to which physical objects at different locations may or may not affect one-another, is one of the most profound insights of special and general relativity.  While Riemmanian theories and their quantum counterparts may be useful in exploring some of gravity's features, it is clear that a proper quantum theory of gravity has to unavoidably include a form of causality. Path integral approaches, representing one possible route to quantizing such a theory, should thus be defined with a Lorentzian signature. %for path-integral approaches, this may amount to considering a Lorentzian signature. 
A heuristic argument can be made: if space-time geometry is expected to fluctuate at small scales, and if its quanta (whatever they may be) encode to some level a notion of causality, then  the  causal structure of space-time itself should also fluctuate.   

Spin-foam models are a path integral approach to quantum gravity closely related to Loop Quantum Gravity (LQG) \cite{thomasbook}. While several different models have been proposed over the last decades \cite{Barrett2000,Perez2001,Han:2010pz}, all such models share a common structure: spin-foams are defined on a 2-dimensional cell complex, to which one assigns group-theoretic data (usually unitary irreducible representations of some symmetry group, as well as intertwining maps between them), and from which an amplitude is constructed \cite{Perez:2012wv,carlobook}. While the 2-complex is usually induced from an underlying triangulation of the space-time manifold, generalisations to more generic 2-complexes exist \cite{Kaminski:2009fm}.

Among the theories that have been defined in the literature, the Lorentzian Engle-Pereira-Rovelli-Livine (EPRL) simplicial model \cite{Engle:2007wy}\footnote{See also \cite{Engle:2007uq,Engle:2007qf,Livine:2007vk} for the definition of the model for Riemannian signature.} is one of the most well-studied ones. Its particularity essentially resides in which group-theoretic data is assigned to the 2-complex dual to a triangulation% by 4-simplices, and in which manner one does so
. The initial symmetry group of the theory is taken to be the  $\text{SL}(2,\mathbb{C})$ double cover of the Lorentz group; it then turns out that the derivation of the model requires picking a normal vector to every tetrahedron, and the EPRL prescription is to take these vectors to be time-like. Such a choice of ``time-like gauge'' reduces the symmetry group to an $\text{SU}(2)$ subgroup, thereby obtaining structures similar to the spin-networks of LQG (in this context called projective spin-networks) \cite{Livine:2002ak,Dupuis:2010jn}. The restriction to $\text{SU}(2)$ has moreover shown to have the added benefit of allowing simplifications of the amplitude formula by making use of its well-understood representation theory \cite{Speziale:2016axj}, which opened the door to numerical analyses of the model and its amplitudes in detail \cite{Dona:2019dkf,Dona:2020tvv}.

The necessity of specifying a ``causal character'' for each tetrahedron in the EPRL approach (by selecting a normal vector) poses an interesting possibility. Rather than insisting on assuming all tetrahedra to be space-like, one could rather allow for every possible assignment of causal character. Indeed, choosing a time-like normal for each tetrahedron is a strong restriction: the entire 4-dimensional geometry encoded in the spin-foam is then constituted entirely by space-like building blocks. It is at first sight not clear why more general triangulations of Lorentzian space-times should not appear in the full gravitational path integral. The freedom to independently assign causal characters to tetrahedra and triangles instead suggests that the model could accommodate different causal relations at the quantum level, a possibility that would be welcomed in relating spin-foams to theories such as causal dynamical triangulations \cite{Ambjorn:2012jv,Loll:2019rdj} and causal set theory \cite{Surya:2019ndm}.

In \cite{Conrady2010,Conrady2010a} Conrady and Hnybida (CH) extended the Lorentzian EPRL model to allow for time-like tetrahedra with space- and time-like triangles. In the CH extension, the normal vector is allowed to be space-like, in which case the symmetry group of a tetrahedron is reduced to $\text{SU}(1,1)$ (the double cover of the Lorentz group in Minkowski 3-dimensional space). This prescription considerably complicates the model, since, in addition to the fact that $\text{SU}(1,1)$ is non-compact, its representation theory is more intricate. The causal characters of the triangles in a time-like tetrahedron (i.e. whether a given triangle is space- or time-like) then dictate which of the unitary irreducible representations of $\text{SU}(1,1)$ are assigned to the dual complex.  All possible types of interfaces are allowed: a space-like triangle can now be shared between two space-like tetrahedra, one space-like and one time-like one or two time-like ones. 

Several years after the inception of the Lorentzian EPRL-CH model, the asymptotic expansion of its vertex amplitudes was studied, first for the case of space-like triangles \cite{Kaminski2018a} and then for the case of time-like ones \cite{Liu2019}. The asymptotic expansion has shown to be an ideal tool for understanding the dominant contributions to the path integral when all representation labels are large and the model is expected to behave semi-classically. To this end, the amplitude is expressed in a coherent state basis \cite{Livine:2007vk} as a multidimensional and highly oscillatory integral, which is approximated using a generalised stationary phase approximation. Generically, for different spin-foam models, it was found that the models exhibit critical and stationary points corresponding to piece-wise linear geometries \cite{Barrett:1998gs,Baez:2002rx,Conrady:2008mk,Barrett2009a} weighted by the Regge action \cite{Regge:1961px}. The same was observed particularly for the Lorentzian EPRL model \cite{Barrett2010,Barrett2011,Dona:2020yao}. Such Regge-like critical points were also found for CH extension for both space-like \cite{Kaminski2018a} and time-like triangles \cite{Liu2019}. We note in passing that these results play important role for the definition of the so-called effective spin-foam models \cite{Asante:2020qpa,Asante:2020iwm,Asante:2021zzh}, which use the semi-classical vertex amplitude as a starting point to efficiently explore the dynamics of large triangulations.

\vspace{2ex}

Our work revisits the asymptotics derived in \cite{Kaminski2018a,Liu2019}, our main interest being the analysis of time-like triangles. The coherent state prescription for such triangles proposed in \cite{Conrady2010} leads to an intricate expression for representation matrix elements that cannot be straightforwardly simplified. In \cite{Liu2019} these were asymptotically expanded for large representations in an involved calculation, which allowed the authors to continue the study of the vertex amplitude. Here we use a different parametrization in terms of generalised eigenstates with complex eigenvalues, from which we obtain exact closed formulae for the coherent states, and ultimately for the asymptotic action, removing the need for such an approximation. We confirm most of the results of \cite{Liu2019}, in particular the existence of Regge-like critical points and the presence of branch cuts in the integrand. Unfortunately, we also find potentially many more critical points that do not correspond to the desired Regge-like geometries and which are not dynamically suppressed unless one makes further assumptions. We thus conclude that the model is not constrained enough, a claim which strongly resonates with 1) a conjecture in \cite{Kaminski2018a} that simplicity might fail for time-like triangles/polygons and 2) the observation by Conrady \cite{Conrady2010a} that the master constraint for such polygons is not classically equivalent to the simplicity constraints. We present furthermore a proof of Minkowski's theorem for 3-dimensional polyhedra in Minkowski space, as well as a result on the rigidity of such polyhedra, allowing us to extend previous results of \cite{Kaminski2018a,Liu2019} by generalising to spin-foams built from arbitrary polytopes rather than just simplices. Finally, we show it is possible to relax an essential assumption of \cite{Liu2019} demanding tetrahedra to contain both space- and time-like faces, which would otherwise complicate an eventual contact with other approaches to quantum gravity, e.g. causal dynamical triangulations.

The text is organised as follows: in section \ref{section:general_form} we briefly outline the general construction of the Lorentzian EPRL spin-foam model and introduce the Conrady-Hnybida extension. Section \ref{sec:asympotit_expansion} describes the setup of the asymptotic expansion as well as the derivation of the critical point equations for heterochronal and parachronal interfaces (terminology which we define below). Solutions to these equations are derived in section \ref{sec:crit_solutions} for both types of interfaces, both in vectorial and bivectorial form. In section \ref{sec:induced_geometry} the geometry induced by geometric (Regge-like) critical points is constructed. We conclude in section \ref{sec:Discussion} with a thorough discussion of the results and an outlook for future research. Technical details for a more structured understanding of the presented calculations are compiled in the appendix: appendix \ref{appendix:generalized} describes generalised eigenstates of the non-compact generators of $\text{SU}(1,1)$ and their resolution of the identity. Appendix \ref{appendix:principal_series} summarises the unitary irreducible representations of $\text{SL}(2,\mathbb{C})$ and appendix \ref{appendix:geometrygroups} briefly recalls the geometry encoded in $\text{SU}(2)$ and $\text{SU}(1,1)$. The extension of Minkowski's theorem for convex polyhedra to Lorentzian signature can be found in appendix \ref{appendix:geometry}.

\vspace{2ex}

We would like to remark that an earlier version of this paper contained the wrong claim that the so-called Cosine Problem \cite{Barrett2010, Kaminski2018a, Liu2019} would be absent for spin-foams containing both space- and time-like polyhedra. This claim has been redacted thanks to input from Hongguang Liu \cite{Hongguang_git}, and a discussion of the Cosine Problem in the CH framework has been included.

\section{The extended EPRL spin-foam model} 
\label{section:general_form}

The spin-foam formulation we consider is the Lorentzian EPRL model proposed in \cite{Engle2008}, together with a later extension by Conrady and Hnybida \cite{Conrady2010, Conrady2010a}. The foundational idea of the model is to construct a quantum theory for gravity by first defining a path-integral for a related but simpler theory, usually known as $BF$-theory in the literature \cite{Baez2000, Cattaneo1995}; this is a purely topological theory, defined over an $\text{SL}(2,\mathbb{C})$ principal bundle $P \rightarrow M$ with a local connection 1-form $A$ and action $S_{BF}[B,A]=\int_M  \text{Tr } B\wedge F(A)$, where $B$ is a section of some associated vector bundle and $F$ denotes the curvature with respect to the connection. Given that, at the level of the classical theories, the simplicity constraints impose $B=\star(e\wedge e)$ and bring $BF$ into the tetrad formulation of general relativity \cite{Baez2000, Freidel1999}, one may then proceed by prescribing a reasonable form of these constraints to be applied after quantisation of the theory. Allowing for a formal integration of the $B$ field, we may write the Lorentzian-signature path integral over a four-dimensional manifold $M$ with no boundary as
\begin{equation}
Z_{BF}(M)=\int\mathcal{D} A \; \delta(F[A])\,.
\end{equation}
A rigorous meaning for this expression can be obtained by passing from the base manifold to a given choice of a 4-dimensional cellular-decomposition $\Delta$ (see \cite{Oeckl_2005} for a thorough description of the mathematical details), inducing by Poincaré duality an associated 2-complex $\Delta^*$ on which the connection and curvature forms can be integrated\footnote{Polygons, polyhedra and polytopes are taken to be 2-, 3- and 4-dimensional, respectively.}
\begin{equation}
    \begin{gathered}
        \Delta \rightarrow \Delta^* \\
        4\text{-cell (polytope)} \mapsto 0\text{-cell (vertex)} \\
        3\text{-cell (polyhedron)} \mapsto 1\text{-cell (link)} \\
        2\text{-cell (polygon)} \mapsto 2\text{-cell (plaquette)}\,.
    \end{gathered}
\end{equation}
An amplitude can thus be formulated on this dual complex as
\begin{equation}
   Z_{BF}(\Delta^*) =\int_{\text{SL}(2,\mathbb{C})}\prod_{l\in \mathcal{L}} \diff g_l  \prod_{p\in \mathcal{P}} \delta \left(\prod^\rightarrow_{l \in \partial p } g_p\right)\,,
\end{equation}
where $\mathcal{L}$ denotes the subset of $\Delta^*$ of all links (1-dimensional cells) and $\mathcal{P}$ the subset of all plaquettes (2-dimensional cells)\footnote{We reserve the usual terminology of "edges" and "faces" for the cells of the initial decomposition $\Delta$.}. The arrow over the product emphasises that it is ordered. Thought of as a generalised function on the Lie group manifold, the Dirac delta may be expanded in terms of a trace of unitary irreducible representations of the special linear group. Further manipulation then takes us to a more familiar form for a spin-foam amplitude
\begin{equation}
\label{eq:pre_amp}
    %Z_{BF}(\Delta^*)=\SumInt_{\{\chi\} \rightarrow \mathcal{P}} \left[\prod_{p\in \mathcal{P}} (n^2+\rho^2)_p \right] \text{Tr}_{p\in \mathcal{P}} \left[\prod_{l \in \mathcal{L}} \left(\int \diff g_l \prod_{p\, |\, l \in  \partial p} D^{\chi}_p(g_l) \right) \right]\,,
    Z_{BF}(\Delta^*)=\sum_{\{n\} \rightarrow \mathcal{P}} \int_{\{\rho\}\rightarrow \mathcal{P}}\diff \rho  \left[\prod_{p\in \mathcal{P}} (n^2+\rho^2)_p \right] \text{Tr}_{p\in \mathcal{P}} \left[\prod_{l \in \mathcal{L}} \left(\int \diff g_l \prod_{p\, |\, l \in  \partial p} D^{\chi}_p(g_l) \right) \right]\,,
\end{equation}
such that $\chi=(n,\rho)$ labels the unitary and irreducible representations $\mathcal{D}_\chi$ of $\text{SL}(2,\mathbb{C})$ (see appendix \ref{appendix:principal_series}) and $\{\chi\} \rightarrow \mathcal{P}$ denotes an assignment of representations to every plaquette\footnote{Note that, while we use $(n^2+\rho^2)$ for the product over plaquettes in accordance with the original EPRL proposal \cite{Engle:2007wy}, other choices of face amplitude have been proposed in the literature. In \cite{Bianchi:2010fj} it has been argued that one should rather use the dimension of the representations of the relevant subgroup of $\text{SL}(2,\mathbb{C})$.}. The subscript in $\text{Tr}_{p\in \mathcal{P}}$ signifies that one carries out a trace for every product of representation functions $D^\chi(g)$ associated to the same plaquette $p$.

A partition function for gravity can now be derived by first discretizing and then imposing the aforementioned simplicity constraints. We shall not review the construction of the constraints, but simply state the results found in the literature \cite{Engle2008, Conrady2010}. It turns out that the constraints dictate a restriction of the unitary irreducible representations one sums over in \eqref{eq:pre_amp} to particular subsets, depending on the causal character the polyhedra and polygons in the cellular decomposition are assumed to have:
\begin{enumerate}
\item If a polygon in $\Delta$ associated to a plaquette $p$ is time-like, then we restrict to $n_p=-\gamma \rho_p$;
\begin{itemize}
    \item Any two polyhedra associated to links $l,l'$ sharing the polygon are necessarily time-like, and both $D^{\chi}_p(g_l)$ and $D^{\chi}_p(g_{l'})$ are expanded in the continuous series of the pseudo-basis of $\mathcal{D}_\chi$ obtained from eigenstates of a non-compact generator of the $\text{SU}(1,1)$ subgroup (see appendix \ref{appendix:generalized}). We then have $\rho_p=-2 s_p$, and we term these interfaces \textit{parachronal}.
\end{itemize}
\item If a polygon in $\Delta$ associated to a plaquette $p$ is instead space-like, we restrict to $\rho_p=\gamma n_p$;
\begin{itemize}
    \item If the polygon is shared between two space-like polyhedra associated to links $l,l'$, both $D^{\chi}_p(g_l)$ and $D^{\chi}_p(g_{l'})$ are expanded in the $\text{SU}(2)$ canonical basis of $\mathcal{D}_\chi$. We then have $n_p=2j_p$, and we call these interfaces \textit{achronal}.
    \item If the polygon is shared between two time-like polyhedra associated to links $l,l'$, both $D^{\chi}_p(g_l)$ and $D^{\chi}_p(g_{l'})$ are expanded in the discrete series of the $\text{SU}(1,1)$ pseudo-basis of $\mathcal{D}_\chi$. We then have $n_p=2k_p$, and we call these interfaces \textit{orthochronal}.
    \item If the polygon is shared between a space-like polyhedron associated to $l$, and a time-like one associated to $l'$, then $D^{\chi}_p(g_l)$ is expanded in the $\text{SU}(2)$ canonical basis, while $D^{\chi}_p(g_{l'})$ is expanded in the discrete series of the $\text{SU}(1,1)$ pseudo-basis of $\mathcal{D}_\chi$. We still have $n_p=2k_p=2j_p$, and we call these interfaces \textit{heterochronal}.
\end{itemize}
\end{enumerate}
The proportionality factor $\gamma$ above is commonly called the Immirzi parameter. The spins $j$, $k$ and $s$ label irreducible unitary representations of the $\text{SU}(2)$ and $\text{SU}(1,1)$ subgroups in the representation spaces $\mathcal{D}^j$, $\mathcal{D}^{k,\alpha}$ and $\mathcal{C}^s_\epsilon$ respectively; we refer the reader once more to appendices \ref{appendix:generalized} and \ref{appendix:principal_series} for a review of the necessary representation theory, as well as for the respective notation we shall use throughout this paper.

It is helpful, for clarity of exposition, to consider the above expression for the simpler case when the cellular decomposition of $M$ is a triangulation by 4-simplices. A single 4-simplex has the $\Delta^*$ combinatorics
\begin{equation}
     \includegraphics[valign=c,scale=0.6]{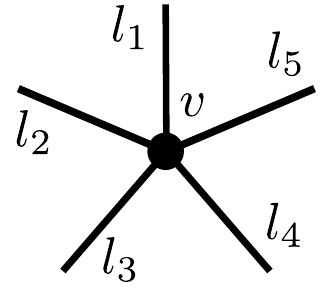}\,,
\end{equation}
there being five links $l_{1,...,5}$ associated to the five boundary tetrahedra in a 4-simplex. In this circumstance the proposed amplitude for gravity reads
\begin{equation}
\label{eq:BF_amp}
    %Z_{\text{grav.}}(\Delta^*_{\text{triang.}})={\SumInt_{\{\chi\} \rightarrow \mathcal{P}}}{\big|_{\text{simpl.}}} \left[\prod_{p\in \mathcal{P}} (n^2+\rho^2)_p \right] \left[\prod_{v \in \mathcal{V}} \includegraphics[valign=c,scale=0.5]{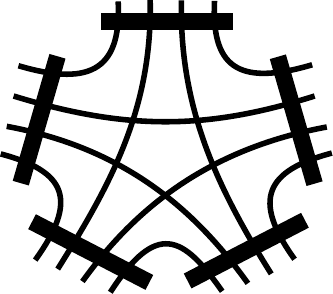} \right] \,,
    Z_{\text{grav.}}(\Delta^*_{\text{triang.}})={\left(\sum_{\{n\} \rightarrow \mathcal{P}} \int_{\{\rho\}\rightarrow \mathcal{P}}\diff \rho\right)}{\bigg|_{\text{simpl.}}} \left[\prod_{p\in \mathcal{P}} (n^2+\rho^2)_p \right] \left[\prod_{v \in \mathcal{V}} \includegraphics[valign=c,scale=0.5]{vertex.pdf} \right] \,,
\end{equation}
where the summation is now restricted to representations that fulfil the simplicity constraints. We use a standard graphical notation: lines (we have omitted their orientations) represent products of representation functions, while boxes represent group integrations. The product over the set of vertices $\mathcal{V}\subset \Delta^*$ is meant as a tilling, following the combinatorics of the underlying triangulation, from where one gets traces as loops of lines. Notice that the structure of each diagram at a vertex of $\Delta^*$ follows the structure of a 4-simplex of $\Delta$, having 5 tetrahedra at the boundary associated to the group integrations, each of which having 4 triangles associated to the lines. Finally, the group integrations in the previous equation can explicitly be carried out, resulting in the usual amplitude formula
\begin{equation}
    Z_{\text{grav.}}(\Delta^*_{\text{triang.}})={\left(\sum_{\{n\} \rightarrow \mathcal{P}} \int_{\{\rho\}\rightarrow \mathcal{P}}\diff \rho\right)}{\bigg|_{\text{simpl.}}} \sum_{\{\iota\} \rightarrow \mathcal{L}}\left[\prod_{p\in \mathcal{P}} (n^2+\rho^2)_p \right] \left[\prod_{v \in \mathcal{V}} \includegraphics[valign=c,scale=0.6]{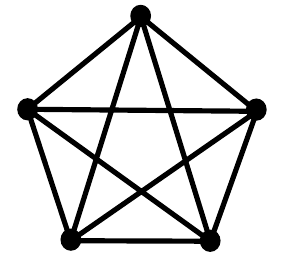} \right]\,,
    %Z_{\text{grav.}}(\Delta^*_{\text{triang.}})={\SumInt_{\{\chi\} \rightarrow \mathcal{P}}}{\big|_{\text{simpl.}}} \sum_{\{\iota\} \rightarrow \mathcal{L}}\left[\prod_{p\in \mathcal{P}} (n^2+\rho^2)_p \right] \left[\prod_{v \in \mathcal{V}} \includegraphics[valign=c,scale=0.6]{15j.pdf} \right]\,,
\end{equation}
where now there is a sum over assignments of intertwiner labels $\iota$ to every link $l\subset \mathcal{L}$. The pentagrammic symbol denotes a contraction of intertwiners at every vertex of the symbol, commonly called the vertex amplitude in the spin-foam literature.%i.e. the $10j$ symbol common in the spin-foam literature.

\subsection{Parameterization via coherent states}

An alternative parameterization of the model, which has shown to be useful in asymptotic analyses, can be given by transitioning from the orthonormal canonical and pseudo- bases to a coherent state basis. By coherent states we mean those which can be obtained by acting with a representation of a group element on a certain choice of reference state \cite{Perelomov_1986}. Letting again the spins $s$, $k$ and $j$ label representations of the continuous and discrete series of $\text{SU(1,1)}$ and representations of $\text{SU}(2)$, respectively, we have the completeness relations
\begin{equation}
\label{eq:comp_k1}
    \mathbbm{1}_{s,\epsilon}= \mu_\epsilon(s)\, \int_{\text{SU}(1,1)}\diff g \; D^{s,\epsilon}(g)\ket{j,\lambda,\sigma}\bra{j,\overline{\lambda},\sigma}D^{s,\epsilon\, \dagger}(g)\,,
\end{equation}
\begin{equation}
    \mathbbm{1}_{k,\alpha}=(2k-1)\, \int_{\text{SU}(1,1)}\diff g \; D^{k,\alpha}(g)\ket{k,m}\bra{k,m}D^{k,\alpha \, \dagger}(g)\,,
\end{equation}
\begin{equation}
    \mathbbm{1}_j= (2j+1)\,\int_{\text{SU}(2)}\diff g \; D^j(g)\ket{j,m}\bra{j,m}D^{j \, \dagger}(g)\,.
\end{equation}
When expressing the spin-foam amplitude in these bases, different choices of reference states will a priori lead to different parameterizations of the model, which nevertheless should agree at the level of the complete amplitude. Naturally, some reference states may be more useful than others in the analysis of the amplitude's asymptotic behaviour. A reasonable choice of reference states can be made by following the algorithm proposed in \cite{Conrady2010}: one looks for ``semi-classical'' states by minimising the variance of the relevant Casimir operator. For $\text{SU}(2)$ unitary irreducible representations and the discrete series of $\text{SU}(1,1)$, one finds in this manner the states
\begin{equation}
\ket{g, j}=D^j(g)\ket{j,j}\,, \quad \ket{g, \alpha k}=D^{k,\alpha}(g)\ket{k,\alpha k}\,,
\end{equation}
which we will use in the following. Regarding the continuous series, we shall propose a different choice than the one made in \cite{Conrady2010}. As we show in appendix \ref{appendix:generalized}, the states $\ket{j,\lambda,\sigma}$ admit complex eigenvalues $\lambda$. Using the notation $F_i F^i$ for the Casimir of $\text{SU}(1,1)$ obtained from the generators $F^i$, defined in the appendix just mentioned, we demand of the quantity
\begin{align}
    \braket{\Delta \,  F_i F^i}:&= \braket{F_i F^i}-\overline{\braket{F_i}}\braket{F^i} \nonumber \\
    &=-s^2-\frac{1}{4}+|\lambda|^2\,,
\end{align}
to vanish at the states of interest. Our choice is to take $\lambda=-\frac{i}{2}-s=ij$ as the preferred state among the circle of possible solutions. It will become apparent that this choice considerably simplifies the analysis\footnote{Compare this with the choice made in \cite{Conrady2010}, where the authors take $\lambda=\sqrt{s^2 +\frac{1}{4}}$.}. The coherent states appearing in the completeness relation \eqref{eq:comp_k1} will therefore be taken to be
\begin{equation}
    \ket{g,ij,1}=D^{s,\epsilon}(g)\ket{j,ij,1}\,, \quad \ket{g,\overline{ij},1}=D^{s,\epsilon}(g)\ket{j,\overline{ij},1}\,,
\end{equation}
where we picked $\sigma=1$. Note the structural similarity to the other coherent states above.

We are now able to rewrite the amplitude of equation \eqref{eq:pre_amp} in terms of the chosen coherent state bases. The trace portion of that equation can be expanded in factors associated to each vertex of $\Delta^*$, and this defines the usual vertex amplitude
\begin{equation}
\label{eq:vertex_amp}
A_v=\int_{\text{SL}(2,\mathbb{C})} \prod_{a=1}^n \diff g_a \delta(g_n) \prod_{b<a} \braket{D^\chi(g_a) \Psi_{ab}, D^\chi(g_b) \Psi_{ba}}\,,
\end{equation}
where we denote polyhedra in $\Delta$ by $a,b$ and their polygonal interfaces by an ordered pair $ab$. Here $n$ is the number of polyhedra contained in the polytope associated to $v\in\mathcal{V} \subset \Delta^*$, and $\Psi_{ab}$ denotes the coherent state associated to the polygon $ab$. The inner product $\braket{\cdot,\cdot}$ refers to the inner product in the $\text{SL}(2,\mathbb{C})$ representation Hilbert space $\mathcal{D}_\chi$. Finally, it is conventional to include a gauge-fixing term $\delta(g_n)$ as a regulator of the integral, necessary due to the non-compactness of the special linear group.

\subsection{Explicit expressions for coherent states}
\label{section:coherent}

We will now obtain an explicit form for the coherent states in the Hilbert space $\mathcal{D}_\chi$ of the principal series of unitary irreducible representations of $\text{SL}(2,\mathbb{C})$. As discussed above, these are constructed with respect to states induced from $\text{SU}(2)$ and $\text{SU}(1,1)$ representations (as in appendix \ref{appendix:principal_series}), with concrete choices of $\chi=(n,\rho)$ determined by the model \cite{Engle2008, Conrady2010, Conrady2010a}. So as to have a more unified notation, we define first two inner products in $\mathbb{C}^2$,
\begin{equation}
\braket{u, \, v}_\mathbbm{1} =u^\dagger \mathbbm{1}  \, v\,, \quad \quad \quad \braket{u, \, v}_{\sigma_3}=u^\dagger {\sigma_3} \,  v\,, \; {\sigma_3}=\text{diag}(1,-1)\,,
\end{equation}
which are invariant under the natural actions of $\text{SU}(2)$ and $\text{SU}(1,1)$, respectively. Moreover, we write the canonical basis of $\mathbb{C}^2$ as
\begin{equation}
\ket{+}:=\begin{pmatrix} 1 \\ 0 \end{pmatrix}\,, \quad \ket{-}:=\begin{pmatrix} 0 \\ 1 \end{pmatrix}\,,
\end{equation}
and further define the states
 \begin{equation}
\ket{l^+}:=\frac{1}{\sqrt{2}}\left(\ket{+}+\ket{-}\right)\,, \quad  \ket{l^-}:=\frac{1}{\sqrt{2}}\left(\ket{+}-\ket{-}\right)\,, \quad \ket{z}:=\begin{pmatrix} z_1 \\ z_2 \end{pmatrix}\,.
\end{equation}
We first construct the reference states in $\mathcal{D}_\chi$ with respect to which all coherent states will be defined.  Referring to equation \eqref{eq:func2} of the appendix and the well-known formula for $\text{SU}(2)$ matrix elements \cite{MartinDussaud2019}, the $m=j$ maximal-weight functions in $\mathcal{D}_\chi$ induced from this subgroup have the form
\begin{equation}
F_{j,j}^\chi(\mathbf{z})=\sqrt{2j+1} \sqrt{\frac{(2j)!}{(j+\frac{n}{2})!(j-\frac{n}{2})!}} \braket{z, \, z}_\mathbbm{1} ^{i \frac{\rho}{2}-1-j} \braket{z,\, +}_\mathbbm{1} ^{j+n/2} \braket{-,\, z}_\mathbbm{1} ^{j-n/2}\,.
\end{equation}
For the $\text{SU}(1,1)$ discrete series and the choice $m=\tau k$, equation \eqref{eq:func11} and the expression for matrix elements derived in \cite{Bargmann1947} give
\begin{align}
F_{k,\tau k}^{\chi, \tau}(\mathbf{z})=\frac{\sqrt{2k-1} \tau^{k-\frac{n}{2}}}{(\frac{n}{2}-k)!}&\sqrt{\frac{\Gamma(\frac{n}{2}-k+1)\Gamma(\frac{n}{2}+k)}{\Gamma(2k)}}\, \Theta\left(\tau\braket{z, \, z}_{\sigma_3}\right) \cdot \nonumber\\
& \cdot \left(\tau \braket{z, \, z}_{\sigma_3}\right)^{i\frac{\rho}{2}-1+k} (\tau\braket{\tau,z}_{\sigma_3})^{-\frac{n}{2}-k} (-\tau\braket{z,-\tau}_{\sigma_3})^{\frac{n}{2}-k}\,, \quad \frac{n}{2}\geq k\,,
\end{align}
\begin{align}
F_{k,\tau k}^{\chi, \tau}(\mathbf{z})=\frac{\sqrt{2k-1} (-\tau)^{k-\frac{n}{2}}}{(k-\frac{n}{2})!}&\sqrt{\frac{\Gamma(2k)}{\Gamma(\frac{n}{2}-k+1)\Gamma(\frac{n}{2}+k)}}\, \Theta\left(\tau\braket{z, \, z}_{\sigma_3}\right) \cdot \nonumber\\
& \cdot \left(\tau \braket{z, \, z}_{\sigma_3}\right)^{i\frac{\rho}{2}-1+\frac{n}{2}} (\tau\braket{\tau,z}_{\sigma_3})^{-\frac{n}{2}-k} (-\tau\braket{-\tau,z}_{\sigma_3})^{-\frac{n}{2}+k}\,, \quad \frac{n}{2}\leq k\,,
\end{align}
where $\ket{\tau}$ is to be understood as $\ket{+}, \, \ket{-}$ for $\tau=(+,-)$, respectively.
Regarding the continuous series, the discussion of sections \ref{appendix:generalized_2} and \ref{appendix:generalized_3} allows us to write, for both relevant states $\lambda=ij$ and $\lambda=\overline{ij}$,
\begin{equation}
     F^{\chi,\tau}_{s, \epsilon, ij,1}(\mathbf{z})=\sqrt{\mu_\epsilon(s)}\Theta(\tau\braket{z,z}_{\sigma_3})(\tau\braket{z,z}_{\sigma_3})^{i\rho/2-1-j}A_{\tau \frac{n}{2},1}^j \braket{z,l^-}_{\sigma_3}^{j+\frac{n}{2}}\braket{l^-,z}_{\sigma_3}^{j-\frac{n}{2}}\, ,
\end{equation}
\begin{align}
    F^{\chi,\tau}_{s, \epsilon, \overline{ij},1}(\mathbf{z})=\sqrt{\mu_\epsilon(s)}\Theta(\tau\braket{z,z}_{\sigma_3})&(\tau\braket{z,z}_{\sigma_3})^{i\rho/2-1-j} A_{\tau \frac{n}{2},0}^j  \braket{z,l^-}_{\sigma_3}^{j+\frac{n}{2}}\braket{l^-,z}_{\sigma_3}^{j-\frac{n}{2}} \cdot\nonumber \\
    & \cdot \frac{i}{2j+1}\left[\left(j+\frac{n}{2}\right)\frac{\braket{z,l^+}_{\sigma_3}}{\braket{z,l^-}_{\sigma_3}}-\left(j-\frac{n}{2}\right)\frac{\braket{l^+,z}_{\sigma_3}}{\braket{l^-,z}_{\sigma_3}} \right]\,.
\end{align}

Finally, to define the coherent states for each family of functions, it suffices to consider the group action \eqref{rep} on $\mathcal{D}_\chi$. A general coherent state will then be defined as $\Psi_{ab}=h_{ab} \triangleright F(\mathbf{z})$, where $h_{ab}$ is taken to be an element of the relevant subgroup of $\text{SL}(2,\mathbb{C})$. We also write $(n,\rho)$ in terms of the spins, as outlined in section \ref{section:general_form}. The states then read
\begin{equation}
\Psi^j_{ab}(\mathbf{z})=\sqrt{2j+1} \braket{z, \, z}_\mathbbm{1} ^{j(i\gamma-1)-1} \braket{z, \, +_{ab}}_\mathbbm{1} ^{2j}\,,
\end{equation}
\begin{equation}
\Psi^{k,\tau}_{ab}(\mathbf{z})=\sqrt{2k-1}\, \Theta\left(\tau\braket{z, \, z}_{\sigma_3}\right)\left(\tau \braket{z, \, z}_{\sigma_3}\right)^{k(i\gamma+1)-1}\left(\tau\braket{\tau_{ab}, \, z}_{\sigma_3} \right)^{-2k}\,,
\end{equation}
\begin{align}
\Psi^{s, \epsilon, \tau}_{ab}(\mathbf{z})=\sqrt{\mu_\epsilon(s)}\Theta\left(\tau\braket{z,z}_{\sigma_3}\right)\left(\tau\braket{z,z}_{\sigma_3}\right)^{-2is-\frac{1}{2}}A_{\tau \frac{n}{2},1}^j \braket{z,l^-_{ab}}_{\sigma_3}^{(i+\gamma)s-\frac{1}{2}}\braket{l^-_{ab},z}_{\sigma_3}^{(i-\gamma)s-\frac{1}{2}}\,,
\end{align}
\begin{align}
\tilde{\Psi}^{s, \epsilon, \tau}_{ab}(\mathbf{z})=\sqrt{\mu_\epsilon(s)}&\Theta\left(\tau\braket{z,z}_{\sigma_3}\right)\left(\tau\braket{z,z}_{\sigma_3}\right)^{-2is-\frac{1}{2}}A_{\tau \frac{n}{2},0}^j \braket{z,l^-_{ab}}_{\sigma_3}^{(i+\gamma)s-\frac{1}{2}}\braket{l^-_{ab},z}_{\sigma_3}^{(i-\gamma)s-\frac{1}{2}} \cdot \nonumber \\
& \cdot \frac{1}{2s}\left[\left((i+\gamma)s-\frac{1}{2}\right)\frac{\braket{z,l^+_{ab}}_{\sigma_3}}{\braket{z,l^-_{ab}}_{\sigma_3}}-\left((i-\gamma)s-\frac{1}{2}\right)\frac{\braket{l^+_{ab},z}_{\sigma_3}}{\braket{l^-_{ab},z}_{\sigma_3}} \right]\,,
\end{align}
where the general notation $\ket{\,\cdot_{ab}}$ stands for a rotated state $\ket{{h_{ab}^T}^{-1} \cdot\,}$. We differentiate between the $\text{SU}(1,1)$-continuous states at $\lambda=ij$ and its dual at $\lambda=\overline{ij}$ by the symbols $\Psi^{s, \epsilon, \tau}_{ab}$ and $\tilde{\Psi}^{s, \epsilon, \tau}_{ab}$, respectively. These are the coherent states with which we will obtain the spin-foam amplitudes.

\section{Asymptotic expansion and analysis} \label{sec:asympotit_expansion}

We shall follow the general procedure applied in \cite{Barrett2009a,Barrett2010,Kaminski2018a}. Consider the 2-complex $\Delta^*$ dual to a 4-dimensional polytope containing $n$ polyhedral 3-cells. Our goal will be to accommodate the expression \eqref{eq:vertex_amp} for the amplitude of $\Delta^*$ in the generically useful form for a stationary phase approximation
\begin{equation}
\int \diff x  \,f(x) e^{\Lambda S(x)} \,, \quad \Lambda \rightarrow \infty\,,
\end{equation}
where $\Lambda$ stands for a uniform scaling of the spins that characterise the boundary data.
To do so, we shall focus on the inner product in equation \eqref{eq:vertex_amp}. Defining $\Omega_{ab}=\overline{g_{a}\triangleright \Psi_{ab}}\cdot g_{b}\triangleright \Psi_{ba}$ for some special linear matrices $g_{a,b}$, we explicitly write the inner product as an integral,
\begin{equation}
\label{eq:asympt}
A_v=\int_{\text{SL}(2,\mathbb{C})}\prod_{a=1}^n \diff g_a \delta(g_n) \prod_{a<b} \int_{\mathbb{C}P} \omega(\mathbf{z}_{ab})  \Omega_{ab}(\mathbf{z}_{ab},g_a,g_b) \,,
\end{equation}
where $\omega(\mathbf{z}_{ab})$ is the integration measure defined in appendix \ref{appendix:principal_series}, and subsequently bring $\Omega_{ab}$ into the generic form of the exponential of an ``action", $\Omega_{ab}=f_{ab} \, e^{ \Lambda S_{ab}}$. We are then interested in the critical points of $S=\sum_{a<b} S^\nu_{ab}$, and these are characterised firstly by a reality condition,
\begin{equation}
 \Re\,  S_{ab}(\mathbf{z}_{ab},g_a,g_b)=0, \,\forall \,a,b\,,
\end{equation}
and secondly by the critical point conditions
\begin{equation}
\label{eq:critic}
\begin{gathered}
\begin{cases}
\delta_{\mathbf{z}_{ab}} S_{ab}(\mathbf{z}_{ab},g_a,g_b)=0, \,\forall \,a,b\,, \\
\sum_{b>a} \delta_{g_a} S_{ab}(\mathbf{z}_{ab},g_a,g_b)=0, \,\forall \,a\,.
\end{cases}
\end{gathered}
\end{equation}
Note that one need only vary the action with respect to the holomorphic spinor and group variables $\mathbf{z}_{ab}$ and $g_a$, since the action is constrained to be purely imaginary \cite{Kaminski2018a}.

Because we allow for a generic causal structure (excluding light-like polygons and polyhedra), we have to consider every possible type of interface between two polyhedra, as per the prescription in section \ref{section:general_form}.  The calculation for achronal interfaces has already been carried out in great detail in the literature \cite{Barrett2010}, so we will refrain from repeating it here. The case of an orthochronal polygon has also been thoroughly discussed in \cite{Kaminski2018a}. While the remaining two possibilities were analysed separately in \cite{Kaminski2018a} and \cite{Liu2019}, respectively, we found that revisiting these cases proved useful in clarifying some previous assumptions and further understanding the structure of the model. There are hence two cases which we will mostly focus on below: the case of a time-like interface between two time-like polyhedra, which we termed parachronal, and the one of a space-like interface between time- and space-like polyhedra, denoted heterochronal. We shall simply state the results for the remaining cases when necessary.

\subsection{Heterochronal interfaces}

We consider the product $\Omega_{ab}=\overline{g_{a}\triangleright \Psi^j_{ab}}\cdot g_{b}\triangleright \Psi^{k,\tau}_{ba}$ where $\Psi^j_{ab}$ is an $\text{SU}(2)$ state and $\Psi^{k,\tau}_{ba}$ is an $\text{SU}(1,1)$ one in the discrete series. The model requires that the spins agree, $j_{ab}=k_{ab}$, so we find
\begin{equation}
\Omega_{ab}= f_{ab}(\mathbf{z}_{ab})\left( p_{ab}(\mathbf{z}_{ab},g_a,g_b)\right)^{\Lambda j_{ab}}\,,
\end{equation}
having explicitly included the uniform scaling factor $\Lambda$, and where we define
\begin{equation}
\label{eq:mspace}
\begin{gathered}
f_{ab}(\mathbf{z}_{ab},g_a,g_b)=\frac{\sqrt{4j_{ab}^2-1}\; \Theta\left(\tau_{ba}\braket{g_b^T z_{ab}, \,g_b^T z_{ab}}_{\sigma_3}\right)}{\tau_{ba}\braket{g_a^T z_{ab}, \, g_a^Tz_{ab}}_\mathbbm{1} \braket{g_b^Tz_{ab}, \, g_b^Tz_{ab}}_{\sigma_3}}\,, \\
p_{ab}^{j_{ab}}(\mathbf{z}_{ab},g_a,g_b)=\left( \frac{ \tau_{ba} \braket{g_b^Tz_{ab}, \, g_b^Tz_{ab}}_{\sigma_3}}{\braket{g_a^T z_{ab}, \, g_a^Tz_{ab}}_\mathbbm{1} }\right)^{(i\gamma+1)j_{ab}} \left(\frac{\braket{ +_{ab} , \, g_a^Tz_{ab}}_\mathbbm{1} }{\tau_{ba}\braket{\tau_{ba}, \, g_b^T z_{ab}}_{\sigma_3} }  \right)^{2j_{ab}}\,.
\end{gathered}
\end{equation}
We can then write the action as the logarithm of $p_{ab}^{j_{ab}}$, i.e.,
\begin{equation}
\label{eq:actionspace}
S_{ab}=(i\gamma+1)j_{ab}\ln  \tau_{ba} \frac{\braket{g_b^Tz_{ab}, \, g_b^Tz_{ab}}_{\sigma_3}}{\braket{g_a^T z_{ab}, \, g_a^Tz_{ab}}_\mathbbm{1} } +2j_{ab}\ln\frac{\braket{+_{ab} , \, g_a^Tz_{ab}}_\mathbbm{1} }{\tau_{ba}\braket{\tau_{ba}, \, g_b^T z_{ab}}_{\sigma_3} } \,,
\end{equation}
and the reality condition $\Re S_{ab}=0$ implies
\begin{equation}
\label{realityspace}
\begin{gathered}
\ln \left| \frac{\braket{g_b^Tz_{ab}, \, g_b^Tz_{ab}}_{\sigma_3}}{\braket{g_a^T z_{ab}, \, g_a^Tz_{ab}}_\mathbbm{1} }\left(\frac{\braket{+_{ab} , \, g_a^Tz_{ab}}_\mathbbm{1} }{\tau_{ba}\braket{ \tau_{ba},g_b^T z_{ab}}_{\sigma_3} }\right)^2 \right|-\gamma\, \text{arg}\left(  \tau_{ba} \frac{\braket{g_b^Tz_{ab}, \, g_b^Tz_{ab}}_{\sigma_3}}{\braket{g_a^T z_{ab}, \, g_a^Tz_{ab}}_\mathbbm{1} }\right)=0  \\
\Leftrightarrow
\begin{cases}
\ket{g_a^T z_{ab}}=\lambda_{ab} \ket{+_{ab}} \\
\ket{g_b^T z_{ba}}=\lambda_{ba} \ket{\tau_{ba}}
\end{cases}, \; \lambda_{ab}, \lambda_{ba}\in \mathbb{C} \,.
\end{gathered}
\end{equation}
Generically, the variation of the action reads
\begin{align}
\delta S_{ab}=(i\gamma+1)j_{ab}&\left(\frac{\delta \braket{g_b^Tz_{ab}, \, g_b^Tz_{ab}}_{\sigma_3}}{\braket{g_b^Tz_{ab}, \, g_b^Tz_{ab}}_{\sigma_3}} - \frac{\delta \braket{g_a^T z_{ab}, \, g_a^Tz_{ab}}_\mathbbm{1} }{\braket{g_a^T z_{ab}, \, g_a^Tz_{ab}}_\mathbbm{1} } \right) + \nonumber \\
&+2j_{ab} \left(\frac{\delta \braket{+_{ab} , \, g_a^Tz_{ab}}_\mathbbm{1} }{\braket{ +_{ab} , \, g_a^Tz_{ab}}_\mathbbm{1} } - \frac{\delta \braket{\tau_{ba}, \, g_b^T z_{ab}}_{\sigma_3}}{\braket{\tau_{ba}, \, g_b^T z_{ab}}_{\sigma_3}}\right)\,,
\end{align}
and therefore we have for the variation with respect to $\mathbf{z}_{ab}$ that
\begin{equation}
\delta_{\mathbf{z}_{ab}} S_{ab}=\left( i\gamma-1 \right)\left(\frac{ \overline{\lambda}_{ba}}{|\lambda_{ba}|^2}\tau_{ba}\bra{\tau_{ba}} {\sigma_3} g_b^T  - \frac{ \overline{\lambda}_{ab}}{|\lambda_{ab}|^2} \bra{+_{ab}} g_a^T \right) \delta\mathbf{z}_{ab} \,,
\end{equation}
where we used the solution to the reality condition. Regarding the variation with respect to SL$(2,\mathbb{C})$ group elements, we will consider two cases: for $g_a$, we choose the expansion
\begin{equation}
g_a(\epsilon)=g _a e^{i \epsilon_l G^l}\,, \quad G=(\vec{J},\, i \vec{J}),\,
\end{equation}
using the $\vec{J}$ generators of $\text{SU}(2)$, while for $g_b$ we take
\begin{equation}
\label{eq:var11}
g_b(\epsilon)=g_be^{i \epsilon_l H^l}\,, \quad H=(\vec{F},\, i \vec{F})\,,
\end{equation}
using the $\vec{F}$ generators of $\text{SU}(1,1)$. The variations with respect to the subgroups will therefore read
\begin{equation}
\label{varg_sl}
\delta_{g_a} S_{ab}=i \epsilon_l(1-i\gamma)j_{ab}\braket{+_{ab} , \, G^l(+_{ab})}_\mathbbm{1} \,,
\end{equation}
\begin{equation}
\label{varg11}
\delta_{g_b} S_{ab}=-i\epsilon_l (1-i\gamma)j_{ab} \tau_{ba}\braket{\tau_{ba}, \,  H^l \tau_{ba}}_{\sigma_3}\,,
\end{equation}
after applying once more the reality condition.

\subsection{Parachronal interfaces}
We now take the boundary states $\tilde{\Psi}^{s,\epsilon,\tau_{ab}}_{ab}$ and $\Psi^{s, \epsilon, \tau_{ba}}_{ba}$ to be those induced from the $\text{SU}(1,1)$ continuous series. Recall that one has to use both the state $\Psi$ and its dual $\tilde{\Psi}$, since they are needed together in the completeness relation of equation \eqref{eq:comp_k1}. Now, rather than directly studying $\Omega_{ab}$ as defined above, we will simplify our analysis by making use of the intertwiner map $\mathcal{A}: \mathcal{D}_\chi \rightarrow \mathcal{D}_{-\chi}$ defined in \eqref{eq:intertwiner}. It is straightforward to check that $\mathcal{A}$ acts on the relevant states as $\mathcal{A} F^{\chi,\tau}_{s, \epsilon, \lambda,\sigma}(\mathbf{z})=\varphi^\tau_{s, \epsilon} F^{-\chi,\tau}_{s, \epsilon, \lambda,\sigma}(\mathbf{z})$, for some phase $\varphi^\tau_{s, \epsilon}$. Indeed, the defining property of the map implies that, at the level of the $\mathfrak{sl}(2,\mathbb{C})$ algebra, we must have for the $F^i$ generators
\begin{equation}
    \mathcal{A} D'^{-\chi}(F^i)=D'^{\chi}(F^i)\mathcal{A}\,,
\end{equation}
and thus also
\begin{equation}
    \mathcal{A} P^{-\chi}= P^\chi\mathcal{A}\,.
\end{equation}
Here $D'^\chi$ denotes the induced representation of the algebra, while $P^\chi$ is the representation of the parity operator \eqref{eq:K1} determined by the requirement $P^\chi D'^\chi(F^i) {P^\chi}^{-1}=D'^\chi(P F^i P^{-1})$. This is enough to establish that $\mathcal{A}$ maps eigenstates $F^{\chi,\tau}_{s, \epsilon, \lambda,\sigma}$ to eigenstates $F^{-\chi,\tau}_{s, \epsilon, \lambda,\sigma}$, up to a phase. That $\varphi^\tau_{s, \epsilon}$ does not depend on $\lambda$ and $\sigma$ can then be seen e.g. by considering the action of the intertwiner on the $F^+$ ladder operator \eqref{eq:ladder}; it will moreover turn out that the explicit form of these phases will not be needed.

Referring back to the inner product in $\mathcal{D}_\chi$, it holds that
\begin{align}
\braket{D^j(g_a) F^{\chi,\tau_{ab}}_{s, \epsilon,\overline{ij},1} ,\; D^{j}(g_b) F^{\chi,\tau_{ba}}_{s, \epsilon, ij,1}}&=\braket{\mathcal{A} \cdot \, D^j(g_a) F^{\chi,\tau_{ab}}_{s, \epsilon,\overline{ij},1} ,\; \mathcal{A} \cdot \, D^{j}(g_b) F^{\chi,\tau_{ab}}_{s, \epsilon,ij,1}} \nonumber \\
&=\overline{\varphi^{\tau_{ab}}_{s, \epsilon}}\varphi^{\tau_{ba}}_{s, \epsilon}  \braket{D^j(g_a)F^{-\chi,\tau_{ab}}_{s, \epsilon,\overline{ij},1}, \cdot D^{j}(g_b) F^{-\chi,\tau_{ab}}_{s, \epsilon,ij,1}}\,,
\end{align}
so one may equivalently study the object $\Omega_{ab}=\overline{g_a \triangleright \mathcal{A} \tilde{\Psi}^{s,\epsilon,\tau}_{ab}}\, \cdot g_b \triangleright \mathcal{A} \Psi^{s,\epsilon,\tau}_{ba}$. As before the spins $s_{ab}=s_{ba}$ must agree, and expanding
\begin{equation}
\label{eq:omega_p}
\Omega_{ab}= f_{ab}(\mathbf{z}_{ab})\left( p_{ab}(\mathbf{z}_{ab},g_a,g_b)\right)^{\Lambda s_{ab}}\,,
\end{equation}
we find
\begin{align}
\label{eq:pref_p}
    f_{ab}(\mathbf{z}_{ab},g_a,g_b)=&N^j_{\epsilon,\tau_{ab},\tau_{ba}} \Theta(\tau_{ab} \braket{g_a^T z, g_a^T z}_{\sigma_3}) \Theta(\tau_{ba} \braket{g_b^T z, g_b^T z}_{\sigma_3}) \cdot \nonumber \\
    &\cdot \left|\braket{l^-_{ab},g_a^T z}_{\sigma_3} \right|^{-1}\left|\braket{l^-_{ba},g_b^T z}_{\sigma_3} \right|^{-1} (\tau_{ab} \braket{g_a^T z, g_a^T z}_{\sigma_3})^{-\frac{1}{2}}(\tau_{ba} \braket{g_b^T z, g_b^T z}_{\sigma_3})^{-\frac{1}{2}} \cdot \nonumber \\
    &\cdot \left[\left(-\frac{1}{2}-(i+\gamma)s_{ab}\right)\frac{\braket{l^+_{ab},g_a^T z}_{\sigma_3}}{\braket{l^-_{ab},g_a^T z}_{\sigma_3}}-\left(-\frac{1}{2}-(i-\gamma)s_{ab}\right)\frac{\braket{g_a^Tz, l^+_{ab}}_{\sigma_3}}{\braket{g_a^Tz, l^-_{ab}}_{\sigma_3}} \right]\,,
\end{align}
\begin{equation}
    p_{ab}^{s_{ab}}(\mathbf{z}_{ab},g_a,g_b)=\left(\frac{\braket{l^-_{ba},g_b^T z}_{\sigma_3}}{\braket{l^-_{ab},g_a^T z}_{\sigma_3}}\right)^{(i+\gamma){s_{ab}}} + \left(\frac{\braket{g_b^Tz, l^-_{ba}}_{\sigma_3}}{\braket{g_a^Tz, l^-_{ab}}_{\sigma_3}}\right)^{(i-\gamma){s_{ab}}}\,,
\end{equation}
with a prefactor
\begin{equation}
\label{eq:prefactor}
    N^j_{\epsilon,\tau_{ab},\tau_{ba}}=\frac{\mu_{\epsilon}(s)}{2s} \overline{\varphi^{\tau_{ab}}_{s, \epsilon}} \varphi^{\tau_{ba}}_{s, \epsilon}  \overline{ A_{-\tau_{ab} \gamma s,0}^j}A_{-\tau_{ba} \gamma s,1}^j\,.
\end{equation}
The associated action, obtained from the logarithm of $p_{ab}^{s_{ab}}$, has thus the simple form\footnote{Much of the intricate structure of the parachronal amplitude is relegated to the measure factor $f_{ab}$.}
\begin{equation}
\label{eq:para_action}
    S_{ab}=(i+\gamma){s_{ab}}\ln \frac{\braket{l^-_{ba},g_b^T z}_{\sigma_3}}{\braket{l^-_{ab},g_a^T z}_{\sigma_3}} +(i-\gamma){s_{ab}} \ln \frac{\braket{g_b^Tz, l^-_{ba}}_{\sigma_3}}{\braket{g_a^Tz, l^-_{ab}}_{\sigma_3}}\,.
\end{equation}
Note that, unlike for achronal, orthochronal and heterochronal actions, the parachronal action is purely imaginary, in agreement with the approximation found in \cite{Liu2019}. The variations are now straightforwardly obtained
\begin{equation}
   \delta_{\mathbf{z}_{ab}} S_{ab}=\left( \frac{\bra{l^-_{ba}}\sigma_3 g_b^T}{\braket{l^-_{ba},g_b^T z}_{\sigma_3}}- \frac{\bra{l^-_{ab}}\sigma_3 g_a^T}{\braket{l^-_{ab},g_a^T z}_{\sigma_3}}\right) \delta \mathbf{z}_{ab} \,,
\end{equation}
\begin{equation}
\delta_{g_a}S_{ab}=-(i+\gamma){s_{ab}}\, i \epsilon_l \frac{\braket{l^-_{ab},H^l g_a^T z}_{\sigma_3}}{\braket{l^-_{ab},g_a^T z}_{\sigma_3}}\,,
\end{equation}
\begin{equation}
\delta_{g_b}S_{ab}=(i+\gamma){s_{ab}}\, i \epsilon_l \frac{\braket{l^-_{ba},H^l g_b^T z}_{\sigma_3}}{\braket{l^-_{ba},g_b^T z}_{\sigma_3}}\,,
\end{equation}
where we made use of the parameterization \eqref{eq:var11} for the group elements. We proceed by noting that $\{\ket{l^\pm_{ab}}\}$ constitutes a basis for $\mathbb{C}^2$, so that we may expand
\begin{equation}
\label{eq:expand}
    \ket{g_a^T z}=\alpha_{ab}\left(\ket{l^+_{ab}} + \beta_{ab} \ket{l^-_{ab}} \right)\,,
\end{equation}
and mutatis mutandis for $ \ket{g_b^T z}$. Under these expansions the previous equations take the form
\begin{equation}
   \delta_{\mathbf{z}_{ab}} S_{ab}=\left( \frac{\bra{l^-_{ba}}\sigma_3 g_b^T}{\alpha_{ba}}- \frac{\bra{l^-_{ab}}\sigma_3 g_a^T}{\alpha_{ab}}\right) \delta \mathbf{z}_{ab} \,,
\end{equation}
\begin{equation}
\label{eq:ga_p}
\delta_{g_a}S_{ab}=-(i+\gamma){s_{ab}}\, i \epsilon_l \left(\braket{l^-_{ab},H^l l^+_{ab}}_{\sigma_3}+\beta_{ab} \braket{l^-_{ab},H^l l^-_{ab}}_{\sigma_3}\right)\,,
\end{equation}
\begin{equation}
\label{eq:gb_p}
\delta_{g_b}S_{ab}=(i+\gamma){s_{ab}}\, i \epsilon_l \left(\braket{l^-_{ba},H^l l^+_{ba}}_{\sigma_3}+\beta_{ba} \braket{l^-_{ba},H^l l^-_{ab}}_{\sigma_3}\right)\,.
\end{equation}
For completeness we also expand equation \eqref{eq:pref_p} in terms of $\ket{l^\pm_{ab}}$, finding
\begin{align}
\label{eq:pref_p_exp}
    f_{ab}(\mathbf{z}_{ab},g_a,g_b)=&\frac{N^j_{\epsilon,\tau_{ab},\tau_{ba}}}{2} \Theta(\tau_{ab} \,\Re \beta_{ab}) \Theta(\tau_{ba} \,\Re \beta_{ba}) \left| \alpha_{ab} \alpha_{ba}\right|^{-2} \cdot \nonumber \\
    &\cdot \left(\tau_{ab} \tau_{ba} \,\Re \beta_{ab} \,\Re \beta_{ba} \right)^{-\frac{1}{2}} \left[(2s_{ab}-i)\,\Im\beta_{ab} - 2\gamma s_{ab} \,\Re \beta_{ab} \right].
\end{align}

\section{Vectorial formulation of the critical points} \label{sec:crit_solutions}

We are now in possession of the algebraic critical point equations with which we can study the dominant behavior of the spin-foam amplitude. These equations can, however, be brought into a more amenable form, which will later on be useful to assign a geometric meaning to them; this is the main subject of this section.
\subsection{Closure relations}
\label{section:closure}

Recall that part of the stationarity condition is obtained from the variation with respect to group elements, and for each polyhedron there is the constraint
\begin{equation}
\label{eq:groupvar}
    \sum_{b \neq a} \delta_{g_a} S_{ab}=0\,,\,\forall \,a\,,
\end{equation}
on which we will focus throughout this subsection.
Consider once more the general form of the spin-foam amplitude of equation \eqref{eq:vertex_amp}. For each choice of a polyhedron $a$ one has a product of $\Omega_{ab}$ functions, one for each polygon labeled by $ab$. Since there is one group integration for each polyhedron, the concrete form of the critical point equations obtained from \eqref{eq:groupvar} will depend on the causal character of each interface $ab$.
\begin{figure}
    \centering
    	{\includegraphics[valign=c,scale=0.8]{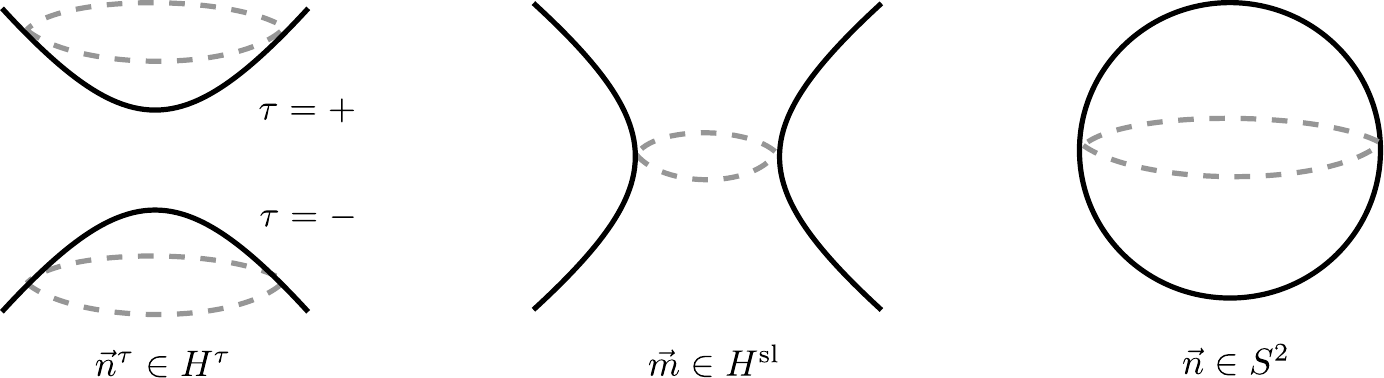}}
    	\caption{2-dimensional subspaces of $\mathbb{R}^{3,1}$ associated with coherent state boundary data. The sheets of the two-sheeted hyperboloid $H^\tau$ are labelled by $\tau$, as discussed in appendix \ref{appendix:geometrygroups}.}
\end{figure}

\subsubsection{Space-like polyhedra}

If we take $a$ to be a space-like polyhedron, then every face $ab$ must be space-like. The interfaces of interest are then heterochronal or achronal. The critical point equations \eqref{varg_sl} for the first type were obtained in the previous section, and the result of \cite{Barrett2010} for the second type has the exact same form (this is not surprising, since the variations with respect to $g_a$ should only depend on the states $\Psi_{ab}$ and not $\Psi_{ba}$). Equation \eqref{eq:groupvar} then implies
\begin{equation}
\forall \text{ s.l. } a\,, \quad \sum_{b} j_{ab} \braket{+_{ab} , \, G^l(+_{ab})}_\mathbbm{1} =0\,.
\end{equation}
Now, by noting that $G=(\vec{J}, i\vec{J})$, and taking into consideration equation \eqref{su2vec} of the appendix, we may rewrite this condition as
\begin{equation}
\label{closurespace}
\forall \text{ s.l. } a\,, \quad \sum_{b}j_{ab} \vec{n}_{ab}=0\,,
\end{equation}
where $\vec{n}_{ab}=2\braket{+_{ab}, \vec{J} (+_{ab})}$ denotes a vector in $S^2$ associated with the coherent state $\ket{+_{ab}}$ for the face $ab$. We therefore see that the variation with respect to the group elements induces a closure condition associated to space-like polyhedra.

\subsubsection{Time-like polyhedra}
\label{section:closure_t}
This time we take $a$ to label a time-like polyhedron. Consequently, every interface $ab$ will either be space-like, and thus orthochronal or heterochronal, or time-like, and thus parachronal. Again, as before, the equations for the variation of the action with respect to the group for heterochronal interfaces obtained in \cite{Kaminski2018a} have the same form as \eqref{varg11}. Collecting equations \eqref{varg11} and \eqref{eq:ga_p}, and again noting that $H=(\vec{F},i\vec{F})$, we find, for every term in the sum of \eqref{eq:groupvar},
\begin{equation}
\forall \text{ t.l. } a\,, \quad  \sum_{b:\,\text{s.l.}ab}j_{ab} \tau_{ab}\braket{\tau_{ab}, \,  \vec{F} \tau_{ab}}_{\sigma_3} + \sum_{b:\,\text{t.l.}ab}-i {s_{ab}} \left(\braket{l^-_{ab},\vec{F} l^+_{ab}}_{\sigma_3}+\beta_{ab} \braket{l^-_{ab},\vec{F} l^-_{ab}}_{\sigma_3} \right)=0\,.
\end{equation}
We may again rewrite this equation in terms of the vectors determined by the coherent states by referring to equations \eqref{eq:su11discvec} and \eqref{eq:su11contvec}. Let us denote by $\vec{n}^\tau_{ab}=2\tau_{ab}\braket{\tau_{ab}, \,  \vec{F} \tau_{ab}}_{\sigma_3}$ and $\vec{m}_{ab}=2i\braket{l^+_{ab},\vec{F} l^-_{ab}}_{\sigma_3}$ the vectors in the two-sheeted time-like hyperboloid $H^\pm$ and the one-sheeted space-like hyperboloid $H^{\text{sp}}$, respectively. One finds then the two equations
\begin{equation}
\label{eq:closure1_p}
\forall \text{ t.l. } a\,, \quad \sum_{b:\,\text{s.l.}ab} j_{ab} \vec{n}_{ab}^\tau + \sum_{b:\,\text{t.l.}ab} s_{ab} \left( \vec{m}_{ab} + \Im \beta_{ab} \tilde{m}_{ab}\right)=0\,,
\end{equation}
\begin{equation}
\label{eq:closure2_p}
    \forall \text{ t.l. } a\,, \quad \sum_{b:\,\text{t.l.}ab} s_{ab} \Re \beta_{ab} \tilde{m}_{ab}=0\,,
\end{equation}
having introduced the vector $\tilde{m}_{ab}=2\braket{l^-_{ab},\vec{F} l^-_{ab}}_{\sigma_3}$; it can be shown to be null and future-directed by explicit computation, and entirely independent of $\tau_{ab}$.

This is a good opportunity to draw a comparison between our analysis and the one carried out in \cite{Liu2019}. As we already mentioned in passing, there the authors performed an asymptotic approximation of the $\Omega_{ab}$ function, constructed with the Conrady-Hnybida prescription $\lambda=\sqrt{s^2+1/4}$ for the relevant coherent states \cite{Conrady2010}. In order to avoid having to resort to such an approximation, we have chosen to parameterize the theory using complex values of $\lambda$, leading to an exact and concise expression for $\Omega_{ab}$ as in equations \eqref{eq:omega_p}, \eqref{eq:pref_p} and \eqref{eq:para_action}. However, despite the apparent differences in the construction of the states, the expression we found for $\Omega_{ab}$ carries a similar structure as the one in \cite{Liu2019}: in particular, in both analyses one finds half-order branch cuts in \eqref{eq:pref_p}, and the action \eqref{eq:para_action} agrees with the dominant action of \cite{Liu2019}, i.e. the action permitting critical points.

Now, it was understood in \cite{Liu2019} that, in order for the critical point equations \eqref{eq:closure1_p} and \eqref{eq:closure2_p} to yield configurations that match our semiclassical expectations (Regge-like piece-wise linear geometries described by the Regge action), one must require all $\beta_{ab} \in \mathbb{C}$ to vanish. There it was shown, for the particular case of a triangulation by simplices, that $\Re \beta_{ab}$ must indeed be zero if one makes certain assumptions on the causal character of the triangles in a time-like tetrahedron: in case a time-like tetrahedron contains at least one space-like triangle, the critical point equations determine that a sum of three light-like vectors or less proportional to $\Re \beta_{ab}$ must vanish; this is only possible if every vector vanishes identically, thus imposing $\Re \beta_{ab} = 0$. To ensure this on the entire triangulation and simultaneously avoid degenerate 4-simplices/vector geometries, the authors proposed that each 4-simplex should consist of at least one space-like and one time-like tetrahedron, where thus the latter has both space- and time-like triangles. This condition, however, explicitly excludes triangulations used in Causal Dynamical Triangulations (CDT) \cite{Ambjorn:2012jv}, which necessarily contain 4-simplices consisting entirely of time-like tetrahedra\footnote{ These 4-simplices are called $(3,2)$-simplices (or $(2,3)$-simplices), denoting a simplex with three vertices in a ``past'' spatial slice and two vertices in the ``future'' one. Such a simplex contains a spatial triangle in the past slice and a spatial edge in the future one. No tetrahedron is entirely contained in one spatial slice and thus all tetrahedra are time-like.}. Such an assumption would then constitute an obstacle to a putative correspondence between the spin-foam and CDT approaches.

We propose alternatively that we relax this assumption, and extend the validity of the argument to any cellular decomposition, by imposing a condition on the $\tau_{ab}$ parameters of parachronal faces. Note that the critical point equations for parachronal interfaces, unlike for heterochronal ones, do not depend on $\tau_{ab}$, and hence neither does the physical content of the dominant contributions\footnote{We remark that the vectors associated with orthochronal faces live in one of the sheets of the two-sheeted hyperboloid, depending on $\tau_{ab}$. Vectors associated to parachronal faces, on the other hand, live on the one-sheeted hyperboloid, and there $\tau_{ab}$ plays no geometrical role.}. It is therefore conceivable that one may demand of all $\tau_{ab}$ of all time-like polyhedra to agree; if one does so, then the signs of all $\Re \beta_{ab}$ must agree too, since the integrand \eqref{eq:pref_p} contains a factor
\begin{equation}
\Theta(\tau_{ab} \braket{g_a^T z, g_a^T z}_{\sigma_3})=\Theta(\tau_{ab} \, \Re \beta_{ab})\,.
\end{equation}
Under this requirement, equation \eqref{eq:closure2_p} determines that a collection of similarly-oriented null vectors must vanish, and this may only be the case if $\Re \beta_{ab}=0$. Note furthermore that in this circumstance the phases $\varphi^{\tau}_{s,\epsilon}$ in equation \eqref{eq:prefactor} mutually cancel.

It finally remains to deal with the variables $\Im \beta_{ab}$ appearing in equation \eqref{eq:closure1_p}. Unfortunately, there does not seem to be an asymptotic condition demanding these terms to vanish. Still we would like to point out that such an obstruction also appeared in \cite{Liu2019}, where the authors required $\Im \beta_{ab}=0$ from the observation that their integrand $\Omega_{ab}$ did not depend on this variable; such a choice would then constitute a ``gauge fixing''. However, from this fact alone it is not clear that configurations with $\Im \beta_{ab} \neq 0$ should be exponentially suppressed or at least subdominant relative to Regge-like configurations. Different choices of $\Im \beta_{ab}$ should indeed lead to different vector solutions of equation \eqref{eq:closure1_p}, and it is conceivable that Regge-like boundary data might have additional contributions when non-vanishing $\Im \beta_{ab}$ are chosen, such that all light-like vectors in the staionary phase equations cancel. We would thus argue that at this stage there does not seem to be any reason to believe that configurations characterised by $\Im \beta_{ab}=0$ should dominate in the expansion. This seems to suggest that the model, at least in its extension to time-like interfaces, is not sufficiently constrained.

Since one would still benefit from an analysis of the special subset of critical points with vanishing $\beta_{ab}$, we shall proceed by restricting our attention to them, keeping in mind the already mentioned caveat that they are but a part of the  dominant contributions to the spin-foam amplitude. For simplicity we will call these special critical points \textit{geometrical}, and for them it holds that
\begin{equation}
\label{eq:closure_p}
\forall \text{ t.l. } a\,, \quad \sum_{b:\,\text{s.l.}ab} j_{ab} \vec{n}_{ab}^\tau + \sum_{b:\,\text{t.l.}ab}  s_{ab} \vec{m}_{ab}=0\,.
\end{equation}

\subsection{Bivector constraints}
We now turn to the algebraic equations coming from the $\mathbf{z}_{ab}$ variations,
\begin{equation}
    \delta_{\mathbf{z}_{ab}} S_{ab}(\mathbf{z}_{ab},g_a,g_b)=0, \,\forall \,a,b\,,
\end{equation}
and we shall argue they can be rephrased as constraints on certain bivectors of $\Lambda^2 \mathbb{R}^{3,1}$.
\subsubsection{Heterochronal interfaces}
Recall that, from the reality condition \eqref{realityspace}, the spinor variables $\ket{z_{ab}}$ must satisfy
\begin{align}
&\ket{g_a^T z_{ab}}=\lambda_{ab}\ket{+_{ab}}\,, \quad a \text{ s.l., } ab \text{ s.l.} \nonumber\\
&\ket{g_b^T z_{ba}}=\lambda_{ba}\ket{\tau_{ba}}\,,  \quad b \text{ t.l., } ba \text{ s.l.} \nonumber
\end{align}
Since $\mathbf{z}_{ab}=\mathbf{z}_{ba}$ by construction, we may group these previous equations into one, finding
\begin{equation}
\label{realvar}
(g_a^T)^{-1} \ket{+_{ab}}=\frac{\lambda_{ba}}{\lambda_{ab}}(g_b^T)^{-1} \ket{\tau_{ba}}\,.
\end{equation}
The variations with respect to $\mathbf{z}_{ab}$, on the other hand, imply
\begin{equation}
\label{zvar}
\overline{g}_a \ket{+_{ab}}=\frac{\overline{\lambda}_{ab}}{\overline{\lambda}_{ba}}\overline{g}_b \tau_{ba} \sigma_3 \ket{\tau_{ba}}\,.
\end{equation}
We intend to write the previous algebraic equations in more geometrical terms by making their dependence on vectors in $S^2, H^\pm$ explicit. Referring to section \ref{appendix:geometrygroups} of the appendix, coherent states determine such vectors as follows:
\begin{align}
&\vec{n}_{ab}=2\braket{+_{ab}\, | \, J^i  (+_{ab})}_\mathbbm{1}  \hat{e}_i \in S^2\,, \\
&\vec{n}^\tau_{ab}=2\tau_{ba}\braket{\tau_{ab}\, | \, F^i  \tau_{ab}}_{\sigma_3} \hat{e}_i \in H^\tau\,.
\end{align}
If, instead of writing these vectors in the canonical $\mathbb{R}^3$ basis, we choose to express them using the bases for $\mathfrak{su}(2)$ and $\mathfrak{su}(1,1)$, we find
\begin{align}
\label{exp1}
&n_{ab}^i J_i=\ket{+_{ab}}\bra{+_{ab}}-\frac{\mathbbm{1} }{2} \,, \\
\label{exp2}
&n_{ab}^{\tau\, i} F_i =\tau_{ba}{\sigma_3} \ket{\tau_{ab}}\bra{\tau_{ab}}-\frac{\mathbbm{1} }{2}\,,
%&m_{ab}^{i}F_i=-i\left( \sigma_3\ket{l^+_{ab}}\bra{l^-_{ab}} -\frac{\mathbbm{1} }{2}\right)\,,
\end{align}
where we used ${\sigma_3} F_i {\sigma_3}=F_i^\dagger$ and the completeness relation for Pauli matrices. The structure of these vectors suggests then the following construction: tensoring equations \eqref{zvar} with the conjugate transpose of \eqref{realvar}, and plugging in the algebra representation of our vectors, we get
\begin{equation}
\label{A2}
\vec{n}_{ab} \cdot \overline{g}_a \vec{J} \, \overline{g}_a^{-1}=\vec{n}_{ba}^{\tau} \cdot \overline{g}_b \vec{F} \, \overline{g}_b^{-1}\,.
\end{equation}

\subsubsection{Parachronal interfaces}

Analogously, for the special geometrical subset of critical points we consider at $\beta_{ab}=0$, equation \eqref{eq:expand} determines that
\begin{align}
&\ket{g_a^T z_{ab}}=\alpha_{ab}\ket{l^+_{ab}}\,,  \quad a \text{ t.l., } ab \text{ t.l.} \nonumber\\
&\ket{g_b^T z_{ba}}=\alpha_{ba}\ket{l^+_{ba}}\,,  \quad b \text{ t.l., } ba \text{ t.l.} \nonumber
\end{align}
from where we find
\begin{equation}
\label{timealgebra1}
(g_a^T)^{-1} \ket{l^+_{ab}}=\frac{\alpha_{ba}}{\alpha_{ab}}(g_b^T)^{-1} \ket{l^+_{ba}}\,,
\end{equation}
while the $z_{ab}$ variation implies
\begin{equation}
\label{timealgebra2}
\overline{g}_a \sigma_ 3 \ket{l^-_{ab}}=\frac{\overline{\alpha}_{ab}}{\overline{\alpha}_{ba}}\overline{g}_b \sigma_3 \ket{l^-_{ba}}\,.
\end{equation}
The coherent state associated to time-like faces reads
\begin{equation}
\vec{m}_{ab}=2i\braket{l^+_{ab}\, | \, F^i  l^-_{ab}}_{\sigma_3} \hat{e}_i \in H^\text{sp}\,,
\end{equation}
as per the discussion in section \ref{appendix:geometrygroups}, and its expansion in terms of the algebra is simply
\begin{equation}
\label{exp3}
m_{ab}^{i}F_i=i\left( \sigma_3\ket{l^-_{ab}}\bra{l^+_{ab}} -\frac{\mathbbm{1} }{2}\right)\,.
\end{equation}
A similar argument to the one above then allows us to find an equation in terms of $\vec{m}_{ab}$,
\begin{equation}
\label{B12}
\vec{m}_{ab} \cdot \overline{g}_a \vec{F} \,\overline{g}_a^{-1}=\vec{m}_{ba} \cdot \overline{g}_b \vec{F} \,\overline{g}_b^{-1}\,.
\end{equation}

\subsubsection{Embedding into \texorpdfstring{$\Lambda^2 \mathbb{R}^{3,1}$}{bivectors}}
In order to proceed it will prove useful to recall some facts regarding the well-known canonical spin homomorphism
\begin{equation}
\begin{gathered}
\pi: \text{SL}(2,\mathbb{C}) \rightarrow \text{SO}^+(3,1) \\
g \mapsto \frac{1}{2}\text{Tr}\left(g\sigma_\mu g^\dagger \sigma_\nu\right)\hat{e}^\mu \otimes \hat{e}^\nu \,,
\end{gathered}
\end{equation}
mapping the special linear group to the orthochronous Lorentz group. It arises from the observation that one may define an action of $\text{SL}(2,\mathbb{C})$ on $\mathbb{R}^{3,1}$ as
\begin{equation}
    x^\mu \sigma_\mu \mapsto g x^\mu \sigma_\mu g^\dagger= \left[\pi(g)x\right]^\mu \sigma_\mu\,,
\end{equation}
which we will use extensively moving forward.

Consider first an infinitesimal SL$(2,\mathbb{C})$ transformation, which we write as $g=\mathbbm{1} +\frac{i}{2} \omega_{\alpha \beta} J^{\alpha \beta}$, with $J^{\alpha \beta}$ the generators of the real $\mathfrak{sl}(2,\mathbb{C})$ algebra\footnote{It is well-known that given any SL$(2,\mathbb{C})$ group element $g$, either $g$ or $-g$ can be written as the exponentiation of some algebra element \cite{Buchbinder1998}.}. Using the previous equation and Pauli matrix identities, one may show that under the spin homomorphism it projects to $\pi(g)^\mu_{\;\;\nu}=\delta^\mu_{\;\;\nu}+\omega^\mu_{\;\;\nu}$ \cite{Buchbinder1998}. Now, by virtue of the Lie bracket of the $J^{\mu \nu}$ generators \cite{weinberg2012the},
\begin{equation}
i\left[J^{\mu \nu}, J^{\rho \sigma} \right]=\eta^{\nu \rho} J^{\mu \sigma} - \eta^{\mu \rho} J^{\nu \sigma} - \eta^{\sigma \mu} J^{\rho \nu} + \eta^{\sigma \nu} J^{\rho \mu}\,,
\end{equation}
one may see that
\begin{align}
(\mathbbm{1} +\frac{i}{2}  \omega_{\alpha \beta} J^{\alpha \beta})J_{\mu \nu} (\mathbbm{1} -\frac{i}{2}\omega_{\gamma \rho} J^{\gamma \rho})&=J_{\mu \nu}+\omega^{\beta}_{\;\;\nu}J_{\mu \beta}+\omega^{\alpha}_{\;\;\mu}J_{\alpha \nu}+\mathcal{O}(\omega^2) \nonumber \\
&= \pi (\mathbbm{1} +\frac{i}{2} \omega_{\gamma \rho} J^{\gamma \rho})^{\alpha}_{\;\;\mu} \pi (\mathbbm{1} +\frac{i}{2} \omega_{\gamma \rho} J^{\gamma \rho})^{\beta}_{\;\;\nu} J_{\alpha \beta} \nonumber \,,
\end{align}
implying the transformation law
\begin{equation}
x^{\mu \nu}g J_{\mu \nu} g^{-1}=\left[\pi(g)^{\wedge 2} x \right]^{\mu \nu} J_{\mu \nu}\,.
\end{equation}
In order to make use of the above identity, we will rewrite the vectors appearing in equations \eqref{A2} and \eqref{B12} in terms of $J^{\mu \nu}$. Making use of the relations\footnote{We use $\epsilon^{0123}=-\epsilon_{0123}=1$.} $K^i=J^{0i}$ and $J^i=\frac{1}{2}\epsilon^{0i}_{\quad jk}J^{jk}$, and defining $n_iJ^i=n_{\mu \nu}J^{\mu \nu}$, $n^\tau_iF^i=n^\tau_{\mu \nu}J^{\mu \nu}$ and $m_iF^i=m_{\mu \nu}J^{\mu \nu}$, we find
\begin{equation}
\vec{n} \mapsto n = \begin{pmatrix}
0 & 0 & 0 & 0 \\
0 & 0 & -n^3 & n^2 \\
0 & n^3 & 0 & -n^1 \\
0 & -n^2 & n^1 & 0
\end{pmatrix}=\star\left[(0,\vec{n})\wedge (1,\vec{0})\right]\,,
\end{equation}
\begin{equation}
\vec{n}^\tau \mapsto n^\tau = \begin{pmatrix}
0 & n^{\tau 2} & n^{\tau 3} & 0 \\
-n^{\tau 2} & 0 & -n^{\tau 1} & 0 \\
-n^{\tau 3} & n^{\tau 1} & 0 & 0 \\
0 & 0 & 0 & 0
\end{pmatrix}=\star\left[(\xi\vec{n}^\tau,0)\wedge (\vec{0},1)\right]\,,
\end{equation}
\begin{equation}
\vec{m} \mapsto m = \begin{pmatrix}
0 & m^{ 2} & m^{3} & 0 \\
-m^{2} & 0 & -m^{ 1} & 0 \\
-m^{ 3} & m^{1} & 0 & 0 \\
0 & 0 & 0 & 0
\end{pmatrix}=\star\left[(\xi\vec{m},0)\wedge (\vec{0},1)\right]\,,
\end{equation}
where $\xi$ is simply an $\text{SO}(3)$ rotation matrix
\begin{equation}
\label{rotation}
\xi=\begin{pmatrix}
1 & 0 & 0 \\
0 & 0 & -1 \\
0 & 1 & 0
\end{pmatrix}\,.
\end{equation}
Because each bivector has a fixed structure, in that the second vector in the wedge product is independent of the particular coherent state that induces it, one may further establish an injection between those states and elements of $\mathbb{R}^{3,1}$,
\begin{equation}
\label{eq:4vec}
    \ket{+_{ab}} \mapsto (0,\vec{n}_{ab})\,, \quad \ket{\tau_{ab}} \mapsto (\xi\vec{n}^\tau_{ab},0)\,, \quad\ket{l^\pm_{ab}} \mapsto (\xi\vec{m}_{ab},0)\,.
\end{equation}

\subsubsection{Bivector equations}
Returning to equations \eqref{A2} and \eqref{B12}, the discussion above allows us to rewrite them in the particularly simple form
\begin{equation}
\label{bivspace}
\pi(\overline{g}_a)^{\wedge 2}  \, n_{ab} = \pi(\overline{g}_b)^{\wedge 2}  \, n^\tau_{ba}\,,
\end{equation}
\begin{equation}
\label{bivtime}
\pi(\overline{g}_a)^{\wedge 2}  \, m_{ab} = \pi(\overline{g}_b)^{\wedge 2}  \, m_{ba}\,,
\end{equation}
where $\pi(\overline{g}_a)^{\wedge 2}$ denotes the induced linear map on the exterior space $\Lambda^2 \mathbb{R}^{3,1}$, and these are the bivector constraints that the section title alludes to.
Since it will be useful for a later discussion in Section \ref{section:parameters}, we will now bring these equations in a similar but different form. Recall that the Hodge star establishes a bilinear correspondence $\star: \, \Lambda^j V \rightarrow \Lambda^{n-j} V$ on an $n$-dimensional vector space $V$, and the inner product in $\Lambda^j V$ may be written as
\begin{equation}
    \braket{u,v}=\star^{-1}\left( \star u \wedge v\right)\,.
\end{equation}
Then the Hodge operator commutes with $\text{SO}(n)$ linear maps $g^{\wedge}$ on the exterior algebra. Indeed, for $a \in V$ and $\beta \in \Lambda^{n-1} V$, it is straightforward to see from the equation above that $a \wedge\left[ g^{\wedge (n-1)} \beta \right]= \left[\star (g^T)^{\wedge (n-1)} \star^{-1} a \right]\wedge \beta$. It follows that
\begin{equation}
\begin{gathered}
\left[ g^{\wedge (n-1)} \beta \right]\wedge \left[ g \,  a\right] = g^{\wedge n} \left[\beta \wedge a \right]  \\
\Leftrightarrow  \; \beta \wedge \left[\star (g^T)^{\wedge (n-1)} \star^{-1} \cdot g \, a \right]= \beta \wedge a \, \text{det}(g)\,,
\end{gathered}
\end{equation}
and thus
\begin{equation}
\label{eq:hodge}
g^{-1}= \star (g^T)^{\wedge (n-1)} \star^{-1} \text{det}(g)^{-1}\,.
\end{equation}
For $g$ in the special orthogonal group this implies that $g \star = \star g^{\wedge (n-1)}$, and the generalization to $g^{\wedge j}\star=\star g^{\wedge (n-j)}$ is immediate.
We may therefore write for our $\text{SO}(3,1)$ case that $\star g^{\wedge 2} = g^{\wedge 2} \star$, and the bivector constraints can be equivalently formulated as
\begin{equation}
\label{nostar}
\pi(\overline{g}_a)^{\wedge 2} \, \left((1,\vec{0})\wedge (0,\vec{n}_{ab})\right) = \pi(\overline{g}_b)^{\wedge 2} \, \left((\vec{0},1)\wedge (\xi \vec{n}_{ba}^\tau,0)\right)\,,
\end{equation}
\begin{equation}
\label{nostar2}
\pi(\overline{g}_a)^{\wedge 2}  \, \left((\vec{0},1)\wedge (\xi \vec{m}_{ab},0)\right) = \pi(\overline{g}_b)^{\wedge 2} \, \left((\vec{0},1)\wedge ( \xi \vec{m}_{ba},0)\right)\,.
\end{equation}

\subsection{Complex parameters}
\label{section:parameters}
One can also assign a more geometric significance to the complex parameters $\lambda, \alpha$ appearing in all the equations above. To do so, we have to make use of the ``partner" states $\ket{-_{ab}},\ket{-\tau_{ab}},\ket{l^+_{ab}}$ to the selected boundary states $\ket{+_{ab}}, \ket{\tau_{ab}}, \ket{l^-_{ab}}$, i.e. those defined by the action of the stabiliser group element on the complementary basis state. To that end define the structure map (as it is called in \cite{Barrett2010})
\begin{equation}
\begin{gathered}
	J: \mathbb{C}^2 \rightarrow \mathbb{C}^2 \\
	(a,b)^T \mapsto (-\overline{b},\overline{a})^T=(-i \sigma_2)\overline{(a,b)}^T \,,
\end{gathered}
\end{equation}
which can be used to invert and conjugate-transpose any GL$(2,\mathbb{C})$ matrix $g$ by $JgJ^{-1}=\det g \, (g^\dagger)^{-1}$, as one may readily show. Acting with this map on states $\ket{+_{ab}}, \ket{\tau_{ab}}, \ket{l^-_{ab}}$ yields the following equations:
\begin{equation}
\begin{gathered}
J\ket{+_{ab}}=J h_{ab} \ket{+}=(h_{ab}^{\dagger})^{-1} J \ket{+}= \ket{-_{ab}}\,, \quad h_{ab} \in \text{SU}(2) \\
J \sigma_3  \ket{\tau_{ab}}=- \sigma_3 J h_{ab}\ket{\tau}=-\sigma_3 (h_{ab}^{\dagger})^{-1}J\ket{\tau}=h_{ab}\sigma_3 J\ket{\tau}= \ket{-\tau_{ab}}\,, \quad h_{ab} \in \text{SU}(1,1) \\
J \sigma_3  \ket{l^-_{ab}}=-h_{ab} \sigma_3 J \ket{l^-}=-\ket{l^-_{ab}}\,, \quad h_{ab} \in \text{SU}(1,1)\,. \\
\end{gathered}
\end{equation}
Let us now go back to equations \eqref{realvar} and \eqref{zvar}. Using the previous relations between the partner states (and $J^{-1}=-J$), those equations can be shown to respectively imply
\begin{equation}
\label{realvarcomp}
\begin{gathered}
\overline{g}_a \ket{-_{ab}}=-\frac{\overline{\lambda}_{ba}}{\overline{\lambda}_{ab}}\overline{g}_b \sigma_3 \ket{-\tau_{ba}} \\
\overline{g}_a \sigma_ 3 \ket{l^+_{ab}}=\frac{\overline{\alpha}_{ba}}{\overline{\alpha}_{ab}}\overline{g}_b \sigma_3 \ket{l^+_{ba}}\,,
\end{gathered}
\end{equation}
\begin{equation}
\label{zvarcomp}
\begin{gathered}
(g_a^T)^{-1} \ket{-_{ab}}=\frac{\lambda_{ab}}{\lambda_{ba}}(g_b^T)^{-1}\tau_{ba}\ket{-\tau_{ba}} \\
(g_a^T)^{-1} \ket{l^-_{ab}}=\frac{\alpha_{ab}}{\alpha_{ba}}(g_b^T)^{-1} \ket{l^-_{ba}}\,.
\end{gathered}
\end{equation}
We will use these equations, as well as the ones for the complementary states, to elucidate the meaning of the complex parameters $\lambda_{ab}$ and $\alpha_{ab}$. For clarity we separate the case that refers to both types of interfaces we consider.

\subsubsection{Heterochronal interfaces}
Looking back at the first lines of equations \eqref{realvar} and \eqref{zvar}, notice that, by factoring one of the states, one finds an eigenvalue equation for $\ket{+_{ab}}$,
\begin{equation}
	g_a^T (g_b^T)^{-1}\sigma_3 \overline{g}_b^{-1} \overline{g}_a \ket{+_{ab}}=\tau_{ba} \left|{\frac{\lambda_{ab}}{\lambda_{ba}}}\right|^2 \ket{+_{ab}}.
\end{equation}
The same procedure applied to \eqref{realvarcomp} and \eqref{zvarcomp} analogously yields
\begin{equation}
	g_a^T (g_b^T)^{-1}\sigma_3 \overline{g}_b^{-1} \overline{g}_a \ket{-_{ab}}=-\tau_{ba} \left|{\frac{\lambda_{ab}}{\lambda_{ba}}}\right|^{-2} \ket{-_{ab}},
\end{equation}
and this is enough to characterize the matrix $g_a^T (g_b^T)^{-1}\sigma_3 \overline{g}_b^{-1} \overline{g}_a$: since its eigenvectors form a basis for the vector space, it is diagonalizable. We may then write, by referring to \eqref{exp1},
\begin{equation}
	g_a^T (g_b^T)^{-1}\sigma_3 \overline{g}_b^{-1} \overline{g}_a= \tau_{ba} e^{2 \ln \left|{\frac{\lambda_{ab}}{\lambda_{ba}}}\right|^2  \vec{n}_{ab} \cdot \vec{J}} 2\vec{n}_{ab} \cdot \vec{J}\,,
\end{equation}
as the matrix on the right-hand side has the same eigenvectors and eigenvalues. Now, massaging both sides, the previous equation implies
\begin{equation}
	\left[\pi\left( \bar{g}_a^{-1}\right) \pi\left( \bar{g}_b\right) (\vec{0},1)\right]^\mu \sigma_\mu = \left[\tau_{ba}\, \sigma_0 \cosh \ln  \left|\frac{\lambda_{ba}}{\lambda_{ab}}\right|^2 + \tau_{ba}\, \vec{n}_{ab} \cdot \vec{\sigma} \sinh \ln  \left|\frac{\lambda_{ba}}{\lambda_{ab}}\right|^2 \right] \vec{n}_{ab} \cdot \vec{\sigma}\,,
\end{equation}
from where we infer
\begin{align}
	\sinh  \ln  \left|\frac{\lambda_{ba}}{\lambda_{ab}}\right|^2 &= \tau_{ba}\text{Proj}_{e_0}\left[\pi\left( \bar{g}_a^{-1}\right) \pi\left( \bar{g}_b\right) (\vec{0},1)\right] \nonumber\\
	\label{eq:ln}
	&=\tau_{ba}\braket{\pi(\overline{g}_a)(1,\vec{0}),\, \pi(\overline{g}_b) (\vec{0},1)}\,.
\end{align}
The orthogonal projection map used above is defined in appendix \ref{appendix:geometry}.

\subsubsection{Parachronal interfaces}
\label{section:timelikeparameters}
The procedure for the time-like faces follows a similar line. We again use the available equations for the states $\ket{l^+_{ab}},\ket{l^-_{ab}}$, finding
\begin{equation}
\sigma_3 \overline{g}_a^{-1} \overline{g}_b \sigma_3 g_b^T (g_a^T)^{-1}\ket{l_{ab}^-}=\frac{\overline{\alpha}_{ba}}{\overline{\alpha}_{ab}}\frac{\alpha_{ab}}{\alpha_{ba}}\ket{l_{ab}^-}\,,
\end{equation}
\begin{equation}
\sigma_3 \overline{g}_a^{-1} \overline{g}_b \sigma_3 g_b^T (g_a^T)^{-1}\ket{l_{ab}^+}=\frac{\alpha_{ba}}{\alpha_{ab}}\frac{\overline{\alpha}_{ab}}{\overline{\alpha}_{ba}}\ket{l_{ab}^+}\,.
\end{equation}
Again, since $\ket{l^+_{ab}},\ket{l^-_{ab}}$ span $\mathbb{C}^2$, and by equation \eqref{exp3}, we have the equality
\begin{equation}
    \sigma_3 \overline{g}_a^{-1} \overline{g}_b \sigma_3 g_b^T (g_a^T)^{-1}=e^{2 i \, \left[2i \text{arg}\left(\frac{\alpha_{ba}}{\alpha_{ab}}\right)\right] \vec{m}_{ab}\cdot \vec{F}^\dagger}\,.
\end{equation}
Using the canonical spinor map on the left-hand side, and Taylor-expanding the right\footnote{It is easy to see that ($\vec{m}\cdot \vec{F}^\dagger)^2=-1$.}, the above equation reads
\begin{equation}
\left[\pi\left( \bar{g}_a^{-1}\right) \pi\left( \bar{g}_b\right) (\vec{0},1)\right]^\mu \sigma_\mu = \sigma_3\left[\sigma_0 \cos \left(2\,\text{arg}\, \frac{\alpha_{ba}}{\alpha_{ab}}\right) - 2\vec{m}_{ab}\cdot \vec{F}^\dagger \sin \left(2\,\text{arg}\,\frac{\alpha_{ba}}{\alpha_{ab}}\right)\right]
\end{equation}
implying
\begin{align}
\cos \left(2\,\text{arg}\, \frac{\alpha_{ba}}{\alpha_{ab}}\right)&=\text{Proj}_{e_3}\left[\pi\left( \bar{g}_a^{-1}\right) \pi\left( \bar{g}_b\right) (\vec{0},1)\right] \nonumber \\
&=-\braket{\pi\left( \bar{g}_a\right) (\vec{0},1),\, \pi\left( \bar{g}_b\right)(\vec{0},1)}\,.
\end{align}
In order to fix the sign ambiguity in the argument of the cosine, we would like to relate the quantity $2\,\text{arg}\, \alpha_{ba}/\alpha_{ab}$ to the parameter of the Lorentz transformation taking $\pi\left( \bar{g}_a\right)(\vec{0},1)$ to $ \pi\left( \bar{g}_b\right)(\vec{0},1)$ and stabilising the plane defined by these vectors. We do this as follows: there must exist a special linear transformation $D_{ab}$ such that
\begin{equation}
    \pi(D_{ab})\pi\left( \bar{g}_a\right)(\vec{0},1)=\pi\left( \bar{g}_b\right)(\vec{0},1)\,,
\end{equation}
which, in terms of $\text{SL}(2,\mathbb{C})$ elements, implies
\begin{equation}
\label{aux}
D_{ab}\overline{g}_a \sigma_3 \overline{g}_a^\dagger D_{ab}^\dagger=\overline{g}_b \sigma_3 \overline{g}_b^\dagger\,.
\end{equation}
Now, by our requirement the linear transformation must preserve the bivector associated to the face, which reads
\begin{equation}
\label{planebiv}
     \star\left[\pi\left( \bar{g}_a\right)(\vec{0},1) \wedge \pi\left( \bar{g}_b \right)(\vec{0},1)\right]\propto  \pi\left( \bar{g}_a\right)^{\wedge 2} m_{ab}\,,
\end{equation}
as one may check by referring to equations \eqref{bivtime} and \eqref{nostar2}. Just as $g\in$ $\text{SL}(2,\mathbb{C})$ induces a transformation of space-time vectors $x^\mu\sigma_\mu \mapsto g x^\mu\sigma_\mu g^\dagger=(\pi(g)x)^\mu \sigma_\mu$, so too it induces a transformation of space-time bivectors
\begin{equation}
    x^{\mu \nu}\sigma_{\mu}\otimes \sigma_\nu \mapsto (g \otimes g) \left[x^{\mu \nu}\sigma_{\mu}\otimes \sigma_\nu\right] (g^\dagger \otimes g^\dagger) = [\pi(g)^{\wedge 2}x]^{\mu \nu} \sigma_{\mu}\otimes \sigma_\nu\,.
\end{equation}
Clearly, then, the group element stabilizing the bivector of equation \eqref{planebiv} must be given by one of either
\begin{equation}
    D_{ab}=\pm\bar{g}_a e^{i \theta_{ab} \left[\star m_{ab}\right]_{\mu \nu} J^{\mu\nu}} \bar{g}_a^{-1}\,,
\end{equation}
since $ e^{i \theta_{ab} \left[\star m_{ab}\right]_{\mu \nu} J^{\mu\nu}}$ stabilizes $ m_{ab}$ and $\pi$ is injective up to a $\mathbb{Z}_2$ factor. It is also straightforward to check that this element takes $a$ to $b$, and not vice-versa. Together with equation \eqref{aux}, this implies
\begin{equation}
\label{eq:arg}
\sigma_3 \overline{g}_a^{-1} \overline{g}_b \sigma_3 g_b^T (g_a^T)^{-1}= e^{-2 \theta_{ab} \vec{m}_{ab}\cdot \vec{F}^\dagger}\,,
\end{equation}
and therefore $\theta_{ab}=2\, \text{arg}\, \frac{\alpha_{ba}}{\alpha_{ab}}$.

\section{Induced geometry at the critical points} \label{sec:induced_geometry}

It remains to bring the previous analysis together into a geometrical picture of the critical points we focus on. For the reader's convenience, we reproduce here the relevant equations. Under an additional multiplication by the matrix $\xi$ of equation \eqref{rotation} for time-like polyhedra, equations \eqref{closurespace} and \eqref{eq:closure_p} read:
\begin{equation}
\label{closure}
\forall \text{ s.l. } a\,, \; \sum_{b\text{: s.l.}ab}j_{ab} \vec{n}_{ab}=0\,, \quad \; \forall \text{ t.l. } a\,, \; \sum_{b:\,\text{s.l.}ab} j_{ab} \xi\vec{n}_{ab}^\tau + \sum_{b:\,\text{t.l.}ab} s_{ab} \xi\vec{m}_{ab}=0\,.
\end{equation}
In addition to these closure conditions, we have as well the bivector constraints of equations \eqref{bivspace} and \eqref{bivtime}:
\begin{equation}
\label{bivectors}
\begin{gathered}
\pi(\overline{g}_a)^{\wedge 2}  \star\left((1,\vec{0})\wedge (0,\vec{n}_{ab})\right) = \pi(\overline{g}_b)^{\wedge 2} \star\left((\vec{0},1)\wedge (\xi\vec{n}_{ba}^\tau,0)\right)\,, \; a\text{ s.l., }b\text{ t.l., }ab\text{ s.l.}\,, \\
\pi(\overline{g}_a)^{\wedge 2}  \star\left((\vec{0},1)\wedge(\xi\vec{m}_{ab},0)\right) = \pi(\overline{g}_b)^{\wedge 2}  \star\left((\vec{0},1)\wedge (\xi\vec{m}_{ba},0)\right)\,, \; a\text{ t.l., }b\text{ t.l., }ab\text{ t.l.}\,.
\end{gathered}
\end{equation}
We reiterate that we are considering only parachronal and heterochronal interfaces for simplicity, and the equations above refer to those cases.

We shall assume that the boundary data is not degenerate, i.e. we require that the set of vectors $\{\vec{n}_{ab}, \vec{n}_{ab}^\tau, \vec{m}_{ab} \}$ at every $a$ spans a 3-dimensional subspace. If non-degeneracy of the boundary data is not assumed the geometrical meaning of the critical point equations is diluted, and a characterisation of the solutions is substantially more involved. Among those solutions characterised by degenerate data one finds in particular the so-called vector geometries \cite{Barrett2010}. Since our new parametrization is such that the geometrical critical points of heterochronal and parachronal interfaces agree with the ones of \cite{Kaminski2018a} and \cite{Liu2019}, respectively, we refrain from an explicit discussion of these more exotic geometries, and refer the reader to the existing literature. A rather exhaustive study of the most general solutions for the entirely space-like case with graphs of general combinatorial structure was carried out in \cite{Dona2020}.

Now, under the requirement of non-degenerate boundary data, Minkowski's theorem for Minkowskian polyhedra (proven in Theorem \ref{theorem:minkowski} of appendix \ref{appendix:geometry}) establishes that there exists, up to translations, a unique polyhedron satisfying the closure constraints of equation \eqref{closure}. If $a$ is a space-like label, such a polyhedron $P_a^s$ will be a subset of $\mathbb{R}^3$, while if $a$ is time-like $P^t_a$ will be a subset of $\mathbb{R}^{2,1}$. In both cases the areas and normals to the faces of the polyhedra will be given by the spins $j_{ab}$, $s_{ab}$ and vectors $\vec{n}_{ab}$, $\vec{n}_{ab}^\tau$, $\vec{m}_{ab}$ appearing in equation \eqref{closure}. Turning to the bivector constraints \eqref{bivectors}, we may then define the vectors $N_a^{(1,\vec{0})}= \pi(\overline{g}_a)(1,\vec{0})$, $N_a^{(\vec{0},1)}=\pi(\overline{g}_a)(\vec{0},1)$, and consider the family of affine embeddings (we disregard translations)
    \begin{equation}
	\begin{gathered}
	    P_a^s \subset \mathbb{R}^3 \hookrightarrow  \hat{P}_a^{s} \subset \mathbb{R}^{3,1} \\
	     \text{ s.t. } \hat{P}_a^{s} \perp N_a^{(1,\vec{0})}\,, \; \hat{P}_{ab}^{s} \perp \pi(\overline{g}_a)(0,\vec{n}_{ab})
	\end{gathered}
	\end{equation}
	\begin{equation}
	\begin{gathered}
	    P_a^t \subset \mathbb{R}^{2,1} \hookrightarrow  \hat{P}_a^{t} \subset \mathbb{R}^{3,1} \\
	    \text{ s.t. } \hat{P}_a^{t} \perp N_a^{(\vec{0},1)}\,, \; \hat{P}_{ab}^{t} \perp
	    \begin{cases} \pi(\overline{g}_a)(\xi\vec{n}^\tau_{ab},0) \text{ if } P_{ab}^t \perp \vec{n}^\tau_{ab}\\ \pi(\overline{g}_a)(\xi\vec{m}_{ab},0) \text{ if } P_{ab}^t \perp \vec{m}_{ab}\end{cases}
	\end{gathered}
	\end{equation}
where $P_{ab}$ denotes the face of the polyhedron $P_a$ shared with $P_b$. The maps are such that the three-dimensional polyhedra are embedded in $\mathbb{R}^{3,1}$ perpendicularly to the four-normals $N_a^{(1,\vec{0})}$ and $N_a^{(\vec{0},1)}$.
The constraints in \eqref{bivectors} thus imply, respectively,
	\begin{equation}
	\begin{gathered}
	\hat{P}_{ab}^{s} \parallel \hat{P}_{ba}^{t}\,, \; A_{ab}=A_{ba}\,, \; \text{ for }ab\text{ s.l.} \\
	\hat{P}_{ab}^{t}\parallel \hat{P}_{ba}^{t}\,, \; A_{ab}=A_{ba}\,,\; \text{ for }ab\text{ t.l.}\,,
	\end{gathered}
	\end{equation}
where we denote by $A_{ab}$ the area of the polygonal face $P_{ab}$. Under this construction,  solutions to the critical point equations of non-degenerate boundary data describe different ways of gluing 3-dimensional polyhedra along their faces according to some fixed combinatorial structure, prescribed by the boundary data, such that the areas of glued faces agree - but not necessarily their polygonal shape. One obtains in this manner a polyhedral pseudo-complex\footnote{Polyhedral complexes are such that, by definition, the intersection of two polyhedra $P_1\cap P_2$ is a face of both. For a polyhedral pseudo-complex we require only that their intersection lies in the same plane as both faces.} with matching areas, frequently called a twisted geometry in the literature \cite{Freidel:2010aq,Dona2020}. Note that, for  the  particular  case  when  the  cellular  decomposition  $\Delta$  is  a triangulation by simplices, the critical point equations are enough to uniquely reconstruct a 4-simplex with shape matching at every polygonal face. This is because a 4-simplex is completely characterised up to isometries by its ten edge lengths, and these are fixed when tetrahedra are reconstructed from triangle areas through Minkowski's theorem. 

The above construction also endows the complex parameters $\lambda_{ab}$ and $\alpha_{ab}$ of section \ref{section:parameters} with a geometrical meaning. Equation \eqref{eq:ln} for heterochronal interfaces can be rewritten as
	\begin{equation}
	    \cosh{\left(\ln  \left|\frac{\lambda_{ba}}{\lambda_{ab}}\right|^2 - i\tau_{ab} \frac{\pi}{2}\right)}=\frac{\braket{N_a^{(1,\vec{0})}, N_b^{(\vec{0},1)}}}{||N_a^{(1,\vec{0})}|| \, ||N_b^{(\vec{0},1)}||}\,,
	\end{equation}
where we define $||\cdot||=\sqrt{||\cdot||^2}\in \mathbb{R}^+_0 \cup i \mathbb{R}^+_0$. Hence, according to our definition of Minkowski angles from appendix \ref{appendix:geometry}, we have that $\ln  \left|\frac{\lambda_{ba}}{\lambda_{ab}}\right|^2=\Re \theta_{N_a^{(1,\vec{0})}, N_b^{(\vec{0},1)}}$ (note that the real part of the angle is independent of the sign of the normals). In this case the complex parameters are associated to the real part of the dihedral angle between a space-like polyhedron $a$ and a time-like one $b$. In its turn, equation \eqref{eq:arg} for parachronal interfaces contains an inner product of vectors laying in a space-like plane, where the induced metric is Riemannian,
    \begin{equation}
        \cos\left(2\,\text{arg}\, \frac{\alpha_{ba}}{\alpha_{ab}} \right)= \frac{\braket{N_a^{(\vec{0},1)}, N_b^{(\vec{0},1)}}}{||N_a^{(\vec{0},1)}|| \, ||N_b^{(\vec{0},1)}||}\,,
    \end{equation}
and the discussion of subsection \ref{section:timelikeparameters} allows us to write $2\,\text{arg}\, \frac{\alpha_{ba}}{\alpha_{ab}}=\theta_{N_a^{(\vec{0},1)}, N_b^{(\vec{0},1)}}$. However, because normals to time-like 3-cells define an Euclidean plane, a key difference to the situation above and the cases discussed in \cite{Barrett2010, Kaminski2018a} arises: here the angle does depend on the sign of the normal vectors to the polyhedra, which, for a given solution to the critical points, may not all be outward-pointing. One can thus not interpret the angles $\theta_{N_a^{(\vec{0},1)}, N_b^{(\vec{0},1)}}$ as strictly dihedral.

\subsection{Symmetries of the solutions and the Cosine Problem}

In order to identify the possible symmetries of solutions to the critical point equations we first need to state the symmetries of the spin-foam actions \eqref{eq:actionspace},\eqref{eq:para_action} themselves. Just as for the actions of the remaining two kinds of interfaces analysed in \cite{Barrett2010,Kaminski2018a}, they satisfy the following symmetries:
\begin{itemize}
    \item Lorentz: a global action of $g \in \text{SL}(2,\mathbb{C})$ at every $g_a \mapsto g\, g_a$ and $\mathbf{z}_{ab} \mapsto {g^\dagger}^{-1}\mathbf{z}_{ab}$. This symmetry is gauge-fixed by the Dirac delta in \eqref{eq:vertex_amp}.
    \item Spin lift: a local transformation of $g_a \mapsto -g_a$.
    \item Spinor rescaling: a local transformation at each face $ab$ taking $\mathbf{z}_{ab} \mapsto \kappa \mathbf{z}_{ab}$, with $\kappa \in \mathbb{C}^*$.
\end{itemize}
These symmetries carry over to the critical point equations, as can be immediately seen (note that the special linear group elements $\pm g$ map to the same $\pi(g)$ under the canonical spin homomorphism). There is an apparent additional local symmetry of the bivector equations \eqref{bivectors} in taking $\pi(g_a) \mapsto -\pi(g_a)$, but since $-\pi(g_a) \notin \text{SO}^+(3,1)$ this Lorentz transformation does not lift to an $\text{SL}(2,\mathbb{C})$ element.

The easiest way to find possible additional symmetries of the equations which are not symmetries of the action is to make use of our geometrical reconstruction. As we have argued, a solution to those equations will determine a gluing of 3-dimensional polyhedra along their faces according to a given combinatorial prescription, such that the areas of glued faces match (such a gluing might be a polytope itself if the polygonal shapes of the faces happen to agree, but in general this need not be the case). One may then wonder how this object behaves under the the isometries of Minkowski space-time: these are translations, rotations, boosts, reflections and inversions. Among them, the only isometries not contained in the symmetries of the action are reflections with respect to some 3-dimensional subspace. The question arises of whether such transformations are indeed capable of inducing a different solution for the same boundary data, comprised of areas and 4-vectors \eqref{eq:4vec} determined by the coherent states. 
%We answer this by examining different possibilities:

To answer this, first denote by ${R_v}^\mu_{\;\nu}=\mathbbm{1} ^\mu_{\;\nu}-\frac{2v^\mu v_\nu}{\braket{v,v}}$ the reflection with respect to the subspace orthogonal to $v$. Now, under a global reflection of the polytope, a given bivector (which describes a face of a polyhedron) will transform as
\begin{equation}
\pi(g_a)^{\wedge 2} (N_a \wedge v_{ab}) \mapsto \left(R_v \, \pi(g_a) R_{N_a}\right)^{\wedge 2} (N_a \wedge v_{ab})\,,
\end{equation}
where $N_a$ and $v_{ab}$ denote any of the normals and vectors appearing in the bivector equations \eqref{bivectors}. That this is the case can be seen from the transformation property of the normal $\pi(g_a) N_a$, which in turn may be derived from how the edges $e$ of a polytope transform under a reflection $e \mapsto R_v e$:
\begin{align*}
    \pi(g_a) N_a &= \pi(g_a) \star(e_1\wedge e_2\wedge e_3)\\
    &=\star \,\pi(g_a)^{\wedge 3}(e_1\wedge e_2\wedge e_3) \\
    & \mapsto \star \,\left(R_v \,\pi(g_a)\right)^{\wedge 3}(e_1\wedge e_2\wedge e_3) \\
    &=\text{det}(R_v) \, R_v  \pi(g_a)\star(e_1\wedge e_2\wedge e_3) \\
    &= \, R_v  \, \pi(g_a) R_{N_a} N_a\,,
\end{align*}
where $e_1, e_2, e_3$ are three linearly independent edge vectors of the polyhedron associated to the bivector, and in commuting the Hodge star we have used equation \eqref{eq:hodge}. Furthermore, the resulting $\text{SO}^+(3,1)$ transformation admits a pre-image in $\text{SL}(2,\mathbb{C})$, as we now show. Let $v$ be some element of $\mathbb{R}^{3,1}$, and $g, \, i^{1-\delta_{\mu 0}}\sigma_\mu \in \text{SL(2},\mathbb{C})$. Then it holds that
\begin{align}
\label{eq:full_reflection}
\left[\pi\left(i^{1-\delta_{\mu 0}}\sigma_\mu(g^{-1})^\dagger i^{1-\delta_{\nu 0}}\sigma_\nu\right)v\right]^\alpha \sigma_\alpha &=\, \sigma_\mu(g^{-1})^\dagger\sigma_\nu v^\alpha \sigma_\alpha \sigma_\nu g^{-1}\sigma_\mu  \nonumber \\
&=\,\sigma_\mu(g^{-1})^\dagger (\Sigma_\nu v)^\alpha \sigma_\alpha  g^{-1}\sigma_\mu  \nonumber \\
&=\,\sigma_\mu J g (P \Sigma_\nu v)^\alpha \sigma_\alpha g^\dagger J^{-1} \sigma_\mu  \nonumber \\
&=\, \left[\pi(g)P\Sigma_\nu v\right]^\alpha \sigma_\mu J \sigma_{\alpha} J^{-1} \sigma_\mu  \nonumber\\
&=\,\left[P \Sigma_\mu \pi(g) \Sigma_\nu P v\right]^\alpha \sigma_\alpha  \,,
\end{align}
where $P=\text{diag}(1,-\vec{1})$ is the parity relation, $\Sigma_\nu$ is a $4\times4$ diagonal matrix of the following form,
\begin{equation}
    \Sigma_0=\mathbbm{1}\,, \quad (\Sigma_i)_{00}=(\Sigma_i)_{ii}=1=-(\Sigma_i)_{jj}\,,j\neq i\,,
\end{equation}
arising from commuting $\sigma_\nu$ and $\sigma_\alpha$, and we have used that $J \sigma_\mu J^{-1}=\text{det}(\sigma_\mu)\sigma_\mu$. Conjugation and inversion of $g$, followed by the adjoint action of the Pauli matrices, thus amounts to a reflection of the associated $\text{SO}^+(3,1)$ group element,
\begin{equation}
\label{eq:reflection}
i^{1-\delta_{\mu 0}}\sigma_\mu (g^{-1})^\dagger i^{1-\delta_{\nu 0}} \sigma_\nu \mapsto R_{e_\mu} \pi(g) R_{e_\nu}\,.
\end{equation}
Having understood how a reflection of the polyhedral pseudo-complex induces a transformation on the special linear matrices that define it, one can now study individually some of the different interfaces figuring in the model:
\begin{enumerate}
    \item Every polyhedron is space-like. The achronal bivector algebraic equations of \cite{Barrett2010}, adapted to our notation, read
    \begin{equation}
    \begin{gathered}
    (g_a^T)^{-1} \ket{+_{ab}}=\frac{\varsigma_{ba}}{\varsigma_{ab}}(g_b^T)^{-1} \ket{+_{ba}}\,, \\
    \overline{g}_a \ket{+_{ab}}=\frac{\overline{\varsigma}_{ab}}{\overline{\varsigma}_{ba}}\overline{g}_b \ket{+_{ba}}\,,
    \end{gathered}
    \end{equation}
    where $\varsigma_{ab}$ is, as in our analysis, the proportionality parameter in $\ket{g_a^T z_{ab}}=\varsigma_{ab}\ket{+_{ab}}$. Because every polygon and polyhedron is space-like, every normal is of the type $(1,0)$. Any reflection $R_{e_\mu}$ of the polytope induces a transformation of the group elements as $\pi(g_a) \mapsto R_{e_\mu} \pi(g_a) R_{e_0}$, corresponding in $\text{SL}(2,\mathbb{C})$ to $g_a \mapsto i^{1-\delta_{\mu 0}}\sigma_\mu (g_a^\dagger)^{-1}$. By making use of the global Lorentz gauge we may simply take $\mu=0$, and the resulting transformation can be understood as a parity operation (this is precisely the transformation found in \cite{Barrett2010}). One may then check that if $\{g_a, \varsigma_{ab}\}$ is a solution for given states $\{\ket{+_{ab}}\}$ then so is $\{(g_a^\dagger)^{-1}, \overline{\varsigma}_{ba}\}$.
    At the level of the bivector equations, geometrically more explicit, the existence of the second solution follows from the particular structure of the vectors associated with the boundary data, which remains invariant under a $R_{e_0}$ reflection:
    \begin{align}
        &\pi(\overline{g}_a)^{\wedge 2}  \left((1,\vec{0})\wedge (0,\vec{n}_{ab})\right) = \pi(\overline{g}_b)^{\wedge 2} \left((1,\vec{0})\wedge (0,\vec{n}_{ba})\right) \nonumber \\
        &\Leftrightarrow  \left[R_{e_0} \pi(\overline{g}_a) R_{e_0}\right]^{\wedge 2} \left((1,\vec{0})\wedge (0,\vec{n}_{ab})\right) = \left[R_{e_0} \pi(\overline{g}_b) R_{e_0}\right]^{\wedge 2}\left((1,\vec{0})\wedge (0,\vec{n}_{ba})\right)  \nonumber \\
        &\Leftrightarrow \pi\left((\overline{g}_a^{-1})^\dagger \right)^{\wedge 2}  \left((1,\vec{0})\wedge (0,\vec{n}_{ab})\right) = \pi\left((\overline{g}_b^{-1})^\dagger \right)^{\wedge 2}  \left((1,\vec{0})\wedge (0,\vec{n}_{ba})\right)\,, \nonumber
    \end{align}
where in the last line we used equation \eqref{eq:full_reflection}.
    \item Every polygon is time-like. In this case, the parachronal bivector algebraic equations are
    \begin{equation}
    \begin{gathered}
    (g_a^T)^{-1} \ket{l^+_{ab}}=\frac{\alpha_{ba}}{\alpha_{ab}}(g_b^T)^{-1} \ket{l^+_{ba}}\,, \\
    \overline{g}_a \sigma_ 3 \ket{l^+_{ab}}=\frac{\overline{\alpha}_{ba}}{\overline{\alpha}_{ab}}\overline{g}_b \sigma_3 \ket{l^+_{ba}}\,,
    \end{gathered}
    \end{equation}
    and all polyhedral normals are space-like and of the form $(\vec{0},1)$. A reflection of the polytope then acts as $\pi(g_a) \mapsto R_{e_\mu} \pi(g_a) R_{e_3}$, corresponding to $g_a \mapsto i^{1-\delta_{\mu 0}}\sigma_\mu (g_a^\dagger)^{-1}i \sigma_3$. Just as before, one can check than any solution $\{g_a ,\alpha_{ba}\}$ induces a second one $\{ (\overline{g}_a^{-1})^\dagger i\sigma_3 ,\overline{\alpha}_{ab}\}$.
    Precisely as in the space-like case, the second solution can be obtained from the bivector equations by considering an $R_{e_0}$ reflection, in agreement with  \cite{Kaminski2018a}:
    \begin{align}
        &\pi(\overline{g}_a)^{\wedge 2}  \left((\vec{0},1)\wedge (\xi\vec{m}_{ab},0)\right) = \pi(\overline{g}_b)^{\wedge 2} \left((\vec{0},1)\wedge (\xi\vec{m}_{ba},0)\right) \nonumber \\
        &\Leftrightarrow  \left[R_{e_0} \pi(\overline{g}_a) R_{e_3}\right]^{\wedge 2} \left((\vec{0},1)\wedge (\xi\vec{m}_{ab},0)\right) = \left[R_{e_0} \pi(\overline{g}_b) R_{e_3}\right]^{\wedge 2}\left((\vec{0},1)\wedge (\xi\vec{m}_{ba},0)\right)  \nonumber \\
        &\Leftrightarrow \pi\left((\overline{g}_a^{-1})^\dagger i\sigma_3 \right)^{\wedge 2}  \left((\vec{0},1)\wedge (\xi\vec{m}_{ab},0)\right) = \pi\left((\overline{g}_b^{-1})^\dagger i\sigma_3\right)^{\wedge 2}  \left((\vec{0},1)\wedge (\xi\vec{m}_{ba},0)\right) \nonumber\,.
    \end{align}
    \item The polyhedral pseudo-complex contains both types of polyhedra. Then at least some of the faces $ab$ will be described by the heterochronal equations
    \begin{equation}
    \begin{gathered}
    (g_a^T)^{-1} \ket{+_{ab}}=\frac{\lambda_{ba}}{\lambda_{ab}}(g_b^T)^{-1} \ket{\tau_{ba}}\,, \\
    \overline{g}_a \ket{+_{ab}}=\frac{\overline{\lambda}_{ab}}{\overline{\lambda}_{ba}}\overline{g}_b \tau \sigma_3 \ket{\tau_{ba}}\,.
    \end{gathered}
    \end{equation}
    A reflection of the polytope induces different transformations on $g_a$ and $g_b$, since these elements are associated to normals of the type $(1,\vec{0})$ and $(\vec{0},1)$, respectively. The individual transformations are as before, and given one solution $\{g_a,g_b,\lambda_{ab},\lambda_{ba}\}$  to the equations, a second one can be constructed as $\{(g_a^{-1})^\dagger, (g_b^{-1})^\dagger i\sigma_3,-i \tau \overline{\lambda}_{ba}, \overline{\lambda}_{ab}\}$.
    Regarding the bivector equation
    \begin{equation*}
        \pi(\overline{g}_a)^{\wedge 2}  \left((1,\vec{0})\wedge (0,\vec{n}_{ab})\right) = \pi(\overline{g}_b)^{\wedge 2} \left((\vec{0},1)\wedge (\xi\vec{n}_{ba}^\tau,0)\right)\,,
    \end{equation*}
    this transformation leads to
    \begin{align*}
        &\pi\left((\overline{g}_a^{-1})^{\dagger}\right)^{\wedge 2}  \left((1,\vec{0})\wedge (0,\vec{n}_{ab})\right)  = \pi\left((\overline{g}_b^{-1})^{\dagger} i\sigma_3\right)^{\wedge 2} \left((\vec{0},1)\wedge (\xi\vec{n}_{ba}^\tau,0)\right)\, \\
        &\Leftrightarrow \left[R_{e_0} \pi(\overline{g}_a) R_{e_0} \right]^{\wedge 2}  \left((1,\vec{0})\wedge (0,\vec{n}_{ab})\right)  = \left[ R_{e_0} \pi(\overline{g}_b) R_{e_3} \right]^{\wedge 2} \left((\vec{0},1)\wedge (\xi\vec{n}_{ba}^\tau,0)\right)\,,
    \end{align*}
    as had been found in \cite{Kaminski2018a}\footnote{The existence of a second solution for heterochronal interfaces eluded the authors in an earlier version of this paper, leading to the erroneous claim that the Cosine Problem would be absent in this case.}.
    
\end{enumerate}
We thus recover, for the Conrady-Hnybida extension, the well-known fact that any spin-foam critical point induces a second one, geometrically related to the first via a reflection of its underlying geometry. This second solution is the root of the so-called ``Cosine Problem", first described in the simplicial achronal analysis of \cite{Barrett2009a,Barrett2010}: as we mentioned before, in case the spin-foam model is derived from a simplicial triangulation, the critical point equations determine a unique simplex up to isometries; a given set of achronal boundary data then induces two solutions related by a reflection, and the asymptotic expansion of the spin-foam amplitude was shown to contain two terms with the generic form
\begin{equation}
    A_v \sim e^{i S_R} + e^{-i S_R} \sim \cos S_R \,,
\end{equation}
where $S_R$ denotes the Regge action, rather than the expected single term.
In the context of the CH extension, and taking into account the previous discussion, one may immediately see from the expression for the action at a critical point \eqref{eq:action} that a given pair of critical points still induces complex-conjugated Regge actions. Whether this leads to a cosine term in the asymptotic expansion of the amplitude, however, cannot strictly be discerned at this stage, as this would require an explicit expression for the asymptotics, and in particular for the functions that multiply the exponential of the action.

\subsection{The action at the critical points}

We would like to conclude our analysis with a description of the general form of the asymptotic expansion of equation \eqref{eq:asympt}. As we have argued in section \ref{section:closure_t}, the critical points of parachronal actions are not isolated, rather being elements of some critical surface in the integration domain, characterised by all possible values of $\Im \beta_{ab}$. Moreover, the prefactor of equation \eqref{eq:pref_p} contains branch cuts, and indeed a branch point of half order coinciding with the critical points of the action at $\Re \beta_{ab}=0$. Unfortunately, results concerning the uniform asymptotic expansion of generic higher-dimensional integrals with such structures are scarce throughout the literature. An analysis of a one-dimensional integral where the critical point and branch point coalesce can be found in \cite{Ciarkowski1989}, and one could use the results therein to formulate an heuristic expectation of the relative contribution of different critical configurations; but such an heuristic understanding would not be sufficiently satisfactory. That parachronal interfaces in particular seem to present such a more complex structure when compared to all other cases seems to indicate, once again, that their implementation in the spin-foam model might benefit from further refinement.

In the absence of the possibility of having an explicit expression for the asymptotic expansion, it is still illuminating to evaluate the heterochronal and parachronal actions figuring in the exponential part of \eqref{eq:asympt} at the geometrical critical points. Disregarding all other possible cases of interfaces, the full action associated to the $\Delta^*$ vertex reads
\begin{equation*}
    S_v=i \gamma \left[\sum^{\text{ hetero.}}_{ab} j_{ab}\, \ln\left| \frac{\lambda_{ba}}{\lambda_{ab}}\right|^2 +  \sum^{\text{ para.}}_{ab} s_{ab}\, 2\,\text{arg}\, \frac{\alpha_{ba}}{\alpha_{ab}} \right] + i \left[\sum^{\text{ hetero.}}_{ab} j_{ab}\, 2\,\text{arg}\, \frac{\lambda_{ab}}{\lambda_{ba}} +   \sum^{\text{ para.}}_{ab} s_{ab}\, \ln\left| \frac{\alpha_{ba}}{\alpha_{ab}}\right|^2  \right]\,,
\end{equation*}
and equivalently, referring to the geometrical discussion at the beginning of the present section,
\begin{align}
\label{eq:action}
    S_v=i \gamma \Biggl[\sum^{\text{ hetero.}}_{ab} j_{ab} \, \Re \theta_{N_a^{(1,\vec{0})}, N_b^{(\vec{0},1)}} &+   \sum^{\text{ para.}}_{ab} s_{ab} \, \theta_{N_a^{(\vec{0},1)}, N_b^{(\vec{0},1)}} \Biggr] \nonumber \\
    &+ i \Biggl[\sum^{\text{ hetero.}}_{ab} j_{ab}\, 2\,\text{arg}\, \frac{\lambda_{ab}}{\lambda_{ba}} +   \sum^{\text{ para.}}_{ab} s_{ab} \, \ln\left| \frac{\alpha_{ba}}{\alpha_{ab}}\right|^2  \Biggr]\,.
\end{align}
The reader may recognise in the first line of the previous equation the boundary area-Regge action associated to a single polytope\footnote{Note that the presence of $i$ in the action is necessary for the amplitude to be Lorentzian, since we did not include the imaginary unit explicitly in the argument of the exponential appearing in \eqref{eq:asympt}.}. There one sums over the dihedral angles and areas of polyhedra sharing a face. We remark again that, for the case of parachronal interfaces, the angle $\theta_{N_a^{(\vec{0},1)}, N_b^{(\vec{0},1)}}$ need not be strictly dihedral, insofar as the normals may not all point outward to the reconstructed polyhedra; this may be considered one more obstacle for the parachronal model. The second line is composed of an additional term that does not seem to have a direct geometrical meaning, and in that sense it may be considered a purely-quantum contribution to the asymptotics. The parachronal action once more agrees with the approximation in \cite{Liu2019}, and the heterochronal action recovers the result of \cite{Kaminski2018a}.

To complete the stationary phase analysis (to leading order) it remains to compute the factor in front of the exponentiated action for each critical point. In previous works \cite{Barrett2009a,Kaminski2018a} this final step is straightforward because the integrand satisfies certain regularity conditions, and in particular $k$-differentiability, where $k$ is a positive even integer. Hormander's theorem \cite{Hörmander2003} can be used, which generalizes the stationary phase approximation to higher dimensions. The pre-factor is given by the inverse square-root of the determinant of the Hessian matrix of the action, which gives the typical scaling behaviour of $\lambda^{-\frac{1}{2}}$ per integration variable, $\lambda$ being the uniform scaling parameter.

Unfortunately, due to the presence of branch points in the integrand, this same theorem cannot be applied to spin foam vertex amplitudes containing parachronal faces. Results for such cases exist in one dimension \cite{miller2006applied, Ciarkowski1989}, where the integrand contains branch points which are allowed to coalesce with critical points of the exponential function; compared to the more regular case, these integrals have a different scaling behaviour of $\lambda^{-\frac{1}{4}}$. However, to our knowledge, there is no theorem for general higher dimensional integrals. The authors of \cite{Liu2019} propose to overcome this issue by performing the stationary phase analysis iteratively, i.e. one variable at a time, effectively reducing the problem to a series of one-dimensional stationary phase analyses. To do so, it is necessary to permute the order of integrations, and this is straightforwardly possible if the integral absolutely converges. The authors argue that this is the case if one restricts the integration to a compact region around a critical point, and they have recently argued \cite{Hongguang_git} that the critical points of the iterative and variational analysis agree. This is indeed encouraging, and while we consider this proposal to be reasonable, we believe the matter requires a more thorough investigation: it must be shown that restricting the integration to compact regions around the critical points is a legitimate procedure, in particular in light of additional non-geometric critical points. Moreover, the iterative evaluation of each integral must be actually carried out to derive an expression for the asymptotic amplitude, and with it its overall scaling behaviour. We hope that the comparatively more concise expressions derived in this paper can facilitate such an analysis, leading eventually to a complete understanding of the asymptotic behaviour of the parachronal amplitude. 

\section{Discussion and outlook} \label{sec:Discussion}

In this article we have revisited the asymptotic expansion of the Lorentzian EPRL spin-foam model in the Conrady-Hnybida extension \cite{Conrady2010}. This extension lifts the original restriction of having only space-like polyhedra and polygons in the cellular decomposition, allowing instead also for time-like polyhedra with both time- and space-like faces. The asymptotic analysis for the CH extension has been studied previously in great detail \cite{Kaminski2018a,Liu2019}, and the present work both complements and expands the existing literature. In the following we briefly list the various results of the present article:
\begin{itemize}
    \item \textit{New parametrization of time-like faces}:
    Our key new result is a different definition of the $\text{SU}(1,1)$ coherent states for parachronal faces. Instead of the Conrady-Hnybida construction, where these states have real eigenvalues and are labelled by $\lambda = \sqrt{s^2 + \frac{1}{4}}$, we allow for generalised eigenstates with complex eigenvalues $\lambda = -\frac{i}{2} -s$. Thanks to this prescription, the respective inner products of the coherent states greatly simplify, such that we find a closed and exact formula for the integrand entering the asymptotic expansion. In this manner we avoid having to resort to an approximation for the parachronal amplitudes, as was done in \cite{Liu2019}.

    Nevertheless, we would like to emphasise that we understand this alternate prescription as a different parametrization of the \textit{same} model: we still choose coherent states that minimise the variance of the relevant Casimir operator, and simply expand the amplitude with a completeness relation constructed from these states. That the resulting amplitudes should still describe the same physics is supported by the form of the critical points of the amplitudes in our parametrization, which agree with previous results \cite{Liu2019}.
    
    \item \textit{Minkowski's theorem in Lorentzian signature}:
    A necessary ingredient for extending the asymptotic analysis of the Lorentzian EPRL model to arbitrary cell complexes is a Lorentzian version of Minkowski's theorem for three-dimensional polyhedra. To our best knowledge, a generalisation of Minkoski's theorem to polyhedra in Minkowski space did not exist in the literature. We have thus included in appendix \ref{appendix:geometry} a discussion of familiar geometrical concepts (e.g. orthogonality, half-spaces and angles\footnote{See also \cite{Sorkin:2019llw,Asante:2021zzh} for alternative definitions of Lorentzian dihedral angles.}) in Minkowski space, and presented a straightforward proof for the existence and uniqueness of Minkowskian polyhedra given a closure condition. We have also shown, as a complement, that such Minkoskian polyhedra are always rigid when convex. A proof for an analogous Minkowski theorem regarding higher-dimensional Minkowskian polytopes is still missing, but not strictly necessary for studying the spin-foam framework in 4 dimensions.

    As we have argued, the closure relations (for geometric critical points) obtained asymptotically induce 3-dimensional polyhedra, while the bivector equations encode how these are glued together into a 4-dimensional object. We remark once more that the faces of glued polyhedra must have the same area but not necessarily the same polyhedral shape, in line with previous results in Riemannian \cite{Bahr:2015gxa,Bahr:2016hwc,Dona:2017dvf} and Lorentzian (space-like polyhedra) models \cite{Dona2020}.

    \item \textit{Light-like vectors in closure condition and general polyhedra}:
    When taking into account parachronal interfaces in the asymptotic analysis, we find closure conditions \eqref{eq:closure1_p}, \eqref{eq:closure2_p} that, in addition to the expected time- and space-like vectors (on which the boundary coherent states are peaked) involve also null vectors. The presence of these null vectors hinders a reconstruction of polyhedra via Minkowski's theorem, and thus it constitutes an obstacle to recovering discrete general relativity from the EPRL model with the CH extension. However, the contribution of null vectors to the closure constraints is controlled by the complex factors $\beta_{ab}$, which are coefficients of spinors expanded in the $l_{ab}^\pm$ basis as in \eqref{eq:expand}. A restriction to geometrical configurations can thus be made by requiring of all $\beta_{ab}$ to vanish. This must be done by analysing both $\Re \beta_{ab}$ and $\Im \beta_{ab}$, which we now discuss. 
    
    In  \cite{Liu2019} the authors consider restricting the causal character of the model to time-like tetrahedra with one orthochronal and three parachronal faces, and show that in this case $\Re \beta_{ab} = 0$. Note that such a restriction forbids ``sliced triangulations'' as in causal dynamical triangulations \cite{Ambjorn:2012jv}. In contrast, we propose to lift these causal restrictions and instead assume that all $\tau_{ab}$ for parachronal faces are identical; as we have shown, this is sufficient to obtain $\Re \beta_{ab} = 0$. This approach allows us to generalise the argument of \cite{Liu2019} to polyhedra of valence higher than four and, crucially, to not restrict the causal character of the cellular decomposition. The choice of $\tau_{ab}$ for parachronal faces does not affect the geometry of the reconstructed polyhedra, and it does not enter into the formula for the associated action. 

    Regarding $\Im \beta_{ab}$, the authors of \cite{Liu2019} set it to zero, by arguing that the integrand of their amplitude does not depend on this factor and is thus considered ``gauge''. In our case the integrand \eqref{eq:pref_p} explicitly depends on $\Im \beta_{ab}$, yet in both approaches there is no reason to expect of configurations with $\Im \beta_{ab}=0$ to be sub-dominant. Indeed, multiple configurations with non-vanishing $\Im \beta_{ab}$, not necessarily allowing for reconstructing polyhedra, should still contribute to the amplitude.

    The impossibility of naturally restricting the dominant configurations to those that allow for reconstructing geometry indicate that the Lorentzian EPRL model in the Conrady-Hnybida extension is not sufficiently constrained. This is supplemented by the unexpected possibility that the choice of $\tau_{ab}$ may heavily influence the existence of additional critical points, despite its value \textit{not influencing} the geometrical (and thus physical) content of the coherent states entering the model. We leave the investigation of these observations for future research.
    
    \item \textit{Asymptotic formula and branch points}:
    We have not presented a closed formula for the asymptotic expansion. In agreement with the analysis in \cite{Liu2019}, the function \eqref{eq:pref_p_exp} appearing in the integrand for parachronal faces is such that its branch points of order $1/2$ coincide with the critical points of the amplitude at $\beta_{ab}=0$. On top of this there is the problem that the generic (not necessarily geometrical) critical configurations may not be isolated, but may rather constitute a critical hypersurface. Unfortunately, the stationary phase analysis of generic multidimensional integrals with multiple branch cuts and branch points coalescing with critical points is barely explored in the literature, and the present system requires a careful and thorough analysis. While there are heuristic indications suggesting that for geometric critical points the branch points might coalesce with roots in the integrand, we cannot comment further on this at this stage. In \cite{Liu2019}, the authors propose to perform the stationary phase analysis iteratively. For this to be viable, one must permute the order of integrations. To do so, the authors assume that the integral can be restricted to compact regions around the geometric critical points, in which the integral is absolutely convergent. The stationary phase approximation would then be performed one variable at a time. Recently \cite{Hongguang_git} the authors confirmed that,  in the context of their analysis \cite{Liu2019}, critical points in the iterative and variational procedures agree, which is encouraging. The explicit computations implementing this suggestion have  however not been carried out as of yet. We leave this intriguing question for another time.
    
    \item \textit{Dihedral angles at parachronal faces:}
    One further issue we must address is the one of dihedral angles at parachronal faces, i.e. the angle at time-like faces shared by time-like polyhedra. Since both polyhedra are time-like, their normals are space-like. Thus, these normals lie in an Euclidean plane and are related to one another by an $\text{SO}(2)$ rotation. The complex factors $\alpha_{ab}$ and $\alpha_{ba}$, figuring in formula \eqref{eq:action} for the action at the critical points, relate to the reconstructed normals to the polyhedra. There is however no way to ensure that these reconstructed normals are all outward or inward pointing, meaning that the angles obtained from $\alpha_{ab}$ and $\alpha_{ba}$ may not be always interpreted as dihedral. 

    The problem is absent for all other types of interfaces. For those cases the normals lie in a Minkowski plane, and the action at the critical points depends solely on the real part of the angle, which remains the same if a given normal is inverted.
\end{itemize}

Having listed our findings, we now turn towards a more speculative outlook for some of the points made above. Evidently, the mismatch between the asymptotic behaviour of the EPRL model and our semi-classical expectations, brought about by the extension to parachronal interfaces, constitutes a serious problem. It is interesting to note that the authors of \cite{Kaminski2018a} hinted at the possibility that such an issue could occur: for time-like tetrahedra and time-like interfaces, the causal character of rotation and boost generators can lead to cases in which the simplicity constraints cannot be satisfied. It is moreover the case, as pointed out originally by Conrady \cite{Conrady2010a}, that the master constraint, which is supposed to enforce simplicity, fails to do so already at the classical level; the problem is ultimately due to the properties of the spectrum of the $\text{SU}(1,1)$ Casimir operator. It would be interesting to investigate whether this is the cause for the observed non-geometrical critical points and how the model can be modified to suppress them.

%However, even if we restrict the model for parachronal faces by hand to just the geometrical critical points given by $\beta_{ab} = 0$, the asymptotic analysis is not complete. While the action evaluated on the critical points is known, the pre-factor determining the scaling behaviour of the amplitude is not. This difficulty is rooted in the existence of branch cuts in the integrand coalescing with the geometrical critical points; stationary phase analysis for such cases for more than one-dimensional integrals are scarce in the literature and thus requires careful analysis. We hope that the simpler form of the amplitude thanks to the parametrization presented in this paper can help facilitate such an analysis. {\color{red} The authors of \cite{Liu2019} propose to perform this analysis iteratively, such that one computes a one-dimensional integral after another. This idea requires the convergence of the integral (in a compact region around critical points, if only these critical points contribute) and has not been performed yet. In the light of additional (non-geometric) critical points, it will be interesting to investigate this proposition further.} Moreover, once the scaling behaviour is known it will be interesting to see whether geometric and non-geometric critical points have different scaling behaviour, determining their contribution in the semi-classical limit, which may provide further motivation for refining the construction of spin foam models with time-like polyhedra and polygons.

In a nutshell, spin-foams with time-like building blocks in the extended EPRL model lead to several promising results and intriguing features, yet also to a considerable number of open questions that must be addressed. Particularly concerning is the generic appearance of non-Regge-like critical points in addition to the anticipated Regge-like ones corresponding to the boundary data. This fact also obstructs the direct connection to effective spin-foam models \cite{Asante:2020qpa,Asante:2021zzh} (without further assumptions), which crucially rely on the asymptotic formula of spin-foams. Nevertheless, since Regge-like critical points are permitted in the asymptotic limit, the effective models could efficiently explore the ``desired'' part of the spin-foam path integral for large triangulations, and may be modified in the future once the correct scaling behaviour of the amplitudes is known. Last but not least, we note that better understanding these extended Lorentzian models might allow us to make more direct contact with other approaches to quantum gravity that heavily rely on the causal structure built into their construction, such as causal dynamical triangulations \cite{Ambjorn:2012jv} or causal set theory \cite{Surya:2019ndm}.

\section*{Acknowledgements}

The authors would like to thank Hongguang Liu for valuable input on an earlier version of this paper, and Leonardo Garc\'{i}a-Heveling for useful criticism of the appendix on polyhedral geometry in Minkowski space. J.D.S. benefited from insightful discussions with Maximilian H. Ruep on the subject of generalised eigenstates. S.St. would like to thank Benjamin Bahr and Bianca Dittrich for enlightening discussions on the results of this work and their interpretation.

J.D.S. and S.St. are funded by the Deutsche Forschungsgemeinschaft (DFG, German Research Foundation) - Projektnummer/project-number 422809950.

%\newpage
\appendix
\section{Generalized Eigenstates of a non-comapct \texorpdfstring{$\text{SU}(1,1)$}{SU(1,1)} generator}
\label{appendix:generalized}

The extension of the EPRL spin-foam model to time-like polyhedra obtained in \cite{Conrady2010} requires one to make use of certain generalised eigenstates \cite{Lindblad1970a} of a non-compact generator of $\text{SU}(1,1)$. With the purpose of slightly modifying these states in our analysis, we will start by reviewing the representation theory of $\text{SU}(1,1)$, and later clarify some points regarding the construction of such generalised states.

\subsection{Unitary irreducible representations of \texorpdfstring{$\text{SU}(1,1)$}{SU(1,1)}}

The algebra of $\text{SU}(1,1)$ is spanned by the generators $\{J_3,K_1,K_2\}$, defined with the standard Pauli matrices as $\{\sigma^3/2, i \sigma^1/2,i \sigma^2/2\}$ in the fundamental representation. Denoting for convenience $\vec{F}=\left(J_3,K_1,K_2\right)$, the Casimir element of the algebra reads $F^2=J_3^2-K_1^2-K_2^2$. There are two families of unitary irreducible representations labelled by eigenvalues of $Q$ and $J_3$, called the \textit{continuous} and \textit{discrete series}. In the case of the discrete series, the representation space $\mathcal{D}^{k, \alpha}$ is spanned by eigenstates
\begin{equation}
\begin{gathered}
F^2 \ket{k\, m}=k(k-1) \ket{k\,m}\,,\\
J_3  \ket{k\,m} = m  \ket{k\,m} \,,
\end{gathered}
\end{equation}
and it is labelled by a positive half-integer $k\in \mathbb{N}/2$ and by a sign $\alpha=\pm$. Possible $J_3$ eigenvalues are determined by $\alpha$ as
\begin{equation}
m=\alpha k,\, \alpha(k+1), \, \alpha(k+2), \, ... \,.
\end{equation}
For the continuous series, the Hilbert space $\mathcal{C}_s^\epsilon$ is spanned by the eigenstates
\begin{equation}
\begin{gathered}
F^2 \ket{s\, m}=j(j+1) \ket{s\,m}\,, \quad j=-\frac{1}{2}+is\,, \quad s \in \mathbb{R}_0^+\\
J_3  \ket{s\,m} = m  \ket{s\,m} \,,
\end{gathered}
\end{equation}
and it is labelled by a continuous value $s$ and by $\epsilon\in \{0,\frac{1}{2}\}$. The possible $J_3$ eigenvalues $m$ depend on $\epsilon$ as
\begin{equation}
m=\epsilon,\, \epsilon\pm1,\, \epsilon \pm2,\, ...\,.
\end{equation}
All the eigenstates described above, for both families of representations, are orthonormal. We refer the reader to \cite{Conrady2011} for the matrix elements of the generators in these bases. Finally, for both series of representations, there is an outer automorphism of the algebra $(J_3,K_1,K_2) \mapsto (-J_3, K_1, -K_2)=P F^i P^{-1}$, which for the continuous series can be realized as
\begin{equation}
    P \ket{s\,m}=e^{i \pi m} \ket{s,\,-m}\,.
\end{equation}

\subsection{Generalized eigenstates of \texorpdfstring{$K_1$}{K1} with complex eigenvalues}
\label{appendix:generalized_2}

Rather than constructing the representations of $\text{SU}(1,1)$ using the eigenstates of $J_3$, one may choose to define them from eigenstates of $K_1$ or $K_2$. This, however, introduces some additional subtleties due to the non-compactness of the generators, as they have no eigenstates in $\mathcal{C}_s^\epsilon$. Such representations were constructed in \cite{Lindblad1970a} using generalized eigenstates of $K_1$ through the machinery of Gelfland triples $\mathcal{D} \subset \mathcal{C}_s^\epsilon \subset \mathcal{D}'$, where $\mathcal{D}$ is a certain dense set in $\mathcal{C}_s^\epsilon$. The crux of the construction lies in the extension of $K_1$ to the containing space $\mathcal{D}'$ where the generalized eigenstates are defined, such that there exists a well-defined scalar product
\begin{equation}
\braket{f, \phi}=\overline{\braket{\phi, f}}
\end{equation}
between functions $\phi \in \mathcal D$ and distributions $f \in \mathcal{D}'$. One then shows that a nuclear spectral theorem holds, that is, one has eigenstates in $\mathcal{D}'$,
\begin{equation}
\label{eq:K1}
\begin{gathered}
    K_1 \ket{j\, \lambda \sigma}=\lambda \ket{j\, \lambda\, \sigma}\,, \\
    P \ket{j\, \lambda \sigma}=(-1)^\sigma \ket{j\, \lambda\, \sigma}\,,
\end{gathered}
\end{equation}
where $\lambda \in \mathbb{R}$ and $\sigma \in \{0,1\}$, such that they satisfy orthogonality and completeness\footnote{Note that, as is always the case with generalized states, an expression like $\sum_\sigma \int \diff \lambda\, \ket{j\, \lambda \, \sigma} \bra{j\, \lambda \, \sigma}=\mathbbm{1}_j$ is strictly meaningless, and has to be understood formally.}:
\begin{equation}
\label{eq:gen_orth}
\sum_m \braket{j\, \lambda \, \sigma | j\,m}\braket{j\, m | j\, \lambda' \, \sigma}=\delta(\lambda-\lambda')\,,
\end{equation}
\begin{equation}
\label{eq:gen_comp}
\sum_\sigma \int_\mathbb{R} \diff \lambda\, \braket{j\, m' | j\, \lambda \, \sigma} \braket{j\, \lambda \, \sigma | j\,m}=\delta_{m m'}\,.
\end{equation}
We now slightly add to this construction by making the straightforward observation that the eigenvalues in \eqref{eq:K1} can actually be taken to be complex. The author of \cite{Lindblad1970a} had already noted that the differential equations characterizing $\braket{j\, \lambda \, \sigma | j\,m}$ admit complex values of $\lambda$, but remarked that equations \eqref{eq:gen_orth} and \eqref{eq:gen_comp} would not hold in that case. It is however straightforward to show, by using the integral expression for the functions $\braket{j\, \lambda \, \sigma | j\,m}$ in \cite{Lindblad1970a}, that states associated to complex eigenvalues still satisfy the relations
\begin{equation}
\sum_m \braket{j\, \overline{\lambda} \, \sigma | j\,m}\braket{j\, m | j\, \lambda' \, \sigma}=\delta(\lambda-\lambda')\,,
\end{equation}
\begin{equation}
\sum_\sigma \int_{\mathbb{R}+i \alpha} \diff \lambda\, \braket{j\, m' | j\, \lambda \, \sigma} \braket{j\, \overline{\lambda} \, \sigma | j\,m}=\delta_{m m'}\,,
\end{equation}
where now $\lambda$ is constrained to have a constant imaginary part $i \alpha$. Quite remarkably, that these relations must now involve the complex conjugated eigenvalue follows from the self-adjointness of $K_1$ in $\mathcal{C}_s^\epsilon$. Denoting the distribution associated to a generalized eigenstate by $f_{\lambda}$, we find
\begin{align*}
    \braket{K_1\phi, \psi}_{\mathcal{C}_s^\epsilon}&=\sum_\sigma \int_{\mathbb{R}+i \alpha} \diff \lambda\;  \overline{K_1 f_{\overline{\lambda}}(\phi)} f_\lambda(\psi)\\
    &=\sum_\sigma \int_{\mathbb{R}+i \alpha} \diff \lambda \; \lambda \; \overline{ f_{\overline{\lambda}}(\phi)} f_\lambda(\psi) \\
    &=\braket{\phi, K_1 \psi}_{\mathcal{C}_s^\epsilon}\,,
\end{align*}
so that $K_1$ is self-adjoint, as necessary for consistency. Finally, a completeness relation for coherent states is immediately implied from the previous relations, since
\begin{align}
\int \diff g \; D^j(g)\ket{j \lambda \sigma}\bra{j \overline{\lambda'} \sigma} D^j(g)^\dagger&=\sum_{m, m', n, n'} \ket{j m}\bra{j m'} \int \diff g \; D^j_{ m n}(g) \overline{D^j_{ m' n'}(g)}\; \braket{j n|j \lambda \sigma} \braket{j \overline{\lambda'} \sigma | j n'} \nonumber \\
&= \sum_{m} \ket{j m}\bra{j m}\; \sum_n \braket{j \overline{\lambda'} \sigma|j n} \braket{j n|j \lambda \sigma} \mu_\epsilon(s)^{-1} \nonumber \\
&= \frac{\mathbbm{1}_j \, \delta(\lambda-\lambda')}{\mu_\epsilon(s)}\,,
\end{align}
where in the second line we used the orthogonality of $\text{SU}(1,1)$ matrix coefficients (see section \ref{appendix:principal_series}).

A formula for the matrix coefficients $D^{j}_{m \lambda \sigma}(v)=\braket{j,m| D^{j}(v)|j,\lambda, \sigma}$ in the eigenbasis of $K_1$ can be found in \cite{Lindblad1970}, and reads
\begin{equation*}
\label{coef11cont}
\begin{gathered}
D^{j}_{m \lambda \sigma}(v)=S^j_{m \lambda \sigma}\left(T^j_{m \lambda} F^j_{m \lambda}(v) -(-1)^\sigma T^j_{-m \lambda} F^j_{-m \lambda}(\bar{v})\right) \,, \\
S^j_{m \lambda \sigma}=\sqrt{\frac{\Gamma(m-j)}{\Gamma(m+j+1)}}\frac{2^{j-1}\Gamma(-j-i \lambda)}{i^\sigma\sin\left(\frac{\pi}{2}(-j+i \lambda+\sigma)\right)}\,, \quad T^j_{m\lambda}=\frac{1}{\Gamma(-m-j)\Gamma(m-i\lambda+1)}\,,
\end{gathered}
\end{equation*}
\begin{align}
F^j_{m \lambda}(v)=2^{-m}&(v_1+v_2)^{m-i\lambda}(v_1-v_2)^{m+ i\lambda} \cdot \nonumber \\
&\cdot {}_2F_1\left(-j+m,m+j+1,m-i \lambda+1,\frac{(\bar{v_1}-\bar{v_2})({v_1}+{v_2})}{2}\right)\,,
\end{align}
where $_2 F_1$ denotes the hypergeometric function.

\subsection{A useful expression for the coefficients at \texorpdfstring{$\lambda=\overline{ij}$}{conjugate eigenvalues}}
\label{appendix:generalized_3}

We close this section with a derivation of an expression for the coefficients $\braket{j,m|D^j(g)|j,\overline{ij},\sigma}$. According to \cite{Lindblad1970a}, there exists a ladder operator
\begin{equation}
\label{eq:ladder}
F^+ \ket{j,\lambda,\sigma}=i(-j-1+i \lambda) \ket{j,\lambda+i,\hat{\sigma}}\,,
\end{equation}
taking $\lambda$ to $\lambda+i$ and $\sigma$ to $\hat{\sigma}=(\sigma+1)_{\text{mod}\,2}$, with $F^+=J_3+K_2$. Thus we have that
\begin{equation}
\ket{j,\overline{ij},\sigma}=\frac{i}{2j+1} F^+ \ket{j,ij,\hat{\sigma}}\,,
\end{equation}
and the matrix coefficients then read
\begin{align}
    \braket{j,m|D^j(g)|j,\overline{ij},\sigma}&=\frac{i}{2j+1} \braket{j,m|D^j(g) F^+|j,ij,\hat{\sigma}} \nonumber\\
    &=\frac{i}{2j+1} \braket{j,m|D^j(g) (D'^j(J_3)+D'^j(K_2))|j,ij,\hat{\sigma}} \nonumber\\
    &=\frac{1}{2j+1}\frac{\diff}{\diff t} \Bigr \rvert_{t=0} \left[D^j_{m,ij,\hat{\sigma}}(g e^{i t J_3}) + D^j_{m,i j,\hat{\sigma}}(g e^{i t K_2})\right]\,.
\end{align}
In turn, the coefficients $D^j_{m,\lambda,\sigma}(g)$ considerably simplify at $\lambda=i j$, as indeed we find
\begin{equation}
    D^j_{m,ij,\sigma}(g)=A^j_{m,\sigma} \left(\frac{g_1+g_2}{\sqrt{2}}\right)^{j+m} \left(\frac{\overline{ g_1+g_2}}{\sqrt{2}}\right)^{j-m}\,,
\end{equation}
with $A^j_{m,\sigma}=S^j_{m,ij,\sigma}\left(T^j_{m,ij}-(-1)^\sigma T^j_{-m,ij}\right)$. Hence the dual coefficients\footnote{We name these coefficients ``dual'' because they are calculated at $\overline{\lambda}$.} can be written as
\begin{align}
    D^j_{m,\overline{ij},\sigma}(g)=\frac{i}{2j+1} A^j_{m,\hat{\sigma}} \left(\frac{g_1+g_2}{\sqrt{2}}\right)^{j+m} &\left(\frac{\overline{ g_1+g_2}}{\sqrt{2}}\right)^{j-m} \cdot \nonumber \\
   & \cdot \left[(j+m)\frac{g_1-g_2}{g_1+g_2}-(j-m)\frac{\overline{g_1-g_2}}{\overline{g_1+g_2}} \right]\,.
\end{align}

\section{Unitary irreducible representations of \texorpdfstring{$\text{SL}(2,\mathbb{C})$}{SL(2,C)}}
\label{appendix:principal_series}
We collect in this section general results concerning the representation theory of the special linear group. Recall that the algebra $\mathfrak{sl}(2,\mathbb{C})$ is spanned by the generators $J^i=\sigma^i/2$ and $K^i=i \sigma^i/2$ in the fundamental representation, with the commutators
\begin{equation}
    [J^i,J^j]=i \epsilon^{ijk} J^k\,, \quad [K^i,K^j]=-i \epsilon^{ijk} K^k\,, \quad [J^i,K^j]=i \epsilon^{ijk} K^k\,.
\end{equation}
The representations of $\text{SL}(2,\mathbb{C})$ are constructed on the space $\mathcal{D}_\chi$ of homogeneous functions of two complex variables \cite{Gelfand1963},
\begin{equation}
\begin{gathered}
F: \mathbb{C}^2 \setminus \{{0}\} \rightarrow \mathbb{C} \\
F(\alpha z_1, \alpha z_2)=\alpha^{n_1 -1} \bar{\alpha}^{n_2 -1} F(z_1, z_2) , \, \alpha \in \mathbb{C}\, ,
\end{gathered}
\end{equation}
characterized by two parameters $(n_1,n_2) \in \mathbb{C}^2$. Here we focus on the so-called \textit{principal series representations}, where one restricts to the case $n_1=\overline{n}_2$. For convenience it is customary to redefine $n_1= (-n +i \rho)/2$, $n_2= (n+ i \rho)/2$\footnote{This is the notation used in \cite{ruhl1970the}, but other choices do exist in the literature \cite{Gelfand1963,Naimark2014}}, with $n \in \mathbb{Z}$ and $\rho \in \mathbb{R}$, and to collect both labels in $\chi=(n, \rho)$. Such representations act on $\mathcal{D}_\chi$ by the group's natural action on $\mathbb{C}^2$, that is
\begin{equation}
\label{rep}
\begin{gathered}
D^\chi: \text{SL}(2,\mathbb{C}) \rightarrow U(\mathcal{D}_\chi) \\
D^\chi(g) F(\mathbf{z}) = F(g^T \mathbf{z}) \,,
\end{gathered}
\end{equation}
and $D^\chi$ is indeed unitary and irreducible under the inner product
\begin{equation}
\begin{gathered}
\label{eq:innerproduct}
\langle F_1, F_2 \rangle =  \int_{\mathbb{C}P} \bar{F_1}(\mathbf{z}) F_2(\mathbf{z}) \, \omega\,, \\
\omega =\frac{i}{2} (z_2 \diff z_1 -z_1 \diff z_2) \wedge (\bar{z_2} \diff \bar{z_1} -\bar{z_1} \diff \bar{z_2}) \, ,
\end{gathered}
\end{equation}
where $\omega$ is the standard volume form on $\mathbb{C}^2$. The integral is to be calculated over a section $(z_1,z_2) \mapsto (z_1/z_2,1)$ of the bundle $\mathbb{C}^{2*} \rightarrow \mathbb{C}P$.

It turns out that not all representations labeled by $\chi$ are truly independent,  as the representations $\chi$ and $-\chi$ of the principal series can be shown to be equivalent. There thus exists an intertwining map
\begin{equation}
\label{eq:intertwiner}
    \begin{gathered}
        \mathcal{A}:\; \mathcal{D}_{\chi} \rightarrow \mathcal{D}_{-\chi} \\
        \mathcal{A} D^{-\chi}(g)=D^{\chi}(g) \mathcal{A}\,,
    \end{gathered}
\end{equation}
preserving the inner product. According to \cite{Gelfand1963}, this intertwiner can be used to construct a bilinear form $(\cdot,\cdot)$ in $\mathcal{D}_{\chi}$ as
\begin{equation}
    (F_1,F_2)= \braket{\mathcal{J}F_1,F_2}\,,
\end{equation}
where we define $\mathcal{J}F=\overline{\mathcal{A}F}$.

In order to make calculations more manageable, one might like to introduce orthonormal bases for the functions in $\mathcal{D}_\chi$. Two particularly useful realizations of these functions are provided by the so-called canonical and pseudo-bases, obtained from representations of the compact and non-compact subgroups $\text{SU}(2)$ and $\text{SU}(1,1)$, respectively. We shall go over their construction in the subsequent sections, following \cite{ruhl1970the}.

\subsection{The canonical basis}

Due to the homogeneity property of $F \in \mathcal{D}_{\chi}$, such functions can be uniquely characterised by their values on $S^3=\{\mathbf{z} \in \mathbb{C}^2 \; | \;|z_1|^2 + |z_2|^2=1\}$ as
\begin{equation}
F(z_1,z_2)=(|z_1|^2 + |z_2|^2)^{i \rho/2 -1} F\left(\frac{z_1}{\sqrt{|z_1|^2 + |z_2|^2}}, \frac{z_2}{\sqrt{|z_1|^2 + |z_2|^2}}\right) \,,
\end{equation}
and this allows us to realize $F$ on the unitary group. Indeed, using the well-known diffeomorphism between the sphere and $\text{SU}(2)$\footnote{We choose to associate $z_1, z_2$ to the conjugates of $u_1,u_2$ rather than the non-conjugated variables so that in equation \eqref{cov} we get $m=n/2$ rather than its symmetric.},
\begin{equation}
u=\begin{pmatrix}
u_1 && u_2 \\
-\bar{u_2} && \bar{u_1}
\end{pmatrix}\in \text{SU}(2) \,; \quad -\bar{u_2}= \frac{z_2}{\sqrt{|z_1|^2 + |z_2|^2}}\,,  \bar{u_1}= \frac{z_1}{\sqrt{|z_1|^2 + |z_2|^2}} \,,
\end{equation}
$F$ may just as well be understood as a function $f \in \mathcal{C}^\infty(\text{SU}(2))$ such that $f(u(\mathbf{z}))=F(\mathbf{z}/|\mathbf{z}|)$. Of course, $F$ still has to satisfy the homogeneity condition on the sphere under multiplication of the arguments by a norm-1 complex number,
\begin{equation}
\begin{gathered}
F(e^{i \omega} z_1,e^{i \omega} z_2 )=e^{-i \omega n} F(z_1,z_2) \\
f(\gamma u)= e^{i \omega n} f(u)\,, \quad \gamma=
\begin{pmatrix}
e^{i \omega} && 0 \\
0 && e^{-i \omega}
\end{pmatrix}\,,
\end{gathered}
\end{equation}
with $\omega \in \mathbb{R}$. To functions $f \in \mathcal{C}^\infty(\text{SU}(2))$ satisfying this transformation property we will call \textit{covariant}, as in \cite{ruhl1970the}. It then turns out that one has a Plancherel theorem identifying the Hilbert space of square integrable covariant functions  and $\mathcal{D}_\chi$:
\begin{equation}
\begin{gathered}
L^2(\text{SU}(2))_\text{cov} \simeq \mathcal{D}_\chi \,, \\
\int_{\text{SU}(2)} |f(u)|^2 \,\diff u = \int_{\mathbb{C}P} |F(z_1,z_2)|^2 \, \omega \,.
\end{gathered}
\end{equation}
Consequently, one may equivalently describe the Hilbert space of $\text{SL}(2,\mathbb{C})$ principal series representations in terms of $\text{SU}(2)$ states. Since, by the Peter-Weyl theorem for compact groups, the set
\begin{equation}
\{\sqrt{\text{dim}(D^j)}D^j_{m m'} \; | \; j \in \Lambda\}\,,
\end{equation}
where $\Lambda$ denotes the set of unitary irreducible representations of SU(2) and $D^j_{m m' }$ the matrix elements of the representation, is an orthonormal basis for $L^2(\text{SU}(2))$, one may restrict this set to covariant representation functions to construct a basis for $\mathcal{D}_\chi$. Noting that $\gamma$ is an element of the one-parameter subgroup associated with the $J_3$ generator of $\text{SU}(2)$, and considering the matrix elements of the fundamental irreducible SU(2) representations under $J_3$ eigenstates,
\begin{align}
\label{cov}
D^j_{m m'}(\gamma u) &= \sum_lD^j_{m l}( e^{2 i \omega J_3}) D^j_{l m'}(u) \nonumber \\
&= e^{2 i m\omega} D^j_{m m'}( u)
\end{align}
we find that $D^j_{m m'}$ is covariant when $m=n/2$. A general unit-norm representation function of $\text{SL}(2,\mathbb{C})$ can therefore be written as
\begin{equation}
\label{eq:func2}
F_{j,m}^{\chi}(\mathbf{z})=\sqrt{2j+1}\,(|z_1|^2 + |z_2|^2)^{i \rho/2 -1}D^j_{\frac{n}{2} m}(u(\mathbf{z}))\,,
\end{equation}
and this defines the canonical basis. For convenience we may also introduce a Dirac notation for this basis and the inner product \eqref{eq:innerproduct},
\begin{equation}
F_{j,m}^{\chi} \rightarrow \ket{\chi; j\, m}\,,
\end{equation}
allowing us to write a simple resolution of identity
\begin{equation}
\mathbbm{1} _{(n,\rho)}=\sum_{j=\frac{n}{2}}^\infty \sum_{m=-j}^{m=j} \ket{\chi; j \,m}\bra{\chi; j \,m}\,.
\end{equation}
%Finally, we may write explicitly an arbitrary function in $\mathcal{D}_\chi$. General $\text{SU}(2)$ matrix elements are given by \cite{barrett_lorentzian_2010}
%\begin{align}
%\label{coef2}
%D^j_{m m'}(u)=&\sqrt{\frac{\Gamma(j-m+1) \Gamma(j+m+1)}{\Gamma(j-m'+1) \Gamma(j+m'+1)}} \sum_{n} \binom{m+m'}{n} \binom{j-m'}{j+m-n}  \;\cdot \nonumber \\
%\cdot \;&  u_1^n \, u_2^{j+m-n} \,(-\overline{u}_2)^{j+m'-n} \, \overline{u}_1^{n-m-m'} \,,
%\end{align}
%and in this basis a general unit-norm SL$(2,\mathbb{C})$ representation function will have the form
%\begin{equation}
%\label{func2}
%F_{m}^j(\mathbf{z})=\sqrt{2j+1}\,(|z_1|^2 + |z_2|^2)^{i \rho/2 -1}D^j_{\frac{n}{2} m}(u(\mathbf{z}))\,.
%\end{equation}

\subsection{The pseudo-basis}

In complete analogy to the previous SU(2) case, homogeneous functions in $\mathcal{D}_\chi$ are uniquely defined by their values on the hyperboloids (or pseudo-spheres) $H_\pm^3=\{\mathbf{z} \in \mathbb{C}^2 \; | \;|z_1|^2 - |z_2|^2=\tau\,,\tau=\pm1\}$ through
\begin{align}
F(z_1,z_2)=\sum_\tau \Theta\left(\tau(|z_1|^2 - |z_2|^2)\right) & \left(\tau(|z_1|^2 - |z_2|^2)\right)^{i \rho/2 -1} \cdot \nonumber \\
& \cdot F\left(\frac{z_1}{\sqrt{\tau(|z_1|^2 - |z_2|^2)}}, \frac{z_2}{\sqrt{\tau(|z_1|^2 - |z_2|^2)}}\right) \,,
\end{align}
where $\Theta$ is the Heaviside function. Just as before, $F$ can be understood as a function $f \in \mathcal{C}^\infty(\text{SU}(1,1))$ through an association of the hyperboloids with the non-compact group. This correspondence depends on $\tau$ as follows:
\begin{equation}
\begin{gathered}
v=\begin{pmatrix}
v_1 && v_2 \\
\bar{v_2} && \bar{v_1}
\end{pmatrix}\in \text{SU}(1,1) \,; \\
\tau=1: \quad \bar{v_2}=\frac{z_2}{\sqrt{|z_1|^2 - |z_2|^2}}\,, \bar{v_1}=\frac{z_1}{\sqrt{|z_1|^2 - |z_2|^2}} \\
\tau=-1: \quad v_1=\frac{z_2}{\sqrt{|z_2|^2 - |z_1|^2}}\,, v_2=\frac{z_1}{\sqrt{|z_2|^2 - |z_1|^2}} \,.
\end{gathered}
\end{equation}
One may once more prove a Plancherel theorem \cite{ruhl1970the}
\begin{equation}
\begin{gathered}
\label{ruhlplancherel}
L^2(\text{SU}(1,1))_\text{cov} \oplus L^2(\text{SU}(1,1))_\text{cov} \simeq \mathcal{D}_\chi \,, \\
\sum_\tau \int_{\text{SU}(1,1)} |f_\tau(v)|^2 \,\diff v =  \int_{\mathbb{C}P} |F(z_1,z_2)|^2 \, \omega \,, \\
F(\mathbf{z}/|\mathbf{z}|)=\sum_\tau \Theta\left(\tau(|z_1|^2 - |z_2|^2)\right)f_\tau( v_\tau(\mathbf{z}))\,,
\end{gathered}
\end{equation}

where now, unlike the case for $\text{SU}(2)$, the space $\mathcal{D}_\chi$ is isomorphic to \textit{two} copies of $L^2(\text{SU}(1,1))_\text{cov}$, labelled by $\tau$. Crucially, both $f_\tau$ will still need to satisfy the covariance condition $f_\tau(\gamma v)=e^{i \omega n \tau} f_\tau(v)$.

As in the case in the previous subsection, one would like to have a description of $\mathcal{D}_\chi$ in terms of unitary irreducible representations of $\text{SU}(1,1)$. It turns out that, although the Peter-Weyl theorem is not applicable due to the non-compactness of the group, one still has another Plancherel theorem relating exactly those representations with $L^2(\text{SU}(1,1))$ \cite{HarishChandra1952,Sciarrino1967}. Indeed there exists an isomorphism involving both the discrete and continuous series,
\begin{equation}
\label{plan11}
\begin{gathered}
\bigoplus_k \mathcal{D}^+_k \bigoplus_k \mathcal{D}^-_k \bigoplus_\epsilon \int^\oplus \diff s \; \mathcal{C}^\epsilon_s \simeq L^2(\text{SU}(1,1))\,, \\
\sum_{m m'} \left[\sum_\epsilon \int_0^\infty \diff s\;  |\psi^{s,\epsilon}_{m m'}|^2 + \sum_k (|\psi^{k+}_{m m'}|^2+|\psi^{k-}_{m m'}|^2)\right] = \int_{\text{SU}(1,1)} |f(v)|^2 \,\diff v\,,
\end{gathered}
\end{equation}
where the various $\psi$ are defined as follows:
\begin{equation}
\begin{gathered}
\psi^{s,\epsilon}_{m m'}=\int_{\text{SU}(1,1)}\diff v\, \sqrt{\mu_\epsilon(s)}\; \overline{D}^{s,\epsilon}_{m m'}(v) \, f(v) \; \diff v\,, \quad \mu_\epsilon(s)=\begin{cases}
2s \tanh(\pi s)\,,\quad \epsilon=0 \\
2s \coth(\pi s)\,,\quad \epsilon=\frac{1}{2}
\end{cases} \\
\psi^{k,\alpha}_{m m'}= \int_{\text{SU}(1,1)}\diff v\, \sqrt{2k-1}\; \overline{D}^{k,\alpha}_{m m'}(v)\, f(v) \,, \quad \alpha=\pm1\,.
\end{gathered}
\end{equation}
Using the above map, one may now describe $\mathcal{D}_\chi$ in terms of unitary irreducible representations of $\text{SU}(1,1)$. According to the previous equations, the space of homogeneous functions should be isomorphic to two copies of the Hilbert space on the left-hand side of equation \eqref{plan11}, constrained to satisfy covariance. Through a similar argument as the one used in \eqref{cov}, one may check that the representation functions of both the continuous and discrete series are constrained to $m=\tau \frac{n}{2}$ and that, among the discrete series representations, only those labelled by $\alpha=\pm$ contribute to the expansion of $f_\pm$. We may thus unequivocally set $\alpha=\tau$. A general unit-norm representation function of $\text{SL}(2,\mathbb{C})$ reads therefore,
\begin{equation}
\label{eq:func11}
\begin{gathered}
F^{\chi,\tau}_{s,\epsilon, m}(\mathbf{z})=  \sqrt{\mu_\epsilon(s)} \; \Theta\left(\tau(|z_1|^2 - |z_2|^2)\right)\left(\tau(|z_1|^2 - |z_2|^2)\right)^{i \rho/2 -1} D^{s,\epsilon}_{\frac{\tau n}{2}, m}( v_\tau(\mathbf{z}))\,, \\
F^{\chi, \tau}_{k,m}(\mathbf{z})=\sqrt{2k-1}\;\Theta\left(\tau(|z_1|^2 - |z_2|^2)\right)\left(\tau(|z_1|^2 - |z_2|^2)\right)^{i \rho/2 -1}D^{k,\tau}_{\frac{\tau n}{2},m}( v_\tau(\mathbf{z}))\,,
\end{gathered}
\end{equation}
for the continuous and discrete series, respectively. This defines the pseudo-basis of $\mathcal{D}_\chi$. Finally, introducing once more the Dirac notation
\begin{align}
&F^{\chi,\tau}_{s,\epsilon, m} \rightarrow \ket{\chi,\tau; s,\epsilon,m}\,, \\
&F^{\chi,\tau}_{k, m} \rightarrow \ket{\chi,\tau; k,m}\,,
\end{align}
we may write a resolution of identity as
\begin{equation}
\mathbbm{1} _{(n,\rho)}=\sum_\tau \biggl[\int_0^\infty \diff s \sum_{\substack{\pm m=\epsilon \\\epsilon-\frac{n}{2}\in \mathbb{Z}}} \ket{\chi,\tau; s,\epsilon,m}\bra{\chi,\tau; s,\epsilon,m} +  \sum_{k-\frac{n}{2}\in \mathbb{Z}}\sum_{m=\tau k}^{\tau \infty} \ket{\chi,\tau; k,m}  \bra{\chi,\tau; k,m} \biggr]\,.
\end{equation}

\section{Geometrical description of \texorpdfstring{$\text{SU}(2)$}{SU(2)} and \texorpdfstring{$\text{SU}(1,1)$}{SU(1,1)}}
\label{appendix:geometrygroups}
We would like to make some remarks regarding the relationship of the unitary groups with the spaces of unit-norm vectors in 3-dimensional subspaces of $\mathbb{R}^{3,1}$: the sphere $S^2$ and the one- and two-sheeted hyperboloids $H^\text{sp}$ and $H^\pm$. These are the so-called \textit{surfaces of transitivity} \cite{gelʹfand2016generalized}, or \textit{homogeneous spaces} \cite{gelʹfand2016generalized}, of those groups. For later explicit computations it will be useful to have at hand parameterizations of both $\text{SU}(2)$ and $\text{SU}(1,1)$, which we collect here. Elements $u$ of $\text{SU}(2)$ can be parameterized by \cite{Carmeli_2000}
\begin{equation}
    u=e^{i J_3 \phi} e^{i J_2 \theta} e^{i J_3 \psi}\,, \quad 0\leq \phi,\psi \leq 2\pi\,, \; 0\leq \theta \leq \pi\,,
\end{equation}
while elements $v$ of $\text{SU}(1,1)$ will be parametrized in one of two ways, by either \cite{Lindblad1970}
\begin{equation}
    v=e^{i J_3 \phi} e^{i K_2 t} e^{i K_1 u}\,, \quad 0\leq \phi \leq 2\pi\,, \; -\infty \leq u,t \leq \infty\,,
\end{equation}
or \cite{Inomata_1992}
\begin{equation}
v=e^{i J_3 \phi} e^{i K_2 t} e^{i J_3 \psi}\,, \quad 0\leq \phi,\psi \leq 2\pi\,, \; 0 \leq u,t \leq \infty\,,
\end{equation}
depending on convenience.

\subsection{The Hopf fibration of the sphere}
\label{hopfsubsection}
As implied above, the group SU(2) is diffeomorphic to the sphere $S^3$, and it is also the double-cover of the rotation group SO$(3) \simeq S^2$, on which it acts transitively with U(1) as stabiliser. That it is so can be seen by first considering the $\mathfrak{su}(2)$ Lie algebra and the group's adjoint action on it,
\begin{equation}
\label{adjoint}
\begin{gathered}
\text{Ad}: \text{SU}(2) \times \mathfrak{su}(2) \rightarrow \mathfrak{su}(2) \\
(g, \, X) \mapsto g^{-1}Xg \,,
\end{gathered}
\end{equation}
which is clearly an isometry under the $\mathfrak{su}(2)$ inner product $\braket{X \, | \, Y}=2 \, \text{Tr}(X Y)$. Given that $\mathfrak{su}(2)$ is naturally isomorphic to $\mathbb{R}^3$, one may then make use of this algebra automorphism to generate all points on the sphere $S^2$, which is precisely the set of all 3-vectors of fixed norm.
To this end, we shall define a map from the set $\mathcal{B}\left(\mathfrak{su}(2)\right)$ of orthonormal bases of the algebra to the sphere. Denoting by $\ket{\pm}$ the usual $(1,0), \, (0,1)$ eigenvectors of $J_3$, such a mapping may be written as
\begin{equation}
\begin{gathered}
h_\pm: \mathcal{B}\left(\mathfrak{su}(2)\right) \rightarrow S^2 \subset \mathbb{R}^3 \\
 \{X_i\} \mapsto
\braket{\pm \,|\, X_i \cdot \pm} \hat{e}^i \,,
\end{gathered}
\end{equation}
where we make use of the standard inner product on $\mathbb{C}^2$.
Note in particular that the canonical basis $\{J_i\}$ has as image the vector $\pm \frac{1}{2}(0,\,0,\,1)\in S^2$. It is not hard to see, through the $\mathfrak{su}(2) \simeq \mathbb{R}^3$ isomorphism, that $h_\pm$ is injective up to transformations of the canonical basis preserving $J_3$.

Now, since the automorphism \eqref{adjoint} is an orientation-preserving isometry, we may use it to generate other orthonormal bases from the canonical one, and this establishes a mapping
\begin{equation}
\begin{gathered}
\pi_\pm: \text{SU}(2) \rightarrow S^2 \subset \mathbb{R}^3 \\
g \mapsto h_\pm \circ \, \text{Ad}\left(g,\{J_i\}\right) \,, \\
\end{gathered}
\end{equation}
projecting from $\text{SU}(2)\simeq S^3$ to $S^2$. As remarked before $\pi_\pm (ge^{i J_3 \psi})=\pi_\pm (g)$, and indeed the pre-images of the map are unique up to a $\text{U}(1)$ circle. Therefore, restricting the domain of $\pi_\pm$ to the subgroup $\text{SU}(2)/\text{U}(1)$, we find an injective map
\begin{equation}
\begin{gathered}
\pi_\pm: \text{SU}(2)/\text{U}(1) \rightarrow S^2 \subset \mathbb{R}^3 \\
g \mapsto h_\pm \circ \text{Ad}(g,\{J_i\})\,, \\
g=e^{i J_3 \phi} e^{i J_2 \theta} \,,
\end{gathered}
\end{equation}
\begin{align}
\label{su2vec}
h_\pm \circ \text{Ad}\left(g,\{J_i\}\right)&= \braket{\pm \,|\, g^\dagger J_i g \cdot \pm} \hat{e}^i \nonumber\\
&=\pm \frac{1}{2}\left(-\sin{\theta} \cos{\phi},\,\sin{\theta} \sin{\phi},\, \cos{\theta}\right)\,,
\end{align}
which can clearly be seen to be smooth and surjective. We have thus recovered the well-known Hopf fibration establishing the diffeomorphism $\text{SU}(2)/\text{U}(1) \simeq S^2$.

\subsection{The 2-sheeted hyperboloid analogue}

One may follow exactly the same arguments as above to construct a diffeomorphism from a subgroup of $\text{SU}(1,1)$ to the two-sheeted hyperboloid $H^\pm$. To do so, we consider again the adjoint action, this time for $\text{SU}(1,1)$,
\begin{equation}
\label{adjoint11}
\begin{gathered}
\text{Ad}: \text{SU}(1,1) \times \mathfrak{su}(1,1) \rightarrow \mathfrak{su}(1,1) \\
(g, \, X) \mapsto g^{-1}Xg=\sigma_3 g^\dagger {\sigma_3}Xg \,,
\end{gathered}
\end{equation}
where ${\sigma_3}=\text{diag}(1,-1)$. Again, this action is an isometry\footnote{Note that the defining property of $\text{SU}(1,1)$ is that $g^\dagger {\sigma_3} g = {\sigma_3}$.} under the $\mathfrak{su}(1,1)$ inner product $\braket{X \, | \, Y}=2\, \text{Tr}\left(X Y\right)$.
As before, different bases of $\mathfrak{su}(1,1)\simeq \mathbb{R}^{2,1}$ will be related by orientation-preserving isometries, i.e. by rotations. Once more we define the correspondence between algebra bases and fixed-norm \textit{time-like} vectors in $\mathbb{R}^{2,1}$,
\begin{equation}
\begin{gathered}
h_\pm: \mathcal{B}\left(\mathfrak{su}(1,1)\right) \rightarrow H^\pm \subset \mathbb{R}^{2,1} \\
\{X_i\} \mapsto
\braket{\pm \,|\, X_i \cdot \pm} \hat{e}^i \,,
\end{gathered}
\end{equation}
and the same construction as above establishes the mapping
\begin{equation}
\begin{gathered}
\pi_\pm: \text{SU}(1,1)/\text{U}(1) \rightarrow H^\pm \subset \mathbb{R}^{2,1} \\
g \mapsto h_\pm \circ \text{Ad}\left(g,\{F_i\}\right)\,, \\
g=e^{i J_3 \phi} e^{i K_2 t} \,, \quad 0 \leq \phi < 2\pi,\,\, 0\leq t<  \infty\,, \\
\end{gathered}
\end{equation}
\begin{align}
\label{eq:su11discvec}
 h_\pm \circ \text{Ad}\left(g,\{F_i\}\right)&= \braket{\pm \,|\,{\sigma_3 g^\dagger {\sigma_3} \, F_i g \,} \cdot \pm} \hat{e}^i \nonumber\\
 &= \pm \braket{\pm \,|\, g^\dagger {\sigma_3} \, F_i g \,\cdot \pm} \hat{e}^i \nonumber \\
 &=\pm \frac{1}{2}\left( \cosh{t},\, \sinh{t} \cos{\phi},\,-\sinh{t} \sin{\phi}\right)\,.
 \end{align}
 Note that the hyperboloids are oriented along $\hat{e}^1$. We point out that the inner product appearing in the second line of the previous equation can be writen as $\pm\braket{g \cdot \pm, \, F_ig \cdot \pm}_{\sigma_3}$ using the $\text{SU}(1,1)$-invariant inner product in $\mathbb{C}^2$ defined as $\braket{u,\,v}_{\sigma_3}=u^\dagger {\sigma_3} \, v$.
\subsection{The 1-sheeted hyperboloid analogue}

 We may also establish the relationship between the 1-sheeted hyperboloid $H^\text{sp}$ in $\mathbb{R}^{3,1}$ and the subgroup $\text{SU}(1,1)/(\mathbb{Z}_2 \,e^{i u K_1})$. We still use the adjoint action from equation \eqref{adjoint11}, but this time we make a different assignment by using the eigenstates of $K_1$,
 \begin{equation}
 \ket{l^+}=\frac{1}{\sqrt{2}}\left(\ket{+}+\ket{-}\right)\,, \quad  \ket{l^-}=\frac{1}{\sqrt{2}}\left(\ket{+}-\ket{-}\right)\,,
 \end{equation}
 such that the mapping to fixed-norm \textit{space-like} vectors is given by
\begin{equation}
\begin{gathered}
h_\pm: \mathcal{B}\left(\mathfrak{su}(1,1)\right) \rightarrow H^\pm \subset \mathbb{R}^{2,1} \\
\{X_i\} \mapsto
\braket{l^\pm \,|\, X_i \cdot l^\pm} \hat{e}^i \,.
\end{gathered}
\end{equation}
The diffeomorphism between the hyperboloid and the subgroup takes the form
\begin{equation}
\begin{gathered}
\pi_\pm: \text{SU}(1,1)/(\mathbb{Z}_2 \, e^{i u K_1}) \rightarrow H^\text{sp} \subset \mathbb{R}^{2,1} \\
g \mapsto h_\pm \circ \text{Ad}\left(g,\{F_i\}\right)\,, \\
g=e^{i J_3 \phi} e^{i K_2 t} \,,\quad 0 \leq \phi < 2\pi,\,\, - \infty < t<  \infty\,, \\
\end{gathered}
\end{equation}
\begin{align}
\label{eq:su11contvec}
h_\pm \circ \text{Ad}\left(g,\{F_i\}\right)&= \braket{l^\pm \,|\, g^\dagger {\sigma_3} \, F_i g \,{\sigma_3} \cdot l^\pm} \hat{e}^i \nonumber\\
&= \braket{l^\pm \,|\, g^\dagger {\sigma_3} \, F_i g \,\cdot l^\mp} \hat{e}^i \nonumber \\
&=\pm\frac{i}{2}\left(-\sinh{t},\,-\cosh{t} \cos{\phi},\,\cosh{t} \sin{\phi}\right)\,.
\end{align}
Once more the hyperboloid is oriented along $\hat{e}^1$, and again we note that the inner product in the second line of the previous equation can be understood in terms of the $\text{SU}(1,1)$ invariant inner product as $\braket{g \cdot l^\pm, \, F_ig \cdot l^\mp}_{\sigma_3}.$

\section{Convex geometry in Minkowski space-time}
\label{appendix:geometry}

Here we understand Minkowski space-time $\mathbb{R}^{3,1}$ to be the affine space modelled on $\mathbb{R}^4$, together with the Lorentzian metric $\eta=\text{diag}(1,-\vec{1})$. In stark contrast with Euclidean space, where one has a constant vanishing Gaussian curvature, the difficulty in describing geometric objects like polytopes in Minkowskian space is essentially due to the non-definite Lorentzian metric, which endows the space with four subsets of constant curvature (these being the space-like and time-like hyperboloids, as well as the light-like cone). Thus, while one may define polytopes in Minkowskian space by restricting them to those constant curvature spaces (essentially constructing the so-called \textit{hyperbolic polyhedra}), the usually useful geometrical measures of angles, areas and volumes are not immediately available in the whole of Minkowski space. Nonetheless, Minkowski space-time is still an affine space in the affine geometry sense, and certain geometrical notions are therefore readily available to us, namely all those that can be described by a vector space structure without the use of a metric (this includes, for example, having the notion of parallelism, and being able to compare colinear line segments). To these we may add those geometrical properties deriving from the metric which hold globally, of which orthogonality is perhaps the clearest example.

\subsection{Basic algebra of Minkowskian 3-space geometry} \label{app:Minkowski}

In the following we make a couple of simple observations about convex geometry in $\mathbb{R}^{2,1}$. We will always exclude light-like vectors from our analysis below\footnote{Light-like lines overlap the notions of orthogonality and parallelism, and are therefore unsuitable for our purposes.}.
\begin{enumerate}
    \item Minkowskian triangles and tetrahedra

    The convex hull of any three points not all colinear is a \textit{Minkowskian triangle} if:
    \begin{enumerate}
        \item Every edge of the triangle is non-null,
        \item The triangle is not contained in a null plane\footnote{A null hypersurface is defined to be orthogonal to a null vector; the induced metric on such hypersurfaces may be degenerate.}.
    \end{enumerate}

    The convex hull of any four points not all colinear is a \textit{Minkowskian tetrahedron} if:
    \begin{enumerate}
        \item Every edge of the tetrahedron is non-null,
        \item Every face is non-null.
    \end{enumerate}

    \item Orthogonal projections

    The orthogonal projection of a vector $v$ onto $u$ is given by $\text{Proj}_u v=\norm{u}^{-2}\braket{v,u}$. Let $\{u, u^\perp_i\}$ be a basis with $\braket{u,u_i^\perp}=0$, and $v=au+b^i u^\perp_i$. Then $\text{Proj}_u v=a \norm{u}^{-2} \braket{u,u}=a$.

    \item Half-spaces

    Let $v$ be a unit vector (i.e. $\norm{v}^2=\pm1$). The set $H_v=\left\{x \; \big|\, \braket{x,v}\norm{v}^2 \leq 0 \right\}$ defines a half-space through the origin,  orthogonal to $v$, and opposite to $v$ in the sense that $v \notin H$.
    The set $H_v$ can be translated in the direction of $v$ by a positive amount $h$, defining the translated half-space $H^h_v=\left\{x \; \big|\, \braket{x,v}\norm{v}^2 \leq h \right\}$. Notice that the half-space so defined will always contain the origin.

    \item Height of triangles and tetrahedra

    Consider the triangle $ABC$, and let $v$ be the outward unit normal vector to the opposite edge $AC$ to $B$. The height $h_b$ of the triangle $ABC$ from the vertex $B$ is defined to be the unique positive number such that $\vec{B}+h_b v$ lies on the line $AC$. In terms of the remaining edges, the triangle height is given by $h_b=|\text{Proj}_v \vec{AB}|=|\text{Proj}_v \vec{CB}|$.

    These definitions extend in the obvious manner to tetrahedra.
\begin{figure}[h]
    \centering
    	{\includegraphics[valign=c,scale=0.75]{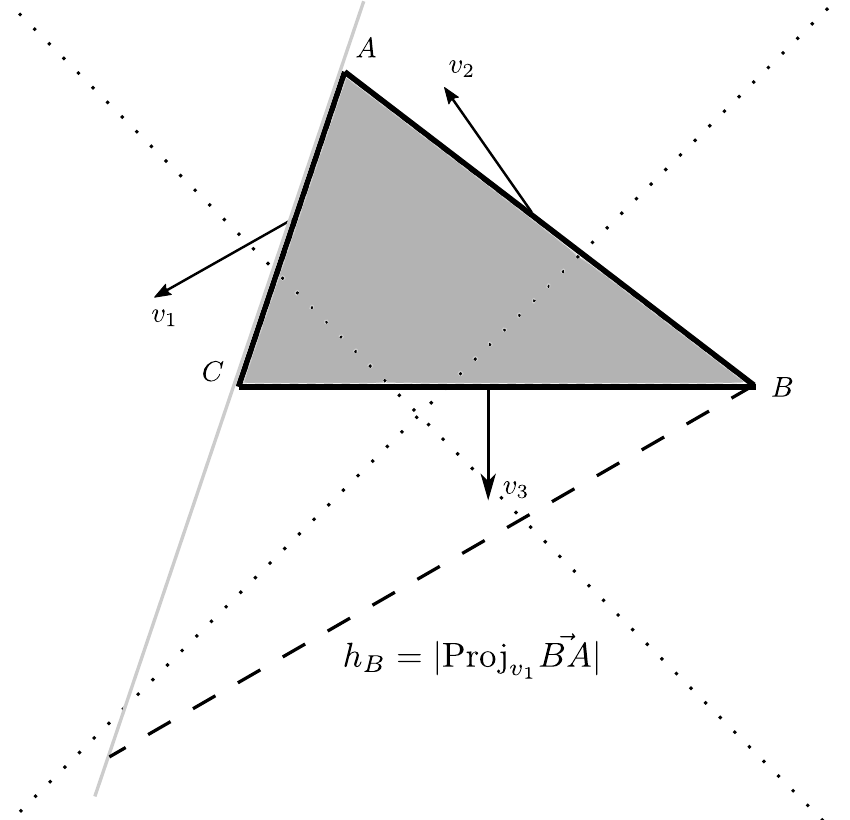}}
    	\caption{An example of a Minkowskian triangle in $\mathbb{R}^{1,1}$ and respective normal vectors. Also pictured is the triangle height $h_B$ from the vertex $B$.}
\end{figure}

    \item Squared areas and volumes of triangles and tetrahedra

    Consider a tetrahedron ABCD, and let $a,b$ and $c$ be edge vectors of the tetrahedron with the same base at vertex $D$ and end-points at A, B and C, respectively. We define the tetrahedron volume by $V^2_{abc}=\frac{1}{(3!)^2}\norm{\star(a \wedge b \wedge c)}^2$, and the area of the triangle with sides $a,b$ by $A^2_{ab}=\frac{1}{(2!)^2}\norm{\star(a\wedge b)}^2$. Notice that this is the same definition one may have for areas and volumes in euclidean space, but now we allow for negative squared areas and volumes, depending on the signature of the metric of the space.

    \item Signs of squared areas

    Given the above definition, the signs of the squared areas of triangles depend on their causal character. Indeed, let $a,b$ be edge vectors of a time-like triangle, both with base $A$. Since the face is time-like, the span of its edges must contain both space- and time-like vectors. Thus the quadratic $f(x)=||a x + b||^2$ must change sign, implying its discriminant $\Delta$ must be positive. But we also have that $\Delta=-4||a\wedge b||^2$, and hence $A_{ab}^2< 0$. If one considers a space-like triangle, the same argument implies that the polynomial must not change sign, and this shows that $A_{ab}^2> 0$.

    We may thus claim that time-like triangles are characterised by negative squared areas, while space-like triangles have positive squared areas.

    \item Orthogonal vectors to triangles

    Consider again the triangle $ABC$ and the edges $a,b$, both with the same base point. Then the vector $v=\frac{\star(a\wedge b)}{\sqrt{4|A_{ab}^2|}}$ is a unit vector orthogonal to the triangle. Orthogonality follows from the properties of the Hodge star: let $\omega$ be the volume form induced by the metric. Then $\braket{\star(a\wedge b),b}\omega = \braket{a\wedge b,\star b} \omega =a\wedge b\wedge b =0$, and analogously $ \braket{\star(a\wedge b),a}\omega=0$. Since $\omega$ is non-degenerate by definition, orthogonality holds.

    \item Squared volume formula for a tetrahedron in terms of boundary areas

    Let $a,b,c$ be edge vectors for a tetrahedron $ABCD$ as before, all having a common base point at D. Recall that $V^2_{abc}=\frac{1}{(3!)^2}\norm{\star(a \wedge b \wedge c)}^2$. Define the orthogonal unit vector to the face $ACD$ by $v=\frac{\star(a\wedge c)}{\sqrt{4|A_{ac}^2|}}$, pointing out of the tetrahedron, and the tetrahedron height from the vertex opposite to the same face by $h_b=|\text{Proj}_v b|$. Then we may write $b=-h_b v+\alpha a + \beta c$, for some numbers $\alpha,\beta$. The following holds:
    \begin{align*}
    V^2_{abc}&=\frac{1}{(3!)^2}\norm{\star(a \wedge c \wedge (h_bv))}^2 \\
    &=\frac{1}{(3!)^2}\det \begin{pmatrix}
        \braket{a,a} & \braket{a,c} & h_b\braket{a,v} \\
        \braket{c,a} & \braket{c,c} & h_b\braket{c,v} \\
        h_b\braket{v,a} & h_b\braket{v,c} & h_b^2\braket{v,v}
    \end{pmatrix} \\
    &=\frac{1}{(3!)^2} h_b^2\norm{v}^2 A_{ac}^2=\frac{1}{(3!)^2} h_b^2 |A_{ac}^2|\,,
    \end{align*}
    since $v \perp a,b$.
    We may conclude that Minkowskian tetrahedra, defined in this manner, have positive squared volumes independently of their causal character.

    \item Areas and volumes of triangles and tetrahedra

    Because we have shown that squared volumes are always positive, we may define the volume of a tetrahedron $P$ by $V_P=\sqrt{V_P^2}$. For areas of triangles $Q$, we take $A_Q=\sqrt{|A_Q^2|}$.

    \item Minkowskian polygons and polyhedra

    Since we have a notion of orthogonal half-spaces, we define polyhedra as follows: a convex Minkowskian polyhedron is a finite and bounded intersection of half-spaces containing the origin,
    \begin{equation}
    P=\bigcap_{\{ v,h\}} H_v^h\,,
    \end{equation}
    with each half-space being characterised by an orthogonal non-null vector $v$ and a positive height $h$ from the origin along $v$. We further require that every edge of $P$ is non-null.

    Every face of such a polyhedron is a Minkowskian polygon.
    
    \begin{figure}[h]
    	\centering
    	\includegraphics[valign=c,scale=0.75]{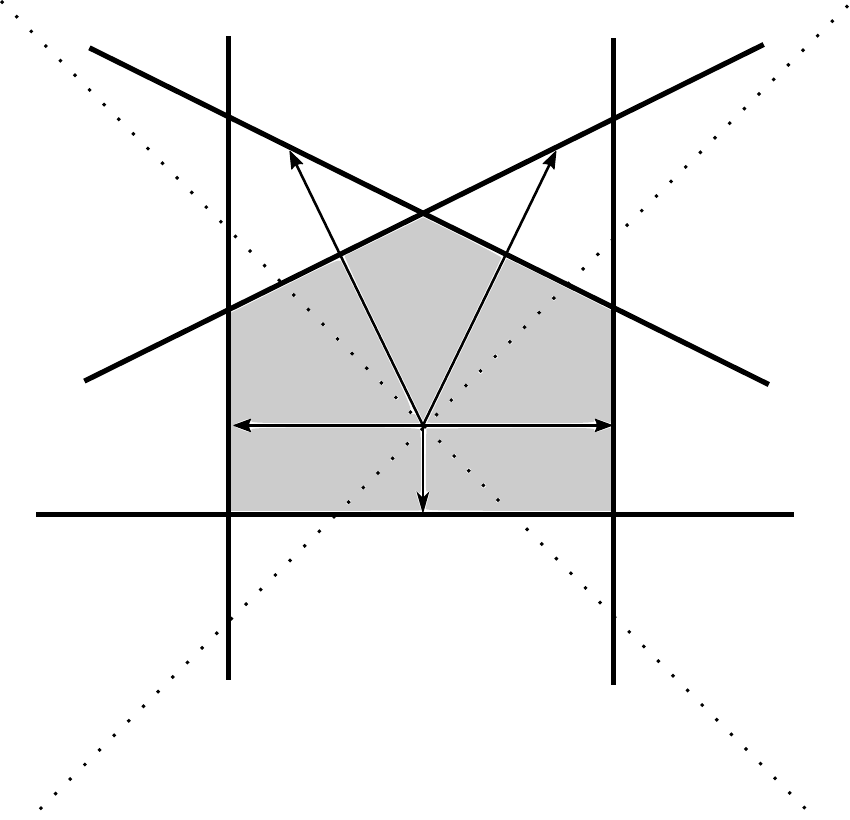}
    	\caption{An example of a Minkowskian polygon, defined as the intersection of half-spaces in Minkowski space-time $\mathbb{R}^{1,1}$ containing the origin. Each vector is orthogonal to the line it touches.}
    \end{figure}

    \item Additivity of areas and volumes

    We have so far only defined areas of triangles and volumes of tetrahedra. In order to extend the definitions to more general polygons and polyhedra, we need to determine to which extent areas and volumes may be added to each other. Since our definitions have relied on the interior product and the Hodge star, both linear maps (or equivalently on the wedge product), areas and volumes are naturally additive \textit{in their own subspace of definition}. That is to say, although there is no sense in which one might add the area of a time-like triangle and a space-like triangle, our definitions naturally allow for summing areas of parallel triangles. Volumes, on the other hand, are defined on the whole three-dimensional space, are always positive, and may freely be added to each other. Note that then our tetrahedron is a bit of a strange geometrical object, in the sense that it has individual face areas and a total volume, while not having a total area in general.

    \item Areas of polygons

    Consider a general convex polygon, and place a vertex in its interior. Now join every vertex on the boundary of the polygon to the interior vertex, obtaining a triangulation of the polygon. The polygonal area will be defined as the sum of the triangle areas.

    \item Volumes of polyhedra

    Analogously to what we did previously, given a general convex polyhedron, consider its triangulation by tetrahedra: triangulate first every face as above, obtaining triangular faces $f$, and then join every vertex to a new vertex in the interior of the polyhedron\footnote{It is well-known that in more than two dimensions there exist non-convex polyhedra not admitting a triangulation by simplices.}. The total volume of the polyhedron is well-defined as the sum of the volumes of the individual tetrahedra, and according to the previous discussion it is given by
    \begin{equation}
    \label{volume}
    V=\frac{1}{3!}\sum_f h_f A_f\,,
    \end{equation}
    where $A_f$ is the area of the face $f$ and $h_f$ is the tetrahedral height from the interior vertex to the plane defined by the face $f$.

    \item Closure condition for polyhedra

    We show this for a tetrahedron, as the generalization to other polyhedra should be clear from the previous discussion. Let $a,b,c$ be edge vectors of the tetrahedron, all with the same base point, and consider the four vectors normal and outward-pointing to its faces: $\star(b \wedge a), \star(c\wedge b), \star(a\wedge c), \star[(c-b)\wedge (a-b)]$. Then it is immediate that 
    \begin{equation*}
    \star(b \wedge a)+ \star(c\wedge b)+ \star(a\wedge c)+ \star[(c-b)\wedge (a-b)]=0\,.
    \end{equation*}
    Thus, for any convex polyhedron,
    \begin{equation}
        \sum_f v_f A_f=0\,,
    \end{equation}
    where $v_f$ is the unit vector orthogonal to the face $f$ and outward-pointing.
\end{enumerate}

\subsection{Angles in the Minkowski plane}

In order to discuss some further properties of polytopes in Minkowski space-time we will need the notion of angle between two arbitrary non-null vectors. When the plane defined by the two vectors of interest is entirely space-like one may make use of Euclidean  angles, which we define in the usual way as
\begin{equation}
\cos{\theta_{uv}}=\frac{\braket{u,v}}{||u|| \, ||v||}.
\end{equation}
However, it might be that the plane spanned by the two vectors has the metric structure of $\mathbb{R}^{1,1}$, and this requires a more careful discussion. We will use the description found in \cite{Alexandrov2003}, since it allows for keeping the property of angle additivity and constructing an analogous Schläfli identity to the Euclidean case \cite{SuarezPeiro2000}.

\begin{figure}
\centering
\includegraphics[valign=c,scale=0.65]{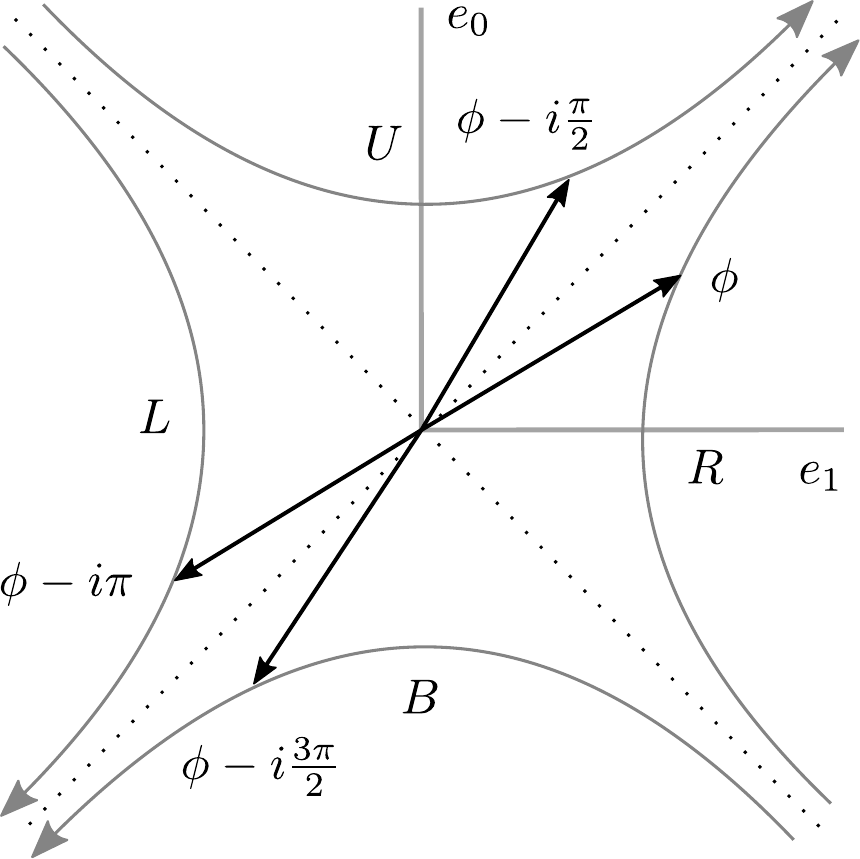}
\caption{\label{anglespic}Standard orientation of hyperbolas in $\mathbb{R}^{1,1}$. Just as Euclidean  orthogonal vectors are separated by a $\frac{\pi}{2}$ angle, so too are Minkowski ones separated by $-i \frac{\pi}{2}$.}
\end{figure}

Let $R,U,L,B$ denote the right, upper, left and bottom regions in $\mathbb{R}^{1,1}$ separated by the light cone, respectively. We will orient the hyperbolas of those regions as in Figure \ref{anglespic}. Consider the ordered pair of vectors $(u,v)$, positively oriented (\textit{i.e} agreeing with the orientation of the hyperbolas), and assume both lie in the same region, having thus the same causal character. As usual, the ordered angle $\theta_{uv}$ from $u$ to $v$ is given by
\begin{equation}
\cosh{\theta_{uv}}=\frac{\braket{u,v}}{||u|| \, ||v||}\,,
\end{equation}
and the sign of $\theta_{uv}$ is fixed by the causal character: $\theta_{uv}$ is taken to be positive for space-like vectors, and negative for time-like ones. We also define $\theta_{uv}=-\theta_{vu}$, i.e. the angle for negatively-oriented vectors is the symmetric of the angle for positively-oriented ones.

In order to extend this definition to angles between vectors of different causal character, we analytically continue the angles $\theta$ by complementing them with an imaginary part $i\varphi$. Thus, for any non-null vector $u$, we define its angle $\theta_{1u}=\phi_{1u} -i \varphi_{1u}$ with the vector $e_1$ through the formula
\begin{align}
\label{eq:angles}
\cosh\left(\theta_{1u} \right)&=\frac{\braket{u,e_1}}{||u|| \, ||e_1||} \\
&=\cosh \phi_{1u} \cos \varphi_{1u} - i \sinh \phi_{1u} \sin \varphi_{1u} \nonumber\,,
\end{align}
restricting $\varphi_{1u}$ to values in $\{0,\frac{\pi}{2}, \pi, \frac{3\pi}{2} \}$, depending on whether $u$ lies in $R,U,L$ or $B$, respectively. We take the norm to be such that $||\cdot|| = \sqrt {||\cdot||^2} \in \mathbb{R}^+_0 \cup i  \mathbb{R}^+_0$. In this manner we are led to think of the four arms of the light-cone as localised at one of the four values of $\varphi$, and of $\phi$ as the angle between vectors lying in the same region. Since under this definition angles are additive \cite{Alexandrov2003}, we may then define angles between positively oriented arbitrary vectors $(u,v)$ as $\theta_{1v}-\theta_{1u}$. Note that in the case when $u$ is of the same causal character as $e_1$ the left-hand side reduces to $\pm \cosh \phi_{1u}$, while if it is of a different character it reduces to $\pm \sinh \phi_{1u}$.

\subsection{Uniqueness and existence of Minkowskian polyhedra}
Having established the above definitions and properties, we now turn to formulating an analogous Minkowski theorem for Minkowskian polyhedra. We cite a famous result by Alexandrov \cite[Theorem 1 of section 6.3]{Alexandrov2005}:
\begin{theorem}[Alexandrov]
\label{Alex}
Let the word ``face" stand for a vertex, edge or proper face of a polyhedron, and define two faces to be parallel if they are contained in parallel support planes. If for all pairs of parallel faces of two convex Euclidean  polyhedra neither face can be placed strictly inside the other by parallel translation, then the polyhedra are translates of one another.
%Let $f(Q)$ be a monotone function of a polygon $Q$, i.e., we have $f(Q_1)>f(Q_2)$ whenever $Q_2$ can be placed inside $Q_1$ by translation. If two convex polyhedra satisfy $f(Q_1)=f(Q_2)$ for each pair of parallel faces $Q_ 1$ and $Q_2$ , then the polyhedra are translates of one another.
\end{theorem}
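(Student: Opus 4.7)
The plan is to phrase the problem via support functions $h_K \colon S^{n-1} \to \mathbb{R}$, $h_K(u) = \sup_{x \in K}\langle x, u\rangle$. Two convex polyhedra are related by a translation $t$ if and only if $h_P(u) - h_Q(u) = \langle t, u\rangle$ for every $u$, so after possibly replacing $Q$ by a translate (chosen e.g.\ so that some pair of parallel vertices coincide) it suffices to show $h_P \equiv h_Q$. The support face $F_K(u) = \{x \in K : \langle x, u\rangle = h_K(u)\}$ is precisely the Clarke subdifferential of $h_K$ at $u$; parallel faces of $P$ and $Q$ correspond to shared directions $u$ at which $F_P(u)$ and $F_Q(u)$ are both nontrivial.

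Set $f(u) := h_P(u) - h_Q(u)$ and let $M := \max_{S^{n-1}} f$, attained at some direction $u_0$. If $M \leq 0$, applying the symmetric argument to $-f$ forces $M = 0$ and hence $h_P = h_Q$; assume $M > 0$ for contradiction. A subdifferential argument at the extremum $u_0$ produces points $x_P \in F_P(u_0)$ and $x_Q \in F_Q(u_0)$ with $x_P - x_Q = M u_0$, so $F_P(u_0)$ and the translate $F_Q(u_0) + M u_0$ are parallel faces lying in the common affine hyperplane $\{x : \langle x, u_0\rangle = h_P(u_0)\}$ and sharing at least one point. The crucial step is to upgrade this point-coincidence to an inclusion: by perturbing $u_0$ into tangential directions $u_\varepsilon = \cos(\varepsilon)\,u_0 + \sin(\varepsilon)\,v$ and using $f(u_\varepsilon) \leq M$ to first order, the supporting half-spaces of $F_P(u_0)$ inside the shared hyperplane enclose $F_Q(u_0) + Mu_0$ (or vice versa), yielding the containment of one parallel face in the other. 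The hypothesis then forbids strict inclusion, so the faces coincide.

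Having matched one pair of parallel faces, I would descend in face dimension: inside the shared supporting hyperplane, the pairs of parallel facets of $F_P(u_0)$ and $F_Q(u_0) + Mu_0$ inherit the same parallel-face hypothesis, and induction on the ambient dimension identifies lower-dimensional faces pairwise up to the same translation $M u_0$. Propagating through the connected normal fan of $P$ shows $h_Q(u) + M\langle u_0, u\rangle = h_P(u)$ in a neighbourhood of $u_0$, and hence globally by combinatorial continuation along the fan, producing a linear relation between $h_P$ and $h_Q$ and thus the desired translation.

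The main obstacle is the coupling between the global maximization argument and the induction on face dimension. The extremal direction $u_0$ need not lie in the relative interior of a common refinement of the normal cones of $P$ and $Q$, so the faces $F_P(u_0)$ and $F_Q(u_0)$ may a priori have unequal dimensions, and the subdifferential perturbation must be carried out carefully for each stratum of the sphere. A secondary technical point is verifying that the parallel-face hypothesis \emph{restricts} to the hyperplane of a matched pair, i.e.\ that parallel faces of $F_P(u_0)$ and $F_Q(u_0) + M u_0$ correspond to parallel faces of $P$ and $Q$ in the sense of the statement; this requires the normal fans to refine compatibly near $u_0$, which in turn relies on a further maximization argument on the sphere $S^{n-2}$ inside the shared hyperplane and thus feeds back into the induction.
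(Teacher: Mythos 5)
The paper does not actually prove Theorem~\ref{Alex}; it simply cites it as \cite[Theorem~1 of Section~6.3]{Alexandrov2005} (and then re-derives the Minkowski-space version as a corollary). So there is no in-paper proof to compare against. Alexandrov's own argument is a three-dimensional combinatorial one: one compares the two polyhedra face by face, marks a sign on each edge/face of the common refinement according to which polyhedron ``sticks out'', and invokes a Cauchy-type sign-counting lemma for planar graphs on $S^2$ (the same lemma the paper later uses to prove Theorem~\ref{rigid}). Your proposal, by contrast, is an analytic support-function argument, which is a genuinely different route.

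The first non-trivial step of your plan is in fact correct and can be made rigorous: if $u_0$ is a local maximum of $f = h_P - h_Q$ on $S^{n-1}$ with value $M$, the one-sided directional derivatives satisfy $f'(u_0;w) = h_{F_P(u_0)}(w) - h_{F_Q(u_0)}(w) \le 0$ for all $w \perp u_0$, and since both faces lie in parallel hyperplanes at distance $M$ this is exactly the statement that $F_P(u_0) \subseteq F_Q(u_0) + M u_0$; the hypothesis then forces equality. This is a correct use of the hypothesis. The genuine gap is everything after that. First, the reduction at the very start (``replace $Q$ by a translate chosen so that some pair of parallel vertices coincide'') begs the question, because without already knowing $P$ and $Q$ are translates you have no canonical way to pick that translate, and different parallel vertex pairs generally give inequivalent normalizations. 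Second, and more seriously, ``$M>0$ for contradiction'' is not a contradiction: if $P = Q + t$ with $t \neq 0$ then $f(u) = \langle t,u\rangle$ has $M = |t| > 0$, so the maximization argument alone cannot rule out $M > 0$. All the content therefore sits in the ``propagation through the connected normal fan'' sentence, which is precisely where the proof is missing. Matching $F_P(u_0)$ with $F_Q(u_0)+Mu_0$ tells you $f$ agrees with the linear function $M\langle u_0,\cdot\rangle$ on the intersection of the two normal cones of that face, but to conclude $f$ is globally linear you need an argument that this linear patch extends across adjacent normal cones, and no mechanism for that is given. Your proposed ``descent in face dimension'' does not supply it either: once $F_P(u_0)=F_Q(u_0)+Mu_0$ the two faces are literally the same polygon, so comparing their sub-faces yields no new information. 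Until that global step is supplied, the argument does not close; the maximization lemma is a correct first move, but by itself it is compatible with $P$ and $Q$ not being translates.
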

Alexandrov proves this theorem in the context of Euclidean  3-space, but under our definition of convex polyhedra it still holds for Minkowski space-time.
\begin{corollary}
	If for all pairs of parallel faces of two convex polyhedra in $\mathbb{R}^{2,1}$ Minkowski space-time neither face can be placed strictly inside the other by parallel translation, then the polyhedra are translates of one another.
\end{corollary}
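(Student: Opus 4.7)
The plan is to observe that Alexandrov's theorem as stated is essentially an affine-geometric result, so it transfers from Euclidean $\mathbb{R}^3$ to $\mathbb{R}^{2,1}$ with no modification. Indeed, inspecting the hypothesis and conclusion of Theorem~\ref{Alex}, the notions involved are: being a convex polyhedron; parallelism of support planes; the ability to translate one face strictly inside another; and being a translate of another polyhedron. None of these concepts depends on which of the two metrics ($\eta = \mathrm{diag}(1,1,1)$ or $\eta = \mathrm{diag}(1,-1,-1)$) one places on the underlying affine 3-space. They only require the affine structure of $\mathbb{R}^3$.

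First, I would verify that a convex Minkowskian polyhedron, as defined in item~10 of appendix~\ref{app:Minkowski}, is also a convex polyhedron in the standard (affine) sense. This is immediate: each half-space $H_v^h = \{x \mid \braket{x,v}\lVert v\rVert^2 \le h\}$ is the set of points lying on one side of the affine hyperplane $\{x \mid \braket{x,v}\lVert v\rVert^2 = h\}$. The constraint that $v$ be non-null and that edges be non-null restricts the class of polyhedra considered, but every Minkowskian polyhedron is in particular a finite intersection of affine half-spaces, hence a convex polyhedron of affine $\mathbb{R}^3$ in the usual sense.

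Second, I would apply Alexandrov's theorem directly to the two Minkowskian polyhedra $P_1$ and $P_2$ viewed merely as convex polyhedra in affine $\mathbb{R}^3$. The hypothesis of the corollary is precisely the hypothesis of Theorem~\ref{Alex}: for every pair of parallel faces of $P_1$ and $P_2$, neither can be placed strictly inside the other by parallel translation. Alexandrov's conclusion then gives that $P_1$ and $P_2$ are translates of one another, which is exactly the statement of the corollary.

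I do not expect any real obstacle. The only subtle point, and the one worth pausing on in the written proof, is to argue cleanly that Alexandrov's theorem does not covertly use the Euclidean metric (for instance through orthogonal projections or through the definition of ``face''). Since Alexandrov's proof works entirely with support hyperplanes, projections onto parallel planes, and translations — all affine notions — this is straightforward, and the corollary follows by a one-line appeal after the affine reinterpretation is made explicit.
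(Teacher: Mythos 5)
Your proposal is correct and takes essentially the same approach as the paper: both arguments note that a convex Minkowskian polyhedron is a convex polyhedron of the underlying affine $\mathbb{R}^3$ (the paper via an ``identity mapping'' preserving convexity, you via the observation that all the relevant notions are affine), and then invoke Theorem~\ref{Alex} directly. Your version is a bit more explicit about \emph{why} the hypotheses and conclusion of Alexandrov's theorem are metric-independent, but the substance is identical.
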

\begin{proof}
	Note that there is an identity mapping $\mathbb{R}^{2,1} \rightarrow \mathbb{R}^3$, and that a set is convex in $\mathbb{R}^{2,1}$ if and only if it is convex in Euclidean  space. Suppose that we are given polyhedra $P,P'$ in the conditions of the theorem, and consider their Euclidean  image. These images satisfy the requirements of Theorem \ref{Alex}, and thus they are translates of each other. But then, under the identity mapping, so too are the original polyhedra $P,P'$.
\end{proof}
We then have an immediate corollary on congruence of polyhedra depending on their face areas:
\begin{corollary}
\label{cor2}
Let $P,P'$ be two convex polyhedra in either $\mathbb{R}^{2,1}$ or $\mathbb{R}^{3}$, defined as above. Denote by $Q$ a face of $P$, and by $Q'$ a face of $P'$. Moreover, let $\{v_{Q},A_{Q}\}, \{v_{Q'},A_{Q'}\}$ be the sets of outward-pointing orthogonal vectors to their faces, none of them light-like, as well as their respective areas. If and only if both sets are the same, then $P$ and $P'$ are translates of each other. That is, convex polyhedra in Minkowski space-time are uniquely characterised by their face areas and normals, up to translations.
\end{corollary}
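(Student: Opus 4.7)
The forward implication—that translation preserves face normals and areas—is immediate from translation invariance of orthogonality and of the wedge-product formulae defining $A_Q$, so I would dispatch it in one line. For the converse, my plan is to reduce everything to the Minkowskian Alexandrov theorem (the previous corollary). I would begin by observing that, in any convex polyhedron, a proper $2$-face is the unique support set of its outward normal direction; matching of the sets $\{(v_Q, A_Q)\}$ therefore induces a canonical bijection between the $2$-faces of $P$ and those of $P'$, with matched pairs $Q, Q'$ sharing outward normal and lying in parallel support planes, hence parallel in Alexandrov's sense.

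Next, I would verify that no such matched pair admits one placed strictly inside the other by a parallel translation. After translating $Q$ into the support plane of $Q'$, a strict inclusion $Q \subsetneq Q'$ would force $A_Q < A_{Q'}$: since $v_Q$ is non-null by hypothesis, the ambient Minkowski metric restricts to a nondegenerate bilinear form on this plane, so the induced area $2$-form (coming from the wedge product, cf.\ item~6 of Appendix~\ref{app:Minkowski}) is nondegenerate and its absolute value is strictly monotone under strict inclusion of planar convex bodies. This contradicts $A_Q = A_{Q'}$.

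The remaining task is to verify Alexandrov's hypothesis for edges and vertices. Vertices being points, the condition is vacuous there. For edges, I would argue that a parallel pair $e \subset P$, $e' \subset P'$ with $e$ strictly nested in $e'$ after translation produces an endpoint of $e'$ beyond $e$; this endpoint is a vertex incident to two $2$-faces of $P'$ whose normals are orthogonal to the edge direction. Transporting the incidence to $P$ via the $2$-face bijection and exploiting the fact that matched $2$-faces are congruent (no strict nesting being allowed, as just shown), the strict edge nesting propagates to a strict planar nesting of at least one adjacent $2$-face pair, already ruled out. With Alexandrov's hypothesis now verified at all face dimensions, the previous corollary yields that $P$ and $P'$ are translates.

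The hard step, in my view, is precisely the edge bookkeeping in the last paragraph: while the geometric intuition is clear, a clean argument requires tracking which incidences of adjacent $2$-faces survive the bijection and handling degenerate configurations where several edges share a common direction. Everything else reduces to the strict monotonicity of a nondegenerate area $2$-form under strict planar inclusion, which is standard linear-algebraic bookkeeping.
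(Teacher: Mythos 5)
Your treatment of the $2$-face case mirrors the paper's own argument almost exactly: matching normal sets give pairwise parallel support planes, the Minkowskian area functional $A_Q=\sqrt{|A_Q^2|}$ is strictly monotone under strict planar inclusion on any fixed non-null plane because the restricted metric is nondegenerate, so parallel $2$-faces of equal area cannot be strictly nested, and Alexandrov's theorem (the preceding corollary in its $\mathbb{R}^{2,1}$ form) then finishes the job. This is precisely the paper's proof, and for that step your reasoning is correct.

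Where you go further is in attempting to verify the Alexandrov hypothesis also for vertices and edges. The vertex case you handle correctly (points are never strict proper subsets of points), and it is reasonable to want an explicit argument for edges, since the theorem as quoted speaks of faces of all dimensions. However, the edge argument you sketch has a genuine gap: you invoke ``the fact that matched $2$-faces are congruent (no strict nesting being allowed, as just shown).'' This does not follow. Two convex polygons lying in the same plane, with the same outward normal and equal area, can each fail to fit strictly inside a translate of the other while still not being translates of one another — a unit square and an equal-area rhombus already exhibit this. Congruence of the matched $2$-faces is essentially the conclusion the corollary is after (it is what one obtains only \emph{after} Alexandrov's theorem has been applied), so appealing to it inside the hypothesis check is circular. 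The argument also quietly relies on the normal-matching bijection preserving adjacency of $2$-faces, which is not guaranteed by equality of normal sets alone. So your instinct that the edge bookkeeping is ``the hard step'' is sound, but the proposed route through $2$-face congruence does not close it; the paper, following Alexandrov's own treatment of Minkowski uniqueness, simply verifies the $2$-face condition and leaves the lower-dimensional cases implicit.
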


\begin{proof}
    Since both sets of vectors are the same, the polyhedra share pairwise-parallel faces $(Q,Q')$. Consider the function $f(Q)=A_Q=\sqrt{|A_Q^2|}$, which is well-defined and monotonic on parallel polygons independently of their causal character. By virtue of the monotonicity of $f$, parallel faces of equal area cannot be placed strictly inside each other by a translation; Theorem \ref{Alex} then implies the result.
\end{proof}

On the other hand, we may also prove existence of polyhedra given some boundary data satisfying the closure condition, essentially repeating the proof due to Minkowski \cite{Alexandrov2005}, thus establishing both uniqueness and existence:
\begin{theorem}[Minkowski's theorem]
\label{theorem:minkowski}
Let $\{v_f,A_f\}$ be a set consisting of unit vectors $v_f$ in either $\mathbb{R}^3$ or $\mathbb{R}^{2,1}$ and positive numbers $A_f$. Suppose such vectors are not all co-planar, no vector is light-like, and the following holds
\begin{equation}
\label{alex_closure}
     \sum_f v_f A_f=0\,.
\end{equation}
Then there exists a unique convex polyhedron in $\mathbb{R}^3$ or $\mathbb{R}^{2,1}$, respectively, such that $A_f$ is the area of its face $f$ and $v_f$ points orthogonally outward to the face.
\end{theorem}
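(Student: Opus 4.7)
The uniqueness half is already established by Corollary \ref{cor2}: two convex Minkowskian polyhedra with the same face normals and areas must be translates of one another. It remains to prove existence, and my plan is to follow Minkowski's classical variational argument, adapted to Lorentzian signature via the tools developed earlier in this appendix.

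I would parametrize candidate polyhedra by their support heights: for $h = (h_f) \in \mathbb{R}_{>0}^n$, set $P(h) = \bigcap_f H_{v_f}^{h_f}$, which is convex and contains the origin. Let $A_f(h) \ge 0$ be the area of the facet of $P(h)$ whose outward normal is $v_f$, computed by the Hodge-star formulas of Section \ref{app:Minkowski} (with $A_f(h) = 0$ if that facet has collapsed). Decomposition of $P(h)$ into tetrahedra with apex at the origin gives $V(h) = \tfrac{1}{3}\sum_f h_f A_f(h)$, homogeneous of degree three in $h$, and a standard slab-perturbation argument yields $\partial V/\partial h_f = A_f(h)$. The heart of the proof is then to minimize the linear functional $\Phi(h) = \sum_f h_f A_f$, built from the prescribed positive areas, over the constraint surface $K = \{h \in \mathbb{R}_{>0}^n : V(h) = 1\}$. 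At an interior minimizer $h^\ast \in K$, the Lagrange condition $\nabla \Phi(h^\ast) = \lambda \nabla V(h^\ast)$ combined with the derivative formula forces $A_f = \lambda\, A_f(h^\ast)$ for every $f$, with $\lambda > 0$; the degree-two homogeneity of $A_f(\cdot)$ then shows that the rescaled heights $\tilde h := \sqrt{\lambda}\, h^\ast$ satisfy $A_f(\tilde h) = A_f$ exactly, so $P(\tilde h)$ is the polyhedron we seek.

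The main obstacle is the compactness argument producing the minimizer $h^\ast$: I must rule out that any minimizing sequence has some $h_f \to 0$ or some $h_f \to +\infty$. The closure condition $\sum_f v_f A_f = 0$ is precisely what makes this possible. On the one hand, it renders $\Phi$ invariant under rigid translations of $P(h)$: a shift of all supporting hyperplanes by a global vector $w$ changes each $h_f$ by $\braket{w,v_f}\|v_f\|^2$, contributing $\sum_f A_f \braket{w,v_f}\|v_f\|^2 = 0$ to $\Phi$, so $\Phi$ cannot diverge to $-\infty$ along such directions. On the other hand, combined with the non-coplanarity of the $v_f$, it bounds $h$ away from the coordinate boundary at $h_f = 0$ modulo translations, since collapsing a facet at a minimizer would allow one to decrease $\Phi$ while preserving unit volume by a small rigid perturbation of the corresponding supporting hyperplane, contradicting minimality. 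Making this rigorous requires a case analysis across the combinatorial strata of $P(h)$ in Minkowski signature, where the sign of $A_f^2$ depends on the causal character of $v_f$; the key inputs that allow the classical Euclidean argument to transfer are the strict positivity of Minkowskian tetrahedral volumes and the continuous, piecewise-polynomial dependence of $A_f(h)$ on $h$, both established in Section \ref{app:Minkowski}.
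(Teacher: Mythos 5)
Your proposal follows a variational route that is logically sound in outline, but you have swapped the roles of objective and constraint relative to the paper, and this swap is precisely what makes the compactness step hard. The paper maximizes $V(h)$ over $\{h \in \mathbb{R}_{\geq 0}^F : \sum_f h_f A_f = 1\}$: since every $A_f > 0$ this set is a compact simplex face ($0 \leq h_f \leq 1/A_f$), so a maximizer exists for free, and the only remaining subtlety is to translate the maximizing polyhedron so that it contains the origin in its interior and all $h_f^\ast > 0$, after which the Lagrange condition $\tfrac{1}{3!}\tilde A_f + \lambda A_f = 0$ applies cleanly and forces $\tilde A_f \propto A_f$. Your dual problem --- minimize $\Phi(h) = \sum_f h_f A_f$ on $\{h \in \mathbb{R}_{>0}^F : V(h) = 1\}$ --- is equivalent at a critical point, but the constraint surface is unbounded in $\mathbb{R}_{>0}^F$, and you yourself flag compactness as "the main obstacle."

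That obstacle is not actually overcome in your sketch. First, your translation-invariance argument requires $\sum_f A_f \braket{w,v_f}\norm{v_f}^2 = 0$ for all $w$, whereas the hypothesis is $\sum_f A_f v_f = 0$; these coincide only if all $\norm{v_f}^2$ share the same sign, which fails for a Minkowskian polyhedron with both space- and time-like faces. (This wrinkle ultimately traces to the $\norm{v}^2$ factor in the definition of $H_v^h$ and also lurks in the paper's own much lighter appeal to translation, but in your dual formulation it is load-bearing for ruling out escape to infinity, so you would have to settle it.) Second, the argument for excluding $h_f \to 0$ is not complete: once the $f$-th facet has collapsed, $\Phi$ is insensitive to further decreasing $h_f$, and a ``small rigid perturbation'' that both lowers $\Phi$ and preserves $V = 1$ is not constructed, only asserted. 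The cure is simply to adopt the paper's parametrization: allow $h_f \geq 0$, impose the linear constraint $\sum_f h_f A_f = 1$, and maximize $V$ instead --- compactness is then automatic, your homogeneity and rescaling observations transfer verbatim, and the uniqueness half via Corollary~\ref{cor2} is unchanged.
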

\begin{proof}
    Assume we are given $F$ such vectors and $F$ such numbers. Consider all sets $h$ containing $F$ non-negative numbers $h_f$, to be understood as distances from the origin. Equation \eqref{alex_closure} implies that the vectors $v_f$ cannot all point towards the same half-space, and thus, for every $h$, the intersection of the hyperspaces $H_{v_f}^{h_f}$ in the relevant space $\mathbb{R}^ 3$ or $\mathbb{R}^{2,1}$ defines convex polyhedra $P_h$. Let $\tilde{A}_f$ be the area associated to the hyperplane boundary of $H_{v_f}^{h_f}$, and set it to zero if that hyperplane does not define a face of $P_h$. Among all $P_h$, consider those satisfying the constraint
    \begin{equation}
    \label{constraint}
    \sum_f h_f A_f=1\,.
    \end{equation}
    We now show that there exists a polyhedron which maximises the volume under the above condition. First, note that all $h_f$ are bounded from above by $h_f\leq1/A_f$, and the constraint is a closed condition. Thus the set of admissible $h_f$ is compact. Hence the volume of $P_h$ attains a maximum in the domain of the constraint at some $P_h^*$. Equation \eqref{alex_closure} implies that if $P_h^*$ satisfies \eqref{constraint} so does a translation of itself; we may then assume that $P_h^*$ contains the origin, and therefore $h_f^*>0$. Using Lagrange multipliers, the extrema are found at
    \begin{equation}
    \begin{gathered}
    \frac{\partial}{\partial h_f}\left( V(P_f)+ \lambda \left(\sum_{f'} h_{f'} A_{f'} -1\right)\right)=0 \nonumber \\
    \Rightarrow \frac{1}{3!}\tilde{A}_f + \lambda A_f=0\,,
    \end{gathered}
    \end{equation}
     where we used the volume formula \eqref{volume}. Thus we have a polyhedron $P_h$ such that the vectors $v_f$ are orthogonal to its faces, and upon a suitable rescaling its areas are given by the $A_f$. By Corollary \ref{cor2}, this polyhedron is unique up to translations.
\end{proof}

\subsection{Rigidity of Minkowski polytopes}

We now turn to the question of whether convex polytopes in space-time are rigid. We will call such a polytope rigid if every continuous displacement of its vertices leaving the lengths of its edges invariant and preserving its combinatorics amounts to an orthogonal transformation with respect to the space-time metric, i.e. a congruence. That every convex polytope in Minkowski space of dimension $\geq 3$ is rigid, much like their Euclidean  counterparts, can be shown straightforwardly by making use of a result from \cite[Lemma 9]{Alexandrov2003}.

\begin{lemma}
	Let $P(t)$ be a smooth family of convex orientable polyhedra such that, for each $t$, each edge of $P(t)$ is non-null, its length is invariant, and each face carries a non-degenerate metric. Let $f_1,f_2$ denote the faces adjacent to an edge $e$. Denote by $n_{f_i}^e$ the unit vector which lies in $f_i$, is orthogonal to $e$, and points inside $f_i$, and by $m_{f_i}$ the outward-pointing normal unit vector to $f$. The dihedral angle (i.e. the angle between $m_{f_1}$ and $m_{f_2}$) at an edge $\theta_e(t)$ then satisfies the velocity equation
	\begin{equation}
	\label{angularvelocity}
	\frac{\diff \theta_e}{\diff t}=\sum_i\braket{\frac{\diff m_{f_i}}{\diff t}, n_{f_i}^e}\,.
	\end{equation}
\end{lemma}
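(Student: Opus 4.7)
The plan is to reduce the identity to a straightforward differentiation in the $2$-plane $\Pi_e$ orthogonal to the edge $e$, once an appropriate orthogonal decomposition of that plane has been set up from the boundary data of the two faces.

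First I would observe that all four vectors $m_{f_i}, n_{f_i}^e$ lie in $\Pi_e$: since $e$ is contained in each $f_i$, the outward normal $m_{f_i}$ is orthogonal to the edge tangent $\hat{e}$, and by definition $n_{f_i}^e \perp \hat{e}$. Moreover, within each face one has $n_{f_i}^e \subset f_i$ and hence $m_{f_i} \perp n_{f_i}^e$, so that $\{n_{f_i}^e, m_{f_i}\}$ is an orthogonal basis of $\Pi_e$ of unit-norm vectors for each $i=1,2$. Non-degeneracy of the face metric is exactly what guarantees that these bases are well-defined; the edge-length invariance guarantees that $\hat{e}$, and hence the splitting $\mathrm{span}(\hat e)\oplus\Pi_e$, varies smoothly with $t$.

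The second step is the change-of-basis between $\{n_{f_1}^e, m_{f_1}\}$ and $\{n_{f_2}^e, m_{f_2}\}$, which is by construction a rotation (or a boost, in the Lorentzian case) in $\Pi_e$ by the dihedral angle $\theta_e$. Orientation conventions—$n_{f_i}^e$ pointing \emph{into} $f_i$, $m_{f_i}$ pointing \emph{out} of the polyhedron—together with convexity fix the signs and force
\begin{equation*}
m_{f_2} = \cos\theta_e \, m_{f_1} - \sin\theta_e \, n_{f_1}^e, \qquad m_{f_1} = \cos\theta_e \, m_{f_2} - \sin\theta_e \, n_{f_2}^e,
\end{equation*}
with the natural replacements $\cos\to\cosh,\ \sin\to\sinh$ (with appropriate signs) whenever $\Pi_e$ is Lorentzian, consistent with the angle prescription of equation \eqref{eq:angles}.

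Finally I would differentiate the defining identity $\cos\theta_e = \braket{m_{f_1},m_{f_2}}$ (valid up to an overall sign set by $\norm{m_{f_i}}^2=\pm 1$) in $t$. Because $\norm{m_{f_i}}^2$ is constant one has $\braket{\dot m_{f_i}, m_{f_i}}=0$; substituting the decomposition above into the right-hand side, the $\cos\theta_e$ contributions drop out and only the components along $n_{f_i}^e$ survive, yielding $-\sin\theta_e\,\dot\theta_e = -\sin\theta_e\bigl(\braket{\dot m_{f_1}, n_{f_1}^e}+\braket{\dot m_{f_2}, n_{f_2}^e}\bigr)$, from which \eqref{angularvelocity} follows immediately. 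The main obstacle I anticipate is bookkeeping the signature-dependent signs uniformly: depending on the causal character of $\hat e$ and of $m_{f_i}$, $\Pi_e$ may be Riemannian or Lorentzian, $\norm{m_{f_i}}^2$ may be $+1$ or $-1$, and the appropriate angle functions may be hyperbolic, so the decompositions and the differentiated identity acquire various $\pm$'s. The argument still goes through, because these signs enter symmetrically on both sides and cancel consistently once one commits to the Minkowskian angle conventions set up in the subsection on angles in the Minkowski plane.
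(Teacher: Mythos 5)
The paper does not actually prove this lemma; it is stated and cited as Lemma~9 of \cite{Alexandrov2003}, to which the reader is referred. So there is no in-paper argument for you to match. Judged on its own, your route---differentiate $\cos\theta_e = \braket{m_{f_1},m_{f_2}}$ (or its hyperbolic/complexified analogue), using the expansions of $m_{f_2}$ in the frame $\{m_{f_1},n_{f_1}^e\}$ of the plane $\Pi_e$ and of $m_{f_1}$ in $\{m_{f_2},n_{f_2}^e\}$, together with $\braket{\dot m_{f_i},m_{f_i}}=0$---is the standard one, and the Euclidean/Riemannian part of your computation is complete and correct.

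Two points deserve tightening. First, the statement that the change of basis from $\{n_{f_1}^e,m_{f_1}\}$ to $\{n_{f_2}^e,m_{f_2}\}$ ``is by construction a rotation (or boost) by $\theta_e$'' is not quite accurate: your two relations $m_{f_2}=\cos\theta_e\,m_{f_1}-\sin\theta_e\,n_{f_1}^e$ and $m_{f_1}=\cos\theta_e\,m_{f_2}-\sin\theta_e\,n_{f_2}^e$ are symmetric under $1\leftrightarrow 2$ with the \emph{same} sign on the $\sin$ term, which is the hallmark of a reflection, not a rotation. The relations themselves are correct (they follow from $n_{f_1}^e$ and $n_{f_2}^e$ pointing into opposite sides relative to $\Pi_e$-orientation, forced by convexity), but the justification via ``rotation by $\theta_e$'' is a mischaracterization that you should replace by a direct orthogonality argument: $m_{f_2}-\braket{m_{f_1},m_{f_2}}m_{f_1}/\norm{m_{f_1}}^2$ is orthogonal to $m_{f_1}$ in $\Pi_e$ and hence proportional to $n_{f_1}^e$, with the sign of the proportionality constant fixed by convexity. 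Second, the Lorentzian case is the one that actually matters for the paper's Theorem~\ref{rigid}, and there you defer almost entirely to ``the signs cancel.'' That is plausible but needs to be checked in each of the possibilities: $\Pi_e$ Riemannian (edge time-like, argument identical to Euclidean); $\Pi_e$ Lorentzian with $m_{f_1},m_{f_2}$ of the same causal character (hyperbolic angle, $\cosh\theta_e = \pm\braket{m_{f_1},m_{f_2}}$ with the sign from $\norm{m_{f_i}}^2$, and a boost decomposition instead of a rotation); and $\Pi_e$ Lorentzian with $m_{f_1},m_{f_2}$ of different causal characters, where $\theta_e$ has a constant imaginary part fixed by the conventions of equation \eqref{eq:angles} and only its real part varies. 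You should at least verify that the cancellations you assert actually occur in each case, since the normalization factors $\norm{m_{f_i}}$ can be $1$ or $i$ and the $\cosh/\sinh$ decomposition coefficients carry their own sign ambiguity; the derivation I spot-checked does close, but it is not automatic, and the claim that ``these signs enter symmetrically on both sides and cancel'' is the actual content of the lemma in the Lorentzian setting, not a bookkeeping afterthought.
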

Using this angular velocity equation, we now show that polyhedral corners, i.e. non-compact polyhedrons with a single vertex, satisfy a closure condition.
\begin{lemma}
	Let $P$ be a polyhedral corner, and denote by $e$ the unit edge vectors with base point at the vertex. Then $P$ satisfies the closure condition
	\begin{equation}
	\sum_e  \epsilon(e) e	\frac{\diff \theta_e}{\diff t}=0\,,
	\end{equation}
	where $\epsilon(e)$ is the sign of the squared norm of $e$.
\end{lemma}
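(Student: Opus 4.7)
The plan is to apply the angular velocity identity proved in the preceding lemma, swap the order of summation to get a sum over faces, and then use the rigidity of each face (a consequence of the isometric deformation) to show that the contributions cancel.

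First I would substitute the angular velocity identity into each $\dot\theta_e$ and exchange the order of summation:
\begin{equation*}
    \sum_e \epsilon(e)\, e\, \frac{d\theta_e}{dt} \;=\; \sum_f \sum_{e\in\partial f} \epsilon(e)\, e\, \langle \dot m_f, n_f^e\rangle\,.
\end{equation*}
Each face $f$ of a polyhedral corner is a $2$-dimensional wedge sharing the apex, bounded by exactly two edges, so the inner sum has only two terms. Because $\langle m_f, m_f\rangle$ is constant, $\dot m_f$ must lie in the plane $\Pi_f$ of the face, which is spanned by its two edges.

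Next I would exploit the isometric deformation hypothesis. Since the edges of the polyhedron have fixed length and each face carries a non-degenerate preserved metric, the inner products $\langle e_i^{(f)}, e_j^{(f)}\rangle$ for edges within a common face are invariant under the deformation. Consequently, each face evolves, at the apex, as a rigid motion of its $2$-dimensional Minkowski plane, and $\dot m_f$ is determined by an infinitesimal Lorentz rotation of $\Pi_f$. Using the orthogonal bases $\{e^{(f)}_i, n^{e^{(f)}_i}_f\}$ of $\Pi_f$, the face-wise contribution becomes a bilinear expression in the two edges of $f$ whose coefficients are governed by this rotation.

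Finally, regrouping by edges (each $e$ is shared by exactly two faces $f_e^+, f_e^-$), the two incident rotations must agree along the direction of $e$ itself, since $e$ is a geodesic common to both faces. This compatibility produces a pairwise cancellation between the $\dot m_{f_e^\pm}$ contributions when paired with $\epsilon(e)\, e$, yielding $\sum_e \epsilon(e)\, e\, \dot\theta_e = 0$.

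The main obstacle I anticipate is the careful tracking of the Minkowski sign factors $\epsilon(e)$ and of the signature of each face: when a face or edge is time-like, orthogonality and projection formulas pick up extra signs, and the cancellation has to be re-verified in each of the mixed causal configurations. An alternative route that avoids this bookkeeping is to derive the identity as the infinitesimal limit of a Minkowski Schl\"afli identity applied to a truncation of the corner by a suitable hyperplane, and then let the truncation retract towards the apex; the apex-edge contributions would reproduce the desired sum while the truncation-polygon contributions vanish by the closure of a $2$-dimensional convex polygon.
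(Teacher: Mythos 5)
Your first step — substituting the angular-velocity identity into $\dot\theta_e$ and exchanging the order of summation to obtain $\sum_f\sum_{e\in\partial f}\epsilon(e)\,e\,\braket{\dot m_f,n_f^e}$ — coincides with the paper's opening move, and the observation that $\dot m_f$ lies in the face plane $\Pi_f$ is also correct. However, the core of your argument — the claim that the two contributions $\braket{\dot m_{f_e^+},n_{f_e^+}^e}$ and $\braket{\dot m_{f_e^-},n_{f_e^-}^e}$ ``produce a pairwise cancellation when paired with $\epsilon(e)e$'' because the rotations agree along $e$ — does not work. Those two terms do not cancel: by the angular-velocity lemma they \emph{add} to $\dot\theta_e$, and $\dot\theta_e$ is generically nonzero for a deformation preserving edge lengths. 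Moreover, if such a pairwise cancellation held for every edge it would force $\dot\theta_e=0$ for all $e$, which is precisely the \emph{rigidity} statement (Theorem \ref{rigid}) that is proved later \emph{using} this closure lemma; you would be assuming the conclusion of the downstream theorem.

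The mechanism in the paper is structurally different: each face $f$ contributes zero \emph{on its own}. One decomposes, for each term in the face sum,
\begin{equation*}
\epsilon(e)\,e\,\braket{\dot m_f,n_f^e}=\epsilon(e)\Bigl[\braket{\dot m_f,\,n_f^e\wedge e}+n_f^e\braket{\dot m_f,e}\Bigr]\,,
\end{equation*}
where the bracket between a vector and a bivector is the interior product. The second piece drops out because $\dot m_f$ is orthogonal to $m_f$ and tangent to the arc of rotation, hence normal to $e$. What remains per face is $\braket{\dot m_f,\,\sum_{e\subset f}\epsilon(e)\,n_f^e\wedge e}$, and the bivector inside vanishes: writing $n_f^e=\pm\overline{\star}\,e$ with $\overline{\star}$ the two-dimensional Hodge star induced on $\Pi_f$, one has $\epsilon(e)\,n_f^e\wedge e=\pm\epsilon(e)\braket{e,e}\,\overline{\omega}=\pm\overline{\omega}$ for each unit edge $e$, and the two edges of the face produce opposite signs, so the sum is zero. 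This is the ingredient your proposal is missing. Your alternative idea of passing through a Minkowskian Schl\"afli identity is plausible in spirit but would require its own proof of that identity in Lorentzian signature — a nontrivial input that is not established here.
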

\begin{proof}
	This follows directly from equation \eqref{angularvelocity}. We start by summing the inner product appearing in that equation over every face $f$ of $P$
	\begin{align*}
	\sum_f \sum_{e \in \partial f} \epsilon(e) e \braket{\frac{\diff m_{f}}{\diff t}, n_{f}^e}&=\sum_f \sum_{e \in \partial f}\epsilon(e) \left[\braket{\frac{\diff m_{f}}{\diff t}, n_{f}^e \wedge e} + n_f^e \braket{\frac{\diff m_{f}}{\diff t}, e} \right] \\
	&=\sum_f \braket{\frac{\diff m_{f}}{\diff t}, \sum_{e \subset f} \epsilon(e) n_{f}^e\wedge e}\,,
	\end{align*}
	where the inner product between 1- and 2-vectos is short-hand for the interior product.  In the second equality we use the fact that $\diff m_{f}/\diff t$ must be orthogonal to $m_f$ and tangent to the arc of rotation $\theta_e(t)$, and thus normal to $e$. Focusing now on a single face, note that associated to the induced metric at the face there is a Hodge star $\overline{\star}$ and a volume-form $\overline{\omega}$. Denoting by $e_1,e_2$ the two edges incident to $f$, we then have
	\begin{align*}
	\sum_{e \in \partial f} \epsilon(e) n^e_f \wedge e &= \epsilon(e_1) n^{e_1}_f\wedge e_1+ \epsilon(e_2) n^{e_2}_f\wedge e_2 \\
	&=\pm \left(\epsilon(e_1) \overline{\star} e_1 \wedge e_1- \epsilon(e_2) \overline{\star} e_2\wedge e_2\right) \\
	&=\pm \left( \epsilon(e_1)\braket{e_1,e_1}-\epsilon(e_2)\braket{e_2,e_2}\right) \overline{\omega}=0\,,\\
	\end{align*}
	and the sign indeterminacy is due to the possible orientations of $\overline{\star}$. We thus have that the left-hand side of the previous series of equations above vanishes, and interchanging the summations proves the result.
\end{proof}
We need one more ingredient for the proof of rigidity, in some sense associated to the combinatorics of polyhedral corners.
\begin{lemma}
	\label{0or4}
	Consider a family of polyhedral corners $P(t)$, as above. Label the edges of $P(t)$ according to the sign of $\epsilon(e)\frac{\diff \theta_e}{\diff t}$, leaving an edge unlabelled if the associated angular velocity vanishes. Then either there are 4 or more sign changes as one goes around the vertex, or all edges are unlabelled.
\end{lemma}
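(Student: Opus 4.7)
The strategy is to argue by contradiction against the closure identity just established. I would assume that at least one edge carries a label and that the cyclic sequence of labels around the vertex exhibits fewer than four sign changes; since the number of sign changes in any cyclic sequence of two symbols is always even, this reduces to $0$ or $2$ sign changes.

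The first step is to produce a linear functional $\ell$ on the ambient vector space that weakly separates the two label classes. Because $P(t)$ is a convex polyhedral corner, the edges at the apex lie on the boundary of a pointed convex cone and their directions define a convex spherical polygon in the link of the vertex, so they inherit an unambiguous cyclic order. For $0$ sign changes, all labelled edges carry a common label and lie within a pointed cone, and any strict supporting functional for that cone serves as $\ell$. For $2$ sign changes, the labelled edges split into two cyclically consecutive arcs of opposite label; the convex cones they generate intersect only at the apex, and a separating-hyperplane theorem delivers an $\ell$ with $\ell(e)$ matching the sign of the label of $e$, at least weakly.

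Next I would apply $\ell$ to the closure identity of the previous lemma,
\begin{equation*}
0 \;=\; \ell\!\left( \sum_e \epsilon(e)\, e\, \frac{\diff \theta_e}{\diff t} \right) \;=\; \sum_e \ell(e)\, \epsilon(e)\, \frac{\diff \theta_e}{\diff t}\,.
\end{equation*}
Each summand is non-negative by construction: on $+$-labelled edges both $\ell(e)$ and $\epsilon(e)\, \diff \theta_e/\diff t$ are non-negative, and on $-$-labelled edges both are non-positive. A sum of non-negatives equal to zero forces every term to vanish, which would impose $\ell(e) = 0$ on every labelled edge, i.e.\ every labelled edge lies inside $\ker \ell$.

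The hard part will be ruling out this degenerate hyperplane-collinearity. The standard remedy is a perturbation of $\ell$ inside the open cone of functionals compatible with the cyclic arrangement of $+$-arcs and $-$-arcs: as long as the corner itself does not collapse into $\ker \ell$ (which is excluded because the corner is genuinely three-dimensional and pointed at the apex), one can tilt $\ell$ slightly to achieve strict sign inequalities on every labelled edge without altering the signs on the unlabelled ones. With the strict version of $\ell$ in hand, at least one summand in the identity above becomes strictly positive while the rest remain non-negative, contradicting the vanishing total. The contrapositive then yields the claim.
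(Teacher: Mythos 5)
Your proof takes essentially the same route as the paper: reduce to $0$ or $2$ sign changes by parity, then rule out each case by pairing the closure identity of the preceding lemma with a hyperplane (equivalently, a linear functional $\ell$) through the apex that separates the $+$-labelled from the $-$-labelled edges. The only inefficiency is the final perturbation step: with exactly $2$ sign changes the two label classes form disjoint contiguous arcs of extreme rays in the link of the vertex, so their convex hulls are disjoint compact convex sets and a \emph{strictly} separating $\ell$ is available from the outset; then every labelled term $\ell(e)\,\epsilon(e)\,\diff\theta_e/\diff t$ is already strictly positive (unlabelled terms vanish), giving the contradiction without any perturbation. The paper's write-up is terser and tacitly uses this strict separation at the same point.
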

\begin{proof}
	Clearly there must be an even number of sign changes, because in going around a vertex one always returns to the initial sign. We therefore need to show that 0 and 2 sign changes are impossible.

	First, due to the closure condition, one cannot have 0 sign changes, as otherwise the linear combination $\sum_e  \epsilon(e) e	\frac{\diff \theta_e}{\diff t}$ would not vanish, contradicting the previous lemma.

	Secondly, suppose there are exactly 2 sign changes. Assume w.l.o.g. that $e_1,...,e_n$ are labelled with $+$, and $e_{n+1},...,e_m$ are labelled with $-$. Consider the half-space $H^+$ separating the two sets of edges and containing the ones labelled with a positive sign. Then clearly $\sum_{i=n+1}^m \epsilon(e_i) e_i	\frac{\diff \theta_{e_i}}{\diff t}$ also lies in $H^+$, and thus the full linear combination $\sum_e  \epsilon(e) e	\frac{\diff \theta_e}{\diff t}$ does too, contradicting once more the lemma above.
\end{proof}

We now state a theorem on the rigidity of convex polyhedra in Minkowski space-time.

\begin{theorem}
	\label{rigid}
	Let $P$ be a convex polytope in Minkowski space-time of dimension $d \geq 3$, defined as above. Then every smooth deformation of $P$ preserving the length of its edges and such that the metric at each face remains non-degenerate is an isometry of the Minkowski metric, i.e. a congruence of $P$. That is, space-time polyhedra are rigid.
\end{theorem}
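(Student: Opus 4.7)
The plan is to adapt the classical Cauchy rigidity argument to the Lorentzian setting, exploiting the angular-velocity formula \eqref{angularvelocity}, the polyhedral-corner closure condition, and in particular Lemma \ref{0or4}. I will first treat the three-dimensional case and then bootstrap via induction on dimension.

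For $d=3$: suppose for contradiction that $P(t)$ is a smooth deformation preserving edge lengths and face non-degeneracy which is not, to first order at some $t_0$, a congruence of $P(t_0)$. A convex Minkowskian polyhedron is determined up to isometry by its combinatorics, edge lengths, and dihedral angles, so if all $\dot\theta_e$ vanished at $t_0$ the motion would be tangentially a Minkowski congruence. Hence at least one edge has $\dot\theta_e(t_0) \neq 0$, and we may label each such edge by the sign of $\epsilon(e)\dot\theta_e(t_0) \in \{+,-\}$, leaving other edges unlabeled. Let $V^\ast$ denote the set of vertices incident to at least one labeled edge. By Lemma \ref{0or4}, for every $v \in V^\ast$ the cyclic sequence of labels around $v$ exhibits at least four sign changes, so $\sum_v N_v \geq 4|V^\ast|$, where $N_v$ is the local count of sign changes.

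The contradiction is extracted by the classical Cauchy combinatorial argument applied to the boundary $2$-sphere of $P$: one re-counts sign changes by circumambulating faces (a $k$-gonal face admits at most $2\lfloor k/2\rfloor$ sign changes in any cyclic labeling of its boundary), and combining the resulting upper bound with Euler's formula $V-E+F=2$ gives $\sum_v N_v < 4|V^\ast|$, contradicting the previous inequality. Hence no edges are labeled, every $\dot\theta_e$ vanishes, and the deformation is tangent to a smooth family of congruences. Integrating in $t$ yields the 3-dimensional result.

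For $d\geq 4$: proceed by induction on dimension. Each $(d-1)$-facet $F$ of $P$ inherits a smooth deformation preserving edge lengths and face non-degeneracy, and by the inductive hypothesis evolves by a smooth family $g_F(t)$ of isometries of the ambient Minkowski subspace. Adjacent facets share a $(d-2)$-face, and the two induced motions must agree on this face; the residual ambiguity is a reflection across the shared face, excluded by continuity from $g_F(0) = \mathrm{id}$. Propagating through the (connected) facet-adjacency graph of the boundary, all $g_F(t)$ coincide with a single $g(t) \in O(d-1,1)$, yielding $P(t) = g(t)P(0)$. The main obstacle I anticipate is the combinatorial step in the 3-dimensional case: Cauchy's double counting must be revisited to accommodate the signs $\epsilon(e)$ introduced by the Lorentzian metric, and one must be careful that the argument applies to the sub-complex of labeled edges rather than to the whole polyhedron (for instance by completing it to a triangulation or invoking a sign-preserving reduction). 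The induction step is by comparison conceptually straightforward but requires care in ruling out the reflection ambiguity via orientability and continuity.
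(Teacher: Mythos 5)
Your $d=3$ argument mirrors the paper's exactly: the edge-labeling by $\operatorname{sgn}\bigl(\epsilon(e)\dot\theta_e\bigr)$, Lemma~\ref{0or4} giving four or more sign changes at each labeled vertex, and Cauchy's combinatorial lemma for planar graphs deriving the contradiction. You spell out the face double-count and Euler's-formula step, which the paper simply cites to Alexandrov, but it is the same proof.

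The $d\geq 4$ step is where you diverge from the paper, and it is where the argument breaks down. The paper reduces a $d$-polytope to a $(d-1)$-dimensional spherical polytope by intersecting with a small sphere around a vertex (the vertex link), and invokes the rigidity of that lower-dimensional object; the dihedral angles at the $(d-2)$-ridges through that vertex are precisely the edge lengths of the vertex link, so vertex-link rigidity directly freezes them. Your facet-induction does not capture this. If $g_{F_1}(t)$ and $g_{F_2}(t)$ are ambient isometries moving adjacent facets rigidly and agreeing pointwise on a shared $(d-2)$-ridge $R$, then $g_{F_1}(t)^{-1}g_{F_2}(t)$ fixes $R$ pointwise. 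In $\mathbb{R}^{d-1,1}$ the pointwise stabilizer of a codimension-two affine subspace is the full one-parameter group $O(2)$ or $O(1,1)$ acting on the orthogonal $2$-plane, not the discrete $\mathbb{Z}_2$ reflection you claim. That continuous parameter is exactly the dihedral angle at $R$, and nothing in your induction prevents it from drifting with $t$ while every facet still moves by its own ambient isometry. This is precisely the "folding along ridges'' degree of freedom that Cauchy's argument (via vertex links) is designed to kill; asserting it is only a reflection removed by continuity is where the proof fails. To repair the induction you would need a separate argument that the dihedral angles at the $(d-2)$-faces are preserved, which brings you back to considering the vertex links — i.e., the paper's route.
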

\begin{proof}
	If we prove this statement first for 3 dimensions, since higher dimensional objects are characteristically more constrained as noted in \cite{Alexandrov2005}.

	Consider a smooth family $P(t)$ of convex 3 dimensional polyhedra containing $P$, and label the polyhedron edges according to the rules of Lemma \ref{0or4}. By the same lemma, at each vertex there can be either no labels at all or at least 4 sign changes. By Cauchy's combinatorial lemma \cite[Section 2.1]{Alexandrov2005}, in a planar graph where edges are either not labelled or labelled with a sign there cannot be at least 4 sign changes around a vertex. All convex polyhedra induce graphs through its vertices and edges which can be embedded on an Euclidean  sphere, and thus those graphs are planar. It thus follows that no edge of $P$ can be labelled, and therefore all angular velocities vanish. This is enough to establish rigidity, since then every smooth deformation must preserve the dihedral angles.

	For a given 4 dimensional polytope, consider its intersection with an Euclidean  3-sphere around one of its vertices. The resulting intersection is a 3 dimensional convex polyhedron, and it is rigid from the discussion above. Thus the polytope itself is rigid, and the argument can be extended to higher dimensions.
\end{proof}

\printbibliography

\end{document}